\pgfplotsset{compat=1.16}
\tikzstyle{state}+=[minimum size = 6mm, inner sep=0,outer sep=1]
\colorlet{disabled}{lightgray}
\tikzstyle{state}=[draw,rectangle,inner sep=5pt,rounded corners=2pt]
\tikzstyle{action}=[font=\small,inner sep=0pt,outer sep=3pt]
\tikzstyle{actionnode}=[circle,draw=black,fill=black,minimum size=1mm,inner sep=0,outer sep=0]
\tikzstyle{actionedge}=[draw,-]
\tikzstyle{prob}=[font=\scriptsize,inner sep=0pt,outer sep=1pt]
\tikzstyle{probedge}=[draw,->]
\tikzstyle{directedge}=[draw,->]
\tikzset{chainarrow/.tip={Stealth[length=3pt]}}
\tikzset{>=chainarrow}
\newtheorem{assumption}{Assumption}
\DeclarePairedDelimiter{\delimabs}{\lvert}{\rvert}
\DeclarePairedDelimiter{\delimcardinality}{\lvert}{\rvert}
\DeclarePairedDelimiter{\delimnorm}{\lVert}{\rVert}
\NewDocumentCommand{\abs}{sm}{\IfBooleanTF{#1}{\delimabs*{#2}}{\delimabs{#2}}}
\NewDocumentCommand{\cardinality}{sm}{\IfBooleanTF{#1}{\delimcardinality*{#2}}{\delimcardinality{#2}}}
\NewDocumentCommand{\norm}{sm}{\IfBooleanTF{#1}{\delimnorm*{#2}}{\delimnorm{#2}}}
\NewDocumentCommand{\powerset}{r()}{2^{#1}}
\newcommand{\indicator}[1]{\mathbbm{1}_{#1}}
\DeclareMathOperator{\lAnd}{\bigwedge}
\DeclareMathOperator{\lOr}{\bigvee}
\newcommand{\unionSym}{\cup}
\newcommand{\unionBin}{\mathbin{\unionSym}}
\newcommand{\intersectionSym}{\cap}
\newcommand{\intersectionBin}{\mathbin{\intersectionSym}}
\newcommand{\UnionSym}{\bigcup}
\newcommand{\union}{\unionBin}
\newcommand{\intersection}{\intersectionBin}
\newcommand{\Union}{\UnionSym}
\newcommand{\Naturals}{\mathbb{N}}
\newcommand{\Integers}{\mathbb{Z}}
\newcommand{\Rationals}{\mathbb{Q}}
\newcommand{\Reals}{\mathbb{R}}
\newcommand{\RealsNonneg}{\mathbb{R}_{\geq 0}}
\DeclareMathOperator{\support}{supp}
\newcommand{\distribution}{d}
\NewDocumentCommand{\Distributions}{d()}{\IfNoValueTF{#1}{\mathcal{D}}{\mathcal{D}(#1)}}
\NewDocumentCommand{\Measures}{d()}{\IfNoValueTF{#1}{\Pi}{\Pi(#1)}}
\NewDocumentCommand{\integral}{d<> m m}{\IfNoValueTF{#1}{\int #2\,d#3}{\int_{#1} #2\,d#3}}
\NewDocumentCommand{\Expectation}{s d[]}{\IfNoValueTF{#2}{\mathbb{E}}{\mathbb{E}\IfBooleanTF{#1}{\left[#2\right]}{[#2]}}}
\NewDocumentCommand{\Probability}{s d[]}{\mathop{\mathrm{Pr}}\IfValueT{#2}{\IfBooleanTF{#1}{\left[#2\right]}{[#2]}}}
\newcommand{\MC}{\mathsf{M}}
\newcommand{\MDP}{\mathcal{M}}
\newcommand{\SG}{\mathsf{G}}
\newcommand{\States}{S}
\newcommand{\StatesMax}{S_{\max}}
\newcommand{\StatesMin}{S_{\min}}
\newcommand{\initialstate}{\hat{s}}
\newcommand{\Actions}{A}
\NewDocumentCommand{\stateactions}{r()}{{\Actions}(#1)}
\NewDocumentCommand{\mctransitions}{d()}{\IfNoValueTF{#1}{\delta}{\delta(#1)}}
\NewDocumentCommand{\mdptransitions}{d()}{\IfNoValueTF{#1}{\Delta}{\Delta(#1)}}
\NewDocumentCommand{\sgtransitions}{d()}{\IfNoValueTF{#1}{\Delta}{\Delta(#1)}}
\NewDocumentCommand{\reward}{d()}{\IfNoValueTF{#1}{{r}}{{r}(#1)}}
\newcommand{\infinitepath}{\rho}
\newcommand{\finitepath}{\varrho}
\NewDocumentCommand{\Infinitepaths}{d<>}{\IfNoValueTF{#1}{\mathsf{Paths}}{\mathsf{Paths}_{#1}}}
\NewDocumentCommand{\Finitepaths}{d<>}{\IfNoValueTF{#1}{\mathsf{FPaths}}{\mathsf{FPaths}_{#1}}}
\newcommand{\strategy}{\pi}
\newcommand{\stratmax}{\sigma}
\newcommand{\stratmin}{\tau}
\NewDocumentCommand{\Strategies}{d<>}{\IfNoValueTF{#1}{\Pi}{\Pi_{#1}}}
\NewDocumentCommand{\StrategiesM}{d<>}{\IfNoValueTF{#1}{\Pi}{\Pi_{#1}}^{\mathsf{M}}}
\NewDocumentCommand{\StrategiesMD}{d<>}{\IfNoValueTF{#1}{\Pi}{\Pi_{#1}}^{\mathsf{MD}}}
\DeclareMathOperator{\SccsOp}{SCC}
\DeclareMathOperator{\BsccsOp}{BSCC}
\DeclareMathOperator{\EcsOp}{EC}
\DeclareMathOperator{\MecsOp}{MEC}
\NewDocumentCommand{\Sccs}{r()}{\SccsOp(#1)}
\NewDocumentCommand{\Bsccs}{r()}{\BsccsOp(#1)}
\NewDocumentCommand{\Ecs}{d()}{\IfNoValueTF{#1}{\EcsOp}{\EcsOp(#1)}}
\NewDocumentCommand{\Mecs}{d()}{\IfNoValueTF{#1}{\MecsOp}{\MecsOp(#1)}}
\NewDocumentCommand{\ProbabilityMC}{s r<> d[]}{\mathsf{Pr}_{#2}\IfNoValueF{#3}{\IfBooleanTF{#1}{\!\left[#3\right]\!}{[#3]}}}
\NewDocumentCommand{\ProbabilityMDP}{s r<> r<> d[]}{\mathsf{Pr}_{#2}^{#3}\IfNoValueF{#4}{\IfBooleanTF{#1}{\!\left[#4\right]\!}{[#4]}}}
\NewDocumentCommand{\ProbabilitySG}{s r<> r<> d[]}{\mathsf{Pr}_{#2}^{#3}\IfNoValueF{#4}{\IfBooleanTF{#1}{\!\left[#4\right]\!}{[#4]}}}
\NewDocumentCommand{\ProbabilityMDPmax}{s r<> d[]}{\mathsf{Pr}_{#2}^{\max}\IfNoValueF{#3}{\IfBooleanTF{#1}{\!\left[#3\right]\!}{[#3]}}}
\NewDocumentCommand{\ProbabilityMDPsup}{s r<> d[]}{\mathsf{Pr}_{#2}^{\sup}\IfNoValueF{#3}{\IfBooleanTF{#1}{\!\left[#3\right]\!}{[#3]}}}
\NewDocumentCommand{\ExpectationMC}{s r<> d[]}{\mathbb{E}_{#2}\IfValueT{#3}{\IfBooleanTF{#1}{\!\left[#3\right]\!}{[#3]}}}
\NewDocumentCommand{\ExpectationMDP}{s r<> r<> d[]}{\mathbb{E}_{#2}^{#3}\IfValueT{#4}{\IfBooleanTF{#1}{\!\left[#4\right]\!}{[#4]}}}
\NewDocumentCommand{\ExpectationSG} {s r<> r<> d[]}{\mathbb{E}_{#2}^{#3}\IfValueT{#4}{\IfBooleanTF{#1}{\!\left[#4\right]\!}{[#4]}}}
\NewDocumentCommand{\ExpectedSum}{m m}{#1\langle#2\rangle}
\NewDocumentCommand{\ExpectedSumSG}{m m m m}{\ExpectedSum{#1(#2,#3)}{#4}}
\newcommand{\reach}{\lozenge}
\newcommand{\Target}{T}
\DeclareMathOperator{\totalreward}{TR}
\newcommand{\val}{\mathsf{Val}}
\DeclareMathOperator{\Variance}{Var}
\DeclareMathOperator{\opt}{opt}
\DeclareMathOperator{\ERisk}{ERisk}
\DeclareMathOperator{\Utility}{NegUtil}
\NewDocumentCommand{\ERiskSG}{r<> r<> r[]}{\ERisk_{#1}^{#2}(#3)}
\title{Entropic Risk for Turn-Based Stochastic Games}
\author{Christel Baier}{Technische Universit\"at Dresden, Germany}{christel.baier@tu-dresden.de}{0000-0002-5321-9343}{}%TODO mandatory, please use full name; only 1 author per \author macro; first two parameters are mandatory, other parameters can be empty. Please provide at least the name of the affiliation and the country. The full address is optional. Use additional curly braces to indicate the correct name splitting when the last name consists of multiple name parts.
\author{Krishnendu Chatterjee}{Institute of Science and Technology Austria (ISTA)}{Krishnendu.Chatterjee@ist.ac.at}{0000-0002-4561-241X}{}
\author{Tobias Meggendorfer}{Institute of Science and Technology Austria (ISTA); Technische Universit\"at M\"unchen, Germany}{tobias@meggendorfer.de}{0000-0002-1712-2165}{}
\author{Jakob Piribauer}{Technische Universit\"at Dresden, Germany; Technische Universit\"at M\"unchen, Germany}{jakob.piribauer@tu-dresden.de}{0000-0003-4829-0476}{}
\authorrunning{C. Baier, K. Chatterjee, T. Meggendorfer, J. Piribauer} %TODO mandatory. First: Use abbreviated first/middle names. Second (only in severe cases): Use first author plus 'et al.'
\keywords{Stochastic games, risk-aware verification}
\begin{document}

\maketitle

\begin{abstract}
\emph{Entropic risk (ERisk)} is an established risk measure in finance, quantifying risk by an exponential re-weighting of rewards.
We study  ERisk for the first time in the context of turn-based stochastic games with the total reward objective.  This gives rise  to an objective function that demands the control of systems in a risk-averse manner.
We show that the resulting games are determined and, in particular, admit optimal memoryless deterministic strategies.
This contrasts risk measures that previously have been considered in the special case of Markov decision processes and that require randomization and/or memory.
We provide several results on the decidability and the computational complexity of the threshold problem, i.e.\ whether the optimal value of ERisk exceeds a given threshold.
In the most general case, the problem is decidable subject to Shanuel's conjecture.
If all inputs are rational, the resulting threshold problem can be solved using algebraic numbers, leading to decidability via a polynomial-time reduction to the existential theory of the reals. 
Further restrictions on the encoding of the input allow the solution of the threshold problem in $\mathsf{NP}\cap\mathsf{coNP}$.
Finally, an approximation algorithm for the optimal value of ERisk is provided.
\end{abstract}

\section{Introduction}

\noindent
\textbf{Stochastic Models.}
Formal analysis of stochastic models is ubiquitous across disciplines of science, such as computer science \cite{DBLP:books/daglib/0020348}, biology \cite{paulsson2004summing}, epidemiology \cite{G_mez_2010}, and chemistry \cite{gillespie1976general}, to name a few.
In computer science, a fundamental stochastic model are Markov decision processes (MDPs) \cite{DBLP:books/wi/Puterman94}, which extend purely stochastic Markov chains (MCs) with non-determinism to represent an agent interacting with a stochastic environment.
Stochastic games (SGs) \cite{shapley1953stochastic,condon1990algorithms,condon1992complexity} in turn generalize MDPs by introducing an adversary, modelling the case where two agents engage in adversarial interaction in the presence of a stochastic environment.
Notably, SGs can also be used to conservatively model MDPs where transition probabilities are not known precisely \cite{DBLP:conf/fossacs/ChatterjeeSH08,DBLP:conf/cdc/WeiningerMK19}.
See also \cite{white1993survey,white1985real,DBLP:books/wi/Puterman94} and \cite{DBLP:journals/jcss/ChatterjeeH12,filar2012competitive} for further applications of MDPs and SGs.

\noindent
\textbf{Strategies and Objectives.}
In MDPs and SGs, the recipes to resolve choices are called strategies.
The objective of the agent is to optimize a payoff function against all possible strategies of the adversary.
One of the most fundamental problems studied in the context of MDPs and SGs is the optimization of total reward (and the related \emph{stochastic shortest path} problem \cite{DBLP:journals/mor/BertsekasT91}).
Here, every state (or, equivalently, transition) of the stochastic model is assigned a cost or reward and the payoff of a trajectory is the total sum of rewards appearing along the path.
MDPs and SGs with total reward objectives provide an appropriate model to study a wide range of applications, such as traffic optimization \cite{fu1998expected}, verification of stochastic systems \cite{DBLP:conf/sfm/ForejtKNP11,DBLP:conf/vmcai/RandourRS15}, or navigation / probabilistic planning \cite{DBLP:conf/atal/Teichteil-KonigsbuchKI10}.

\noindent
\textbf{Risk-Ignorance of Expectation.}
Typically, the expectation of the obtained total reward is optimized.
However, the expectation measure is ignorant towards aspects of risk; an expectation maximizing agent accepts a one-in-a-million chance of extremely high rewards over a slightly worse, but guaranteed outcome.
Such a behaviour might be undesirable in a lot of situations:
Consider a one-shot lottery where with a chance of $10^{-6}$ we win $2 \cdot 10^6$ times our stake and otherwise lose everything -- a two-times increase in expectation.
The optimal strategy w.r.t.\ expectation would bet all available assets, ending up broke in nearly all outcomes.

\noindent
\textbf{Risk-Aware Alternatives.}
To address this issue, \emph{risk-aware} objectives create incentives to prefer slightly smaller performance in terms of expectation in exchange for a more \enquote{stable} behaviour.
To this end, several variants have been studied in the verification literature, such as
	(a)~variance-penalized expected payoff~\cite{DBLP:conf/icalp/PiribauerSB22,DBLP:journals/mor/FilarKL89} that combines the expected value with a penalty for the variance of the resulting probability distribution;
	(b)~trade-off of the expectation and variance for various notions of variance~\cite{DBLP:conf/icml/MannorT11,DBLP:journals/jcss/BrazdilCFK17};
	(c)~quantiles and conditional value-at-risk (CVaR)~\cite{DBLP:conf/vmcai/RandourRS15,DBLP:conf/aaai/Meggendorfer22,DBLP:conf/lics/KretinskyM18}; to name a few.

\noindent
\textbf{Drawbacks.}
The current approaches suffer from the following three drawbacks:
\begin{enumerate}
	\item
	The above studies focus on the second moment (variance) along with the first moment (mean), but do not incorporate other moments of the payoff distribution.

	\item
	All approaches are studied only for MDPs; none of them have been extended to SGs.

	\item
	Even in MDPs,  the above problems require complicated strategies.
	For example,  trade-offs between expectation and variance require memory and randomization \cite{DBLP:journals/jcss/BrazdilCFK17,DBLP:conf/icml/MannorT11}, while optimizing variance-penalized expected payoffs, quantiles, or the CVaR of the total reward require exponential memory \cite{DBLP:conf/icalp/PiribauerSB22,DBLP:conf/icalp/HaaseK15,DBLP:conf/icalp/PiribauerB20,DBLP:conf/aaai/Meggendorfer22}.
\end{enumerate}

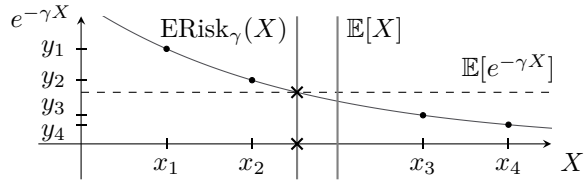
\begin{figure}[t]
	\centering
	\pgfplotsset{ticks=none}
	\begin{tikzpicture}[auto]
		\tikzset{
			xtick/.style={fill,rectangle,minimum width=0.75pt,minimum height=4pt,inner sep=0pt},
			ytick/.style={fill,rectangle,minimum width=4pt,minimum height=0.75pt,inner sep=0pt}
		}
		\begin{axis}[
			axis lines = middle,
			xlabel={$X$}, x label style={anchor=north west},
			ylabel={$e^{-\gamma X}$}, y label style={anchor=east},
			ymin=-.25, ymax=0.9,
			xmin=-.5, xmax=5.5,
			height=3.75cm,width=0.6\columnwidth
		]
			\addplot [domain=0:6, samples=300, color=darkgray, smooth] {exp(1)^(-.4*x)};
			\addplot [domain=0.6, color=black, samples at={1,2,4,5}, only marks,mark=*,mark size=1pt] expression {exp(1)^(-.4*x)};

			\draw[dashed] ({axis cs:0,0.364}) -- ({axis cs:6,0.364});
			\draw[solid,gray,thick] ({axis cs:3,0}|-{rel axis cs:0,0}) -- ({axis cs:3,0}|-{rel axis cs:0,1});
			\draw[solid,gray,thick] ({axis cs:2.5265,0}|-{rel axis cs:0,0}) -- ({axis cs:2.5265,0}|-{rel axis cs:0,1});

			\addplot [only marks,mark=x,mark size=3pt,thick] coordinates {(2.5265, 0) (2.5265, {exp(1)^(-.4*2.5265)})};

			\node[label={-90:{$x_1$}},xtick] at (axis cs:1,0) {};
			\node[label={180:{$y_1$}},ytick] at (axis cs:0,{exp(1)^(-.4*1)}) {};
			
			\node[label={-90:{$x_2$}},xtick] at (axis cs:2,0) {};
			\node[label={180:{$y_2$}},ytick] at (axis cs:0,{exp(1)^(-.4*2)}) {};
			
			\node[label={-90:{$x_3$}},xtick] at (axis cs:4,0) {};
			\node[label={[anchor=south east,outer ysep=0pt,inner ysep=0pt]180:{$y_3$}},ytick] at (axis cs:0,{exp(1)^(-.4*4)}) {};

			\node[label={-90:{$x_4$}},xtick] at (axis cs:5,0) {};
			\node[label={[anchor=north east,outer ysep=0pt,inner ysep=0pt]180:{$y_4$}},ytick] at (axis cs:0,{exp(1)^(-.4*5)}) {};

			\node[anchor=west] at (axis cs:3,0.8)   (E) {$\Expectation[X]$};
			\node[anchor=south] at (axis cs:5,.364)   (E) {$\Expectation[e^{-\gamma X}]$};
			\node[anchor=east] at (axis cs:2.5265,0.8) (F) {$\ERisk_{\gamma}(X)$};
		\end{axis}
	\end{tikzpicture}
	\caption{
		Illustration of the entropic risk measure.
		The random variable $X$ takes values $x_1$ to $x_4$ uniformly with probability $\tfrac{1}{4}$ each.
		Expectation considers the average of $x_i$, while entropic risk yields the (normalized logarithm of the) average of $y_i = e^{-\gamma x_i}$.
	}
	\label{fig:illustration_ERisk}
\end{figure}

\noindent
\textbf{Entropic Risk.}
The notion of entropic risk \cite{follmer2002convex} has been widely studied in finance and operation research, see e.g.~\cite{follmer2011entropic,brandtner2018entropic}.
Informally, instead of weighing each outcome uniformly and then aggregating it (as in the case for regular expectation), entropic risk re-weighs outcomes by an exponential function, then computes the expectation, and finally re-normalizes the value.
We illustrate this in \cref{fig:illustration_ERisk}.
The exact definition of entropic risk is introduced later on.

\noindent
\textbf{Advantages.}
Aside from satisfying many desirable properties of risk measures established in finance, entropic risk brings several crucial advantages in our specific setting, of which we list a few:
Compared to expectation, \enquote{bad} outcomes are penalized more than \enquote{good} outcomes add value.
Thus, an agent optimizing entropic risk seeks to reduce the chances of particularly bad outcomes while  also being interested in a good overall performance.
In contrast to variance minimization, it  is beneficial to increase the probability of extremely good outcomes (which would increase variance).
Moreover, the entropic risk  incorporates \emph{all} moments of the distribution.
In particular, even if the expectation is infinite, entropic risk still provides meaningful values (opposed to both expectation and variance).
Note that the expected total reward objective is often addressed under additional assumptions excluding this case \cite{DBLP:books/lib/BertsekasT96,DBLP:conf/sfm/ForejtKNP11}.
Additionally, entropic risk is a \emph{time-consistent} risk measure.
In our situation, this means that the risk evaluation at a state is the same  for \emph{any history}.
This is in stark contrast to, e.g., quantile and CVaR optimal strategies, which after a series of unfortunate events start behaving recklessly (e.g.\ expectation optimal).
Due to these advantages, ERisk has already been studied in the context of MDPs \cite{howard1972risk,bauerle2014more}.
However, to the best of our knowledge, neither the arising computational problems nor the more general setting of SGs have been addressed.

\subsection{Our results}
In this work we consider the notion of entropic risk in the context of SGs as well as the special cases of MCs and MDPs. For an overview of our complexity results, see \cref{tbl:overview}.
\begin{enumerate}
	\item \emph{Determinacy and Strategy Complexity.}
	We establish several basic results, in particular that SGs with the entropic risk objective are determined and that pure memoryless optimal strategies exist for both players.
	This stands in contrast to other notions of risk, where even in MDPs strategies require memory and/or randomization.

	\item \emph{Exact Computation.}
	When allowing Euler's number $e$ as the basis of exponentiation, the threshold problem whether the optimal entropic risk lies above a given bound is decidable subject to Shanuel's conjecture.
	If the basis of exponentiation and all other numbers in the input are rational, then all numbers resulting from the involved exponentiation are shown to be algebraic.
	We obtain a reduction to the existential theory of the reals and thus a PSPACE upper bound in this case.
	
	Furthermore, we identify a notion of \emph{small algebraic instance} in which all occurring numbers are not only algebraic, but have a  small representation and are contained in an algebraic extensions of $\mathbb{Q}$ of low degree.
	The threshold problem for small algebraic instances of MCs and MDPs can efficiently be solved by explicit computations in an algebraic  extension of $\mathbb{Q}$.
	We obtain polynomial-time algorithms for MCs and MDPs, and conclude that the threshold problem lies in NP $\intersection$ co-NP for SGs in this case.
	For small algebraic instances, we furthermore show that an explicit closed form of the optimal value can be computed (a)  in polynomial time  for MCs; and consequently (b)  in polynomial space for SGs.

	\item \emph{Approximate Computation.}
	We provide an effective way to compute an approximation, i.e.\ determine the optimal entropic risk up to a given precision of $\varepsilon > 0$.
	To this end, we show that in the general case, by considering enough bits of arising irrational numbers, we can bound the incurred error.
	In MDPs and MCs, the optimal value can be approximated in time polynomial in the size of the model, in $-\log(\varepsilon)$, and in the magnitude of the rewards. For SGs, this implies the existence of a polynomial-space approximation algorithm.
\end{enumerate}

\begin{table*}
	\caption{Overview of the decidability and complexity results for SGs, MDPs and MCs.}
	\label{tbl:overview}
	\scalebox{0.84}{
	\begin{tabular}{c | c | c | c | c |   c }
		\toprule
		                   \multicolumn{1}{c}{}                    &                                                                                                               \multicolumn{3}{c}{threshold problem}                                                                                                                &                                               \multicolumn{2}{c}{optimal value}                                                \\
		\cmidrule(lr){2-4} \cmidrule(lr){5-6}
	\multicolumn{1}{c}{} & \multicolumn{1}{c}{ general }                                                                 & \multicolumn{1}{c}{algebraic}                                                                          & \multicolumn{1}{c}{ small algebraic}                      & \multicolumn{1}{c}{computation for}                       & \multicolumn{1}{c}{approximation with small } \\
		                  \multicolumn{1}{c}{ }                    & \multicolumn{1}{c}{ instances  }                                                              & \multicolumn{1}{c}{  instances  }                                                                      & \multicolumn{1}{c}{   instances   }                       & \multicolumn{1}{c}{ small algebraic  }           & \multicolumn{1}{c}{ rewards and risk  }        \\
		                  \multicolumn{1}{c}{ }                    & \multicolumn{1}{c}{ (Thm. \ref{thm:schanuel})}                                                    & \multicolumn{1}{c}{ (Thm. \ref{thm:generalalgebraic}) }                                                    & \multicolumn{1}{c}{ (Thm. \ref{thm:smallalgebraicthreshold}) } & \multicolumn{1}{c}{instances (Thm. \ref{thm:algebraic_optimal_value})} & \multicolumn{1}{c}{aversion factor (Thm. \ref{thm:approximation})}                    \\
		\midrule
		                  \rule{0pt}{2.6ex}
	
	SGs                   & \multirow{3}{2cm}[0.5ex]{ \centering decidable subject to Shanuel's conjecture} & \multirow{3}{2cm}[-0.3ex]{\centering in PSPACE (in $\exists \Reals$)} & in NP $\cap$ {coNP}                                       & \multirow{2}{3cm}[0pt]{\centering in polynomial space}         & in polynomial space                                                \\
		\cline{4-4}\cline{6-6}
		                 \rule{0pt}{2.6ex}

	MDPs                  &                                                                                               &                                                                                                        & \multirow{2}{2cm}[-2pt]{\centering in PTIME}                    &                                                           & \multirow{2}{3cm}[-2pt]{\centering in polynomial time}                   \\
		\cline{5-5}
		                  \rule{0pt}{2.6ex}
	
	MCs                   &                                                                                               &                                                                                                        &                                                           & in polynomial time                                        &                                                                    \\
		\bottomrule
	\end{tabular}
	}
\end{table*}

\subsection{Related Work}
The entropic risk objective has been studied before in MDPs:
An early formulation  can be found in \cite{howard1972risk} under the name \emph{risk-sensitive MDPs} focusing on the finite-horizon setting.
The paper \cite{jaquette1976utility} considers an exponential utility function applied to discounted rewards and optimal strategies are shown to exist, but not to be memoryless in general.
In \cite{di1999risk}, the entropic risk objective is considered for MDPs with a general Borel state space and in \cite{bauerle2014more} a generalization of this objective is studied on such MDPs.
To the best of our knowledge, however, all previous work in the context of MDPs focuses on optimality equations and general convergence results of value iteration, while the resulting algorithmic problems for finite-state MDPs have not been investigated.
Furthermore, we are not aware of work on the entropic risk objective in SGs.

For other objectives capturing  risk-aversion, algorithmic problems have been analyzed on finite-state MDPs:
Variance-penalized expectation has been studied for finite-horizon MDPs with terminal rewards in \cite{collins1997finite} and for infinite-horizon MDPs with discounted rewards and mean payoffs \cite{DBLP:journals/mor/FilarKL89},  and  total rewards \cite{DBLP:conf/icalp/PiribauerSB22}.
For total rewards,  optimal strategies require exponential memory and  the threshold problem is in NEXPTIME and  EXPTIME-hard  \cite{DBLP:conf/icalp/PiribauerSB22}.

In \cite{DBLP:conf/icml/MannorT11}, the optimization of expected accumulated rewards under constraints on the variance are studied for finite-horizon MDPs.
Possible tradeoffs between expected value and variance of mean payoffs  and other notions of variability have been studied in~\cite{DBLP:journals/jcss/BrazdilCFK17}.

To control the chance of bad outcomes, the problem to maximize or minimize the probability that the accumulated weight lies below a given bound $w$ has been addressed in MDPs \cite{DBLP:conf/icalp/HaaseK15,DBLP:conf/lics/HaaseKL17}.
Similarly, quantile queries ask for the minimal weight $w$ such that the weight of a path stays below $w$ with probability at least $p$ for the given value $p$ under some or all schedulers \cite{DBLP:conf/fossacs/UmmelsB13,DBLP:journals/fmsd/RandourRS17}.
Both of these problems have been addressed for MDPs with non-negative weights and are solvable in exponential time in this setting \cite{DBLP:conf/fossacs/UmmelsB13,DBLP:conf/icalp/HaaseK15}.
Optimal strategies require exponential memory and the decision version of these problems is PSPACE-hard \cite{DBLP:conf/icalp/HaaseK15}.

The conditional value-at-risk (CVaR), a prominent risk-measure, has been investigated for mean payoff and weighted reachability in MDPs in \cite{DBLP:conf/lics/KretinskyM18} as well as for total rewards in MDPs \cite{DBLP:conf/icalp/PiribauerB20,DBLP:conf/aaai/Meggendorfer22}.
The optimal CVaR of the total reward in MDPs with non-negative weights can be computed in exponential time and optimal strategies require exponential memory \cite{DBLP:conf/icalp/PiribauerB20,DBLP:conf/aaai/Meggendorfer22}.
The threshold problem for optimal CVaR of total reward in MDPs with integer weights is at least as hard as the Positivity-problem for linear recurrence sequences, a well-known problem in analytic number theory whose decidability status is,  since many decades, open \cite{DBLP:conf/icalp/PiribauerB20}.

For all these objectives capturing risk-aversion in some sense, we are not aware of any work addressing the resulting algorithmic problems on SGs.

\section{Preliminaries}

In this section, we  recall the basics of (turn-based) SGs and relevant objectives.
For further details, see, e.g., \cite{DBLP:books/wi/Puterman94,DBLP:books/daglib/0020348,DBLP:conf/sfm/ForejtKNP11,filar2012competitive}.
We assume familiarity with basic notions of probability theory (see, e.g., \cite{billingsley2008probability}).
We write $\Distributions(X)$ to denote the set of all \emph{probability distributions} over a countable set $X$, i.e.\ mappings $\distribution : X \to [0, 1]$ such that $\sum_{x \in X} \distribution(x) = 1$.
The support of a distribution $\distribution$ is  $\support(\distribution) \coloneqq \{x \in X \mid \distribution(x) > 0\}$.
For a set $S$, $S^\star$ and $S^\omega$ refer to the set of finite and infinite sequences of elements of $S$, respectively.

\paragraph*{Markov Chains, MDPs, and Stochastic Games}
	A \emph{Markov chain (MC)} (e.g.\ \cite{DBLP:books/daglib/0020348}), is a tuple $\MC = (\States, \mctransitions)$, where
	$\States$ is a set of \emph{states}, and
	$\mctransitions : \States \to \Distributions(\States)$ is a \emph{transition function} that for each state $s$ yields a probability distribution over successor states.
	We write $\mctransitions(s,s^\prime)$ instead of $\mctransitions(s)(s^\prime)$ for the probability to move from $s$ to $s^\prime$ for $s,s^\prime\in S$.
	A \emph{(infinite) path} in an MC is an infinite sequence $s_0,s_1, \dots$ of states such that for all $i$, we have $\mctransitions(s_i,s_{i+1})>0$.
	We denote the set of infinite paths by $\Infinitepaths<\MC>$.
Together with a state $s$, an MC $\MC$ induces a unique probability distribution $\ProbabilityMC<\MC, s>$ over the set of all infinite paths $\Infinitepaths<\MC>$ starting in $s$.
For a random variable  $f : \Infinitepaths<\MC> \to \Reals$, we write $\ExpectationMC<\MC, s>(f)$ for the expected value of $f$ under the probability measure $\ProbabilityMC<\MC, s>$.

%\begin{definition}
	A \emph{turn-based stochastic game (SG)} (e.g.~\cite{condon1990algorithms}) is a tuple $(\StatesMax, \StatesMin, \Actions, \sgtransitions)$, where
	$\StatesMax$ and $\StatesMin$ are disjoint sets of \emph{Maximizer} and \emph{Minimizer} states,
	inducing the set of states $\States = \StatesMax \union \StatesMin$,
	$\Actions$ denotes a finite set of \emph{actions}, furthermore overloading $\Actions$ to also act as a function assigning to each state $s$ a set of non-empty \emph{available actions} $\Actions(s) \subseteq \Actions$, and
	$\sgtransitions : \States \times \Actions \to \Distributions(\States)$ is the \emph{transition function} that for each state $s$ and (available) action $a \in \Actions(s)$ yields a distribution over successor states.
%\end{definition}
For convenience, we write $\sgtransitions(s, a, s')$ instead of $\sgtransitions(s, a)(s')$.
Moreover, $\opt_{a \in \stateactions(s)}^s$ refers to $\max_{a \in \stateactions(s)}$ if $s\in \StatesMax$ and $\min_{a \in \stateactions(s)}$ if $s\in \StatesMin$, i.e.\ the preference of either player in a state $s$.
We omit the superscript $s$ where clear from context.
Given a function $f : \States \to \Reals$ assigning values to states, we write $\ExpectedSumSG{\sgtransitions}{s}{a}{f} \coloneqq \sum_{s' \in \States} \sgtransitions(s, a, s') \cdot f(s')$ for the weighted sum over the successors of $s$ under $a \in \Actions(s)$.
A \emph{Markov decision process (MDP)} (e.g.\ \cite{DBLP:books/wi/Puterman94}) can  be seen as an SG with only one player, i.e.\ $\StatesMax = \emptyset$ or $\StatesMin = \emptyset$.

The semantics of SGs is given in terms of resolving choices by strategies inducing an MC with the respective probability space over infinite paths.
Intuitively, a stochastic game is played in turns:
In every state $s$, the player to whom it belongs chooses an action $a$ from the set of available actions $\Actions(s)$ and the play advances to a successor state $s'$ according to the probability distribution given by $\sgtransitions(s, a)$.
Starting in a state $s_0$ and repeating this process indefinitely yields an infinite sequence $\infinitepath = s_0 a_0 s_1 a_1 \dots \in (\States \times \Actions)^\omega$ such that for every $i \in \Naturals_0$ we have $a_i \in \Actions(s_i)$ and $\sgtransitions(s_i, a_i, s_{i+1}) > 0$.
We refer to such sequences as \emph{(infinite) paths} or \emph{plays} and  denote the set of all  infinite paths in a given game $\SG$ by $\Infinitepaths<\SG>$.
Furthermore, we write $\infinitepath_i$ to denote the $i$-th state  in the  path $\infinitepath$.
\emph{Finite paths} or \emph{histories} $\Finitepaths<\SG>$ are finite prefixes of a play, i.e.\ elements of $(\States \times \Actions)^\star \times \States$ consistent with $\Actions$ and $\sgtransitions$.

The decision-making of the players is captured by the notion of \emph{strategies}.
Strategies are functions mapping a given history to a distribution over the actions available in the current state.
For this paper, \emph{memoryless deterministic} strategies (abbreviated \emph{MD strategies}, also called positional strategies) are of particular interest.
These strategies choose a single action in each state, irrespective of the history, and can be identified with functions $\stratmax : \States \to \Actions$.
Since we show that these strategies are sufficient for the discussed notions, we define the semantics of games only for these strategies and refer the interested reader to the mentioned literature for further details.
We write $\Strategies<\SG>$ for the set of all strategies and $\StrategiesMD<\SG>$ for memoryless deterministic ones.
We call a pair of strategies a \emph{strategy profile}, written $\strategy = (\stratmax, \stratmin)$.
We identify a profile with the induced joint strategy $\strategy(s) \coloneqq \stratmax(s)$ if $s \in \StatesMax$ and $\stratmin(s)$ otherwise.

Given a profile $\strategy = (\stratmax, \stratmin)$ of MD strategies for a game $\SG$, we write $\SG^{\strategy}$ for the MC obtained by fixing both strategies.
So, $\SG^{\strategy} = (\States, \hat{\mctransitions})$, where $\hat{\mctransitions}(s) \coloneqq \sgtransitions(s, \strategy(s))$.
Together with a state $s$, the MC $\SG^{\strategy}$ induces a unique probability distribution $\ProbabilitySG<\SG, s><\strategy>$ over the set of all infinite paths $\Infinitepaths<\SG>$.
For a random variable over paths $f : \Infinitepaths<\SG> \to \Reals$, we write $\ExpectationSG<\SG, s><\strategy>[f]$ for the expected value of $f$ under the probability measure $\ProbabilitySG<\SG, s><\strategy>$.

\paragraph*{Objectives}
Usually, we are interested in finding strategies that optimize the value obtained for a particular \emph{objective}.
We introduce some objectives of interest.

\noindent
\textbf{Reachability.} %
A reachability objective is specified by a set of \emph{target states} $\Target \subseteq \States$.
We define $\reach \Target = \{\infinitepath \mid \exists i. \infinitepath_i \in \Target\}$ the set of all paths eventually reaching a target state.
Given a strategy profile $\strategy$ and a state $s$, the probability for this event is given by $\ProbabilitySG<\SG, s><\strategy>[\reach \Target]$.
On games, we are interested in determining the \emph{value} 
$
	\val_{\SG, \reach \Target}(s) \coloneqq {\max}_{\stratmax \in \StrategiesMD<\SG>} {\min}_{\stratmin \in \StrategiesMD<\SG>} \ProbabilitySG<\SG, s><\stratmax, \stratmin>[\reach \Target]
$
of a state $s$, which intuitively is the best probability we can ensure against an optimal opponent.
Generally, one would consider supremum and infimum over strategies instead maximum and minimum over MD strategies.
However, for reachability we know that these value coincide and the game is \emph{determined}, i.e.\ the order of max and min does not matter \cite{condon1992complexity}.
Finally, we know that the value $\val_{\SG, \reach \Target}$ is a solution of the following set of equations
\begin{equation} \label{eq:reachability_equation}
			v(s) = 0 \text{ for $s \in S_0$,} \quad v(s) = 1 \text{ for $s \in \Target$,} \quad \text{and } v(s) = \opt_{a \in \stateactions(s)} \ExpectedSumSG{\sgtransitions}{s}{a}{v} \text{ otherwise},
\end{equation}
where $S_0$ is the set of states that cannot reach $T$  against an optimal Minimizer strategy \cite{DBLP:conf/spin/ChatterjeeH08}.

\noindent
\textbf{Total Reward.} %
The total reward objective
is specified by a reward function $\reward : \States \to \RealsNonneg$, assigning non-negative rewards to every state.
The total reward obtained by a particular path is defined as the sum of all rewards seen along this path, $\totalreward(\infinitepath) \coloneqq \sum_{i=1}^\infty \reward(\infinitepath_i)$.
Note that since we assume $\reward(s) \geq 0$, this sum is always well-defined.
Classically, we want to optimize the expected total reward, i.e.\ determine
$	\val_{\SG, \Expectation\totalreward}(s) \coloneqq {\max}_{\stratmax \in \StrategiesMD<\SG>} {\min}_{\stratmin \in \StrategiesMD<\SG>} \ExpectationSG<\SG, s><\stratmax, \stratmin>[\totalreward]$.
This game is determined and MD strategies suffice \cite{DBLP:journals/fmsd/ChenFKPS13}.
(To be precise, that work considers a more general formulation of total reward, our case is equivalent to the case $\star = c$ and $T = \emptyset$ (Def.~3) and the quantitative rPATL formula $\langle\langle\{1\}\rangle\rangle \textbf{R}^{\reward}_{{\max} = ?}[\textbf{F}^c \texttt{ff}]$.)

\section{Entropic Risk}
As hinted in the introduction, for classical total reward we optimize the expectation and disregard other properties of the actual distribution of obtained rewards.
This means that an optimal strategy may accept arbitrary risks if they yield minimal improvements in terms of expectation.
To overcome this downside, we consider the entropic risk:
\begin{definition}
	Let $b > 1$ a basis, $X$ a random variable, and $\gamma > 0$ a risk aversion factor.
	The \emph{entropic risk  (of $X$ with base $b$ and factor $\gamma$)} (see, e.g., \cite{follmer2004stochastic}) is defined as
	\begin{equation*}
		\ERisk_\gamma(X) \coloneqq - \tfrac{1}{\gamma} \log_b(\Expectation[b^{-\gamma X}]).
	\end{equation*}
\end{definition}
One often chooses $b = e$.
Nevertheless, we also consider rational values for $b$, which allows us to apply techniques from algebraic number theory to  arising computational problems.

\begin{example}
	Consider a random variable $X$ that takes values $x_1=1$, $x_2=2$, $x_3=4$, and $x_4=5$ with probability $1/4$ each.
	\Cref{fig:illustration_ERisk} illustrates how the entropic risk measure of $X$ with base $e$ is obtained for some risk aversion factor $\gamma$:
	The values $x_i$ are depicted on the $x$-axis.
	We now map the values $x_i$ to values $y_i = e^{-\gamma x_i}$ on the $y$-axis.
	Then, the expected value of $e^{-\gamma X}$ can be obtained as the arithmetic mean of the values $y_i$.
	The result is mapped back to the $x$-axis via $y \mapsto - \frac{1}{\gamma} \log(y)$, the inverse of $x \mapsto e^{-\gamma x}$, and we obtain $\ERisk_{\gamma}(X)$.
\end{example}
The example shows that deviations to lower values are penalized, i.e.\ taken into consideration more strongly, by this risk measure.
For a different perspective, we can also consider the Taylor expansion of $\ERisk$ w.r.t.\ $\gamma$, which is
$\ERisk_\gamma(X) = \Expectation[X] - \frac{\gamma}{2} \cdot \Variance[X] + \mathcal{O}(\gamma^2)$ (see, e.g., \cite{asienkiewicz2017note}).
The terms hidden in $\mathcal{O}(\gamma^2)$ comprise all moments of $X$ and exhibit an asymmetry such that $\ERisk$ is roughly the expected value minus a penalty for deviations to lower values.

\subsection{Entropic Risk in SGs}
We are interested in the case $X = \totalreward$, i.e.\ optimizing the risk for total rewards.
We write 
\begin{equation*}
	\ERiskSG<\SG, \initialstate><\gamma>[\strategy] \coloneqq -\tfrac{1}{\gamma} \log_b(\ExpectationSG<\SG, \initialstate><\strategy>[b^{-\gamma X}])
\end{equation*}
to denote the entropic risk of the total reward achieved by the strategy profile $\strategy$ when starting in state $\initialstate$, omitting sub- and superscripts where clear from context.
Clearly, this is well defined for any profile:
We have that $b^{-\gamma \totalreward(\infinitepath)} = b^{- \gamma \sum_{i=1}^\infty \reward(\infinitepath_i)} = \prod_{i=1}^\infty b^{-\gamma \reward(\infinitepath_i)}$ and each factor lies between $0$ and $1$, thus the product converges (possibly with limit $0$).

We also give an insightful characterization for integer rewards.
If $\reward(s) \in \Naturals$, we have
\begin{equation} \label{eq:split_by_total_reward}
	\ERiskSG<\SG, \initialstate><\gamma>[\strategy] = - \tfrac{1}{\gamma} \log_b \left({\sum}_{n=0}^{\infty} \ProbabilitySG<\SG, \initialstate><\strategy>[\totalreward = n] \cdot b^{-\gamma n} \right).
\end{equation}
Naturally, our goal is to optimize the entropic risk.
In this work, we mainly consider the corresponding decision variant, which we call the \emph{entropic risk threshold problem}:
\begin{framed}
	\noindent{\textbf{Entropic risk threshold problem:}}
	Given an SG $\SG$, state $\initialstate$, reward function $\reward$, risk parameter $\gamma$, risk basis $b$, and threshold $t$, decide whether there exists a Maximizer strategy $\stratmax$ such that for all Minimizer strategies $\stratmin$ we have $\ERiskSG<\SG, \initialstate><\gamma>[(\stratmax, \stratmin)] \geq t$.
\end{framed}
Note that (for now) we do not assume any particular encoding of the input.
For example, the reward function $\reward$ could be given symbolically, describing irrational numbers.
A second variant of the threshold problem asks whether the optimal value
\begin{align}
	\ERisk_{\SG, s}^{\gamma *} & \coloneqq {\sup}_{\stratmax \in \Strategies<\SG>} {\inf}_{\stratmin \in \Strategies<\SG>} \ERiskSG<\SG, s><\gamma>[(\stratmax, \stratmin)] \label{eqn:erisk_opt} 
\end{align}
is at least $t$ for a given threshold $t$.
We will see that SGs with the entropic risk as objective function are determined and hence the two variants are equivalent.
Before proceeding with our solution approaches, we provide an illustrative example.
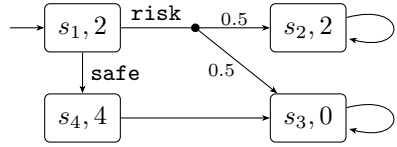
\begin{figure}
	\centering
	\begin{tikzpicture}[auto,initial text=]
		\node[state,initial=left] at (1, 0) (s1) {$s_1, 2$};
		\node[actionnode] at (2.5, 0) (a11) {};
		\node[state] at (4, 0) (s2) {$s_2, 2$};
		\node[state] at (4, -1.2) (s3) {$s_3, 0$};

		\node[state] at (1, -1.2) (s4) {$s_4, 4$};

		\path[->,directedge]
			(s2) edge[loop right] (s2)
			(s3) edge[loop right] (s3)
			(s1) edge node[action] {$\mathtt{safe}$} (s4)
			(s4) edge (s3)
		;
		\path[->,probedge]
			(a11) edge node[prob] {0.5} (s2)
			(a11) edge node[swap,prob] {0.5} (s3)
		;
		\path[-,actionedge]
			(s1) edge node[action] {$\mathtt{risk}$} (a11)
		;
	\end{tikzpicture}
	\caption{
		Our running example to demonstrate several properties of entropic risk.
		For ease of presentation, the system actually is an MDP, where all states belong to Maximizer.
		States are denoted by boxes and their reward is written next to the state name.
		Transition probabilities are written next to the corresponding edges, omitting probability 1.
	}
	\label{fig:running_example}
\end{figure}
\begin{example}
	Consider the MDP of \cref{fig:running_example}.
	The optimal total reward is obtained by choosing action $\mathtt{risk}$ in state $s_1$:
	Then, we actually obtain an infinite total reward through state $s_2$.
	In comparison, choosing action $\mathtt{safe}$ would yield a reward of $6$ in total.
	Now, consider the entropic risk.
	When choosing action $\mathtt{risk}$, we obtain a total reward of $2$ and $\infty$ with probability $\frac{1}{2}$ each, while action $\mathtt{safe}$ yields 6 with probability 1.
	Let $b = 2$ and $\gamma = 1$ for simplicity.
	Then, we obtain an entropic risk of $- \log_2(\frac{1}{2} 2^{-2} + \frac{1}{2} 2^{-\infty}) = 3$ under action $\mathtt{risk}$ and $- \log_2(2^{-6}) = 6$ for $\mathtt{safe}$.
	Thus, action $\mathtt{safe}$ is preferable.
\end{example}
\begin{remark}
	As hinted above, entropic risk is finite whenever a finite reward is obtained with non-zero probability, i.e.\ for any strategy profile $\strategy$, $\ERiskSG<\SG, \initialstate><\gamma>[\strategy] = \infty$ iff $\Probability_{\SG, \initialstate}^{\strategy}[\totalreward = \infty] = 1$.
	In contrast, expectation is infinite whenever there is a non-zero chance of infinite reward, i.e.\ $\Expectation_{\SG, \initialstate}^{\strategy}[\totalreward] = \infty$ iff $\Probability_{\SG, \initialstate}^{\strategy}[\totalreward = \infty]  > 0$.
	So, entropic risk allows us to meaningfully compare strategies which  yield infinite total reward with some positive probability.
\end{remark}
\subsection{Exponential Utility}
Observe that the essential part of the entropic risk is the inner expectation.
Thus, we consider the \emph{negative exponential utility}
\begin{equation*}
	\Utility_{\SG,\initialstate}^{\gamma}(\strategy) \coloneqq \ExpectationSG<\SG, \initialstate><\strategy>[b^{-\gamma \totalreward}].
\end{equation*}
We have $\ERiskSG<\SG, \initialstate><\gamma>[\strategy] = -\frac{1}{\gamma} \log_b(\Utility_{\SG,\initialstate}^{\gamma}(\strategy))$. % 
Observe that in our case $0 \leq \Utility_{\SG,\initialstate}^{\gamma}(\strategy) \leq 1$ for any $\strategy$, as $0 \leq \totalreward \leq \infty$.
Moreover, $\ERiskSG<\SG, \initialstate><\gamma>[\strategy] \geq t$ iff $\Utility_{\SG,\initialstate}^{\gamma}(\strategy) \leq b^{- \gamma \cdot t}$, thus, a risk-averse agent (in our case Maximizer) wants to minimize $\Utility$.
The optimal value is
\begin{align}
	\Utility_{\SG, s}^{\gamma *} & \coloneqq {\inf}_{\stratmax \in \Strategies<\SG>} {\sup}_{\stratmin \in \Strategies<\SG>} \ExpectationSG<\SG, s><\stratmax, \stratmin>[b^{-\gamma \totalreward}]. \label{eqn:utility_opt}
\end{align}
We again omit sub- and superscripts where clear from context.
We show later that games with $\Utility$ or $\ERisk$ as payoff functions are determined.
Thus, the order of $\sup$ and $\inf$ in the above definition does not matter.
We call a Maximizer-strategy $\stratmax$ optimal if $\ERisk_{\SG, s}^{\gamma *} = {\inf}_{\stratmin \in \Strategies<\SG>} \ERiskSG<\SG, s><\gamma>[(\stratmax, \stratmin)]$ and analogously for Minimizer-strategies.

\section{Basic Properties and Decidability} \label{sec:basic_porperties}
In this section, we establish several results for SGs with entropic risk as objective functions concerning determinacy, strategy complexity, and decidability in the general case.
We mainly work on games with $\Utility$ as payoff function. As $\ERisk$ can be obtained from $\Utility$ via the monotone function $-\frac{1}{\gamma} \log(\cdot)$,  most results, such as determinacy or strategy complexity, will transfer directly to games with $\ERisk$ as objective function.

First, we show that the games are determined, i.e.\ the order of $\sup$ and $\inf$ in \cref{eqn:erisk_opt} and \cref{eqn:utility_opt} can be switched.
Then, we show that games with $\Utility$ as payoff function can be seen as reachability games via a reduction that introduces irrational transition probabilities in general.
We conclude that considering only MD strategies is sufficient to obtain the optimal value, i.e.\ $\sup$ and $\inf$ can be replaced with a $\max$ and $\min$ over MD strategies.
From this, we derive a system of inequalities that has a solution if and only if the optimal value satisfies $\ERisk^*\geq t$ for a given threshold $t$.
We conclude this section by observing that the satisfiability of this system of inequalities can be expressed as a sentence in the language of the reals with exponentiation.
In this way, we obtain the conditional decidability of the entropic risk threshold problem in SGs subject to Shanuel's conjecture.

Throughout this section, fix a game $\SG$, reward function $\reward$, state $\initialstate$, risk parameter $\gamma$, and risk basis $b$.
Omitted proofs can be found in \cref{app:basic_properties:proofs}.

\subsection{Determinacy and Optimality Equation} \label{sub:determinacy}
\begin{lemma}
	Stochastic games with $\Utility$ as payoff function are determined, i.e.
	\begin{equation*}
		{\inf}_{\stratmax \in \Strategies<\SG>} {\sup}_{\stratmin \in \Strategies<\SG>} \ExpectationSG<\SG, s><\stratmax, \stratmin>[b^{-\gamma \totalreward}]
	=
		{\sup}_{\stratmin \in \Strategies<\SG>}{\inf}_{\stratmax \in \Strategies<\SG>} \ExpectationSG<\SG, s><\stratmax, \stratmin>[b^{-\gamma \totalreward}].
	\end{equation*}
\end{lemma}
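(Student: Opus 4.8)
The plan is to reduce games with payoff $\Utility$ to turn-based stochastic \emph{reachability} games, for which determinacy is classical and, crucially, insensitive to whether transition probabilities are rational. From $\SG$ construct an auxiliary game $\SG'$ on essentially the same arena, enlarged by one fresh absorbing sink state $\bot$: for every state $s$, action $a \in \stateactions(s)$ and successor $s'$, a $\bigl(1 - b^{-\gamma \reward(s')}\bigr)$-fraction of the probability of moving from $s$ via $a$ to $s'$ is redirected to $\bot$ instead, the rest behaving as in $\SG$. Since $b > 1$, $\gamma > 0$ and $\reward(s') \in \RealsNonneg$, every weight $b^{-\gamma \reward(s')}$ lies in $(0,1]$, so $\SG'$ is a well-defined SG (with irrational transition probabilities in general), and by construction the reward of the initial state is never charged, matching the convention $\totalreward(\infinitepath) = \sum_{i \ge 1} \reward(\infinitepath_i)$.

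The crux is the identity, valid for \emph{every} strategy profile $\strategy$ (so that it speaks about arbitrary strategies in $\Strategies<\SG>$, not just MD ones),
\[
	\Utility_{\SG, s}^{\gamma}(\strategy) \;=\; \ProbabilitySG<\SG', s><\strategy>[\neg\reach\{\bot\}] \;=\; 1 - \ProbabilitySG<\SG', s><\strategy>[\reach\{\bot\}].
\]
To see this, note that a strategy for $\SG$ is also a strategy for $\SG'$ (the two games have the same histories until $\bot$ is entered, after which $\bot$ is absorbing and irrelevant), that for any finite prefix $s_0 s_1 \dots s_n$ the probability in $\SG'$ of following it without entering $\bot$ is its probability in $\SG$ times $\prod_{i=1}^{n} b^{-\gamma \reward(s_i)}$, and hence that the probability in $\SG'$ of never reaching $\bot$ equals $\int \prod_{i \ge 1} b^{-\gamma \reward(\infinitepath_i)} \, d\ProbabilitySG<\SG, s><\strategy> = \ExpectationSG<\SG, s><\strategy>[b^{-\gamma \totalreward}] = \Utility_{\SG, s}^{\gamma}(\strategy)$. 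Moreover the player roles are preserved: Maximizer, who wants to \emph{minimize} $\Utility$, is exactly the player who in $\SG'$ wants to \emph{maximize} the probability of $\reach\{\bot\}$.

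It then remains to invoke determinacy of turn-based stochastic reachability games. The classical positional determinacy \cite{condon1992complexity} does not rely on rationality of probabilities, and it extends to $\sup/\inf$ over \emph{all} strategies by the standard observation that once one player fixes an optimal MD strategy the opponent is left facing an MDP, where the reachability value over all strategies coincides with the positional one. Hence ${\sup}_{\stratmax}{\inf}_{\stratmin}\ProbabilitySG<\SG', s><\stratmax,\stratmin>[\reach\{\bot\}] = {\inf}_{\stratmin}{\sup}_{\stratmax}\ProbabilitySG<\SG', s><\stratmax,\stratmin>[\reach\{\bot\}]$, and substituting the identity above and moving $1 - (\cdot)$ through the $\sup$ and $\inf$ (which interchanges them) yields ${\inf}_{\stratmax}{\sup}_{\stratmin}\Utility_{\SG, s}^{\gamma}(\stratmax,\stratmin) = {\sup}_{\stratmin}{\inf}_{\stratmax}\Utility_{\SG, s}^{\gamma}(\stratmax,\stratmin)$, which is the claim.

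The one genuinely fiddly point is the bookkeeping of the reduction: verifying that $\SG'$ reproduces $b^{-\gamma \totalreward}$ exactly (including under the \enquote{sum from $i \ge 1$} convention) and that history-dependent, randomized strategies transfer faithfully in both directions between $\SG$ and $\SG'$; none of this is conceptually deep. An alternative that sidesteps the reduction is to appeal directly to Martin's determinacy theorem for Blackwell games, since $\infinitepath \mapsto b^{-\gamma \totalreward(\infinitepath)}$ is a bounded Borel-measurable payoff; the reduction is nonetheless preferable here, because the same $\SG'$ is reused to establish the existence of optimal MD strategies and the optimality equations.
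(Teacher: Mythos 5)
Your proof is correct, but it takes a genuinely different route from the paper's. The paper disposes of determinacy in one line by appealing to Martin's Borel determinacy in the form for zero-sum stochastic games (Maitra--Sudderth): the game is zero-sum and $\infinitepath \mapsto b^{-\gamma \totalreward(\infinitepath)}$ is a bounded Borel-measurable payoff, so the value exists --- exactly the alternative you mention and set aside at the end. What you do instead is essentially the content of the paper's later reduction-to-reachability lemma (its game $\SG_R$ with sink $\underline{s}$), which the paper proves separately and uses for strategy complexity and the optimality equations, \emph{not} for determinacy. Your version of the reduction is self-contained and in one respect cleaner for this purpose: by establishing the per-profile identity $\Utility_{\SG,s}^{\gamma}(\strategy) = 1 - \ProbabilitySG<\SG',s><\strategy>[\reach\{\bot\}]$ for \emph{arbitrary} (history-dependent, randomized) profiles and phrasing the target as $\neg\reach\{\bot\}$, you avoid the $S_0$/$S_\infty$ analysis that the paper's $\SG_R$ needs, and your normalization of the sink mass per successor (redirecting a $(1-b^{-\gamma\reward(s')})$-fraction of each edge) makes the transition function sum to one without further ado. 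The trade-off: the abstract argument is immediate and needs no bookkeeping, while your reduction costs more verification up front (the prefix-probability computation, the limit exchange giving $\Pr[\neg\reach\bot] = \Expectation[b^{-\gamma\totalreward}]$, the transfer of general strategies, and the extension of reachability determinacy from MD to all strategies, all of which you correctly flag) but buys positional determinacy and the fixed-point characterization as byproducts --- results the paper does need and derives from its own version of the same reduction anyway.
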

\begin{proof}
	This follows from the classical result on determinacy of Borel games \cite{martin1975borel}, see \cite{maitra2003stochastic} for a concrete formulation for stochastic games.
	In particular, the game  is zero-sum and $\Utility$ is a bounded, Borel-measurable function.
\end{proof}
As $\ERisk$ is obtained from $\Utility$ via a monotone function, also games with $\ERisk$ as payoff function are determined.
While $\ERisk^*$ is difficult to tackle directly due to its non-linearity, we can derive the following optimality equation for $\Utility^*$:
\begin{restatable}{lemma}{fixedpoint} \label{stm:fixed_point_equation}
	The optimal utility $\Utility^*$ is a solution of the following system of constraints:%
	\begin{equation} \label{eq:utility_fixed_point}
		v(s) = b^{-\gamma \reward(s)} \cdot \overline{\opt}^s_{a \in \stateactions(s)} \cdot {\sum}_{s' \in \States} \sgtransitions(s, a, s') \cdot v(s'),
	\end{equation}
	where $\overline{\opt}^s$ is $\min$ for a Maximizer state $s$ and $\max$ for a Minimizer state.
\end{restatable}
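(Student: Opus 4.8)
The plan is to verify directly that the value function satisfies \cref{eq:utility_fixed_point}; for brevity write $\Utility^*(s) \coloneqq \Utility_{\SG, s}^{\gamma *}$. The argument is the usual one-step unfolding of the utility, turned into an equality by the determinacy established above, which lets us glue near-optimal strategies of the two players across the successors of a state.

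The starting point is the \emph{one-step decomposition} of the utility. Since $b^{-\gamma \totalreward}$ factors as $b^{-\gamma \reward(s)}$ times the utility of the remaining play, the tower property of conditional expectation gives, for every strategy profile $\strategy$ that plays action $a$ in the current state $s$,
\begin{equation*}
	\Utility_{\SG, s}^{\gamma}(\strategy) = b^{-\gamma \reward(s)} \cdot {\sum}_{s' \in \States} \sgtransitions(s, a, s') \cdot \Utility_{\SG, s'}^{\gamma}(\strategy_{s'}),
\end{equation*}
where $\strategy_{s'}$ is the residual profile induced by $\strategy$ after the history $s\,a\,s'$; the sum converges absolutely as $0 \le \Utility \le 1$. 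Conversely, fixing an action $a \in \stateactions(s)$ together with one profile per successor and letting play follow $a$ in $s$ and then the chosen profile from each $s'$ onward realizes the right-hand side. I will use repeatedly that $b^{-\gamma\reward(s)} \in (0,1]$, so that small errors at the successors remain small at $s$.

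I would then prove the two inequalities between $\Utility^*(s)$ and $W(s) \coloneqq b^{-\gamma\reward(s)} \cdot \overline{\opt}^s_{a\in\stateactions(s)}\, {\sum}_{s'\in\States}\sgtransitions(s,a,s')\cdot\Utility^*(s')$. For ``$\le$'': fix $\varepsilon>0$ and, using $\Utility^*(s')=\inf_{\stratmax}\sup_{\stratmin}\Utility_{\SG,s'}^{\gamma}(\stratmax,\stratmin)$, choose for each successor $s'$ of $s$ a Maximizer strategy $\stratmax_{s'}$ with $\sup_{\stratmin}\Utility_{\SG,s'}^{\gamma}(\stratmax_{s'},\stratmin)\le\Utility^*(s')+\varepsilon$. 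Let Maximizer switch to $\stratmax_{s'}$ upon entering $s'$ and, if $s\in\StatesMax$, play in $s$ an action attaining the minimum in the definition of $W(s)$; if $s\in\StatesMin$, use that any action $a$ Minimizer may play satisfies ${\sum}_{s'}\sgtransitions(s,a,s')\Utility^*(s') \le \overline{\opt}^s_{a'\in\stateactions(s)}{\sum}_{s'}\sgtransitions(s,a',s')\Utility^*(s')$ since $\overline{\opt}^s=\max$ there. In either case the one-step decomposition shows that, against \emph{any} Minimizer strategy, the resulting utility at $s$ is at most $b^{-\gamma\reward(s)}\cdot({\sum}_{s'}\sgtransitions(s,a,s')\Utility^*(s')+\varepsilon)\le W(s)+\varepsilon$; letting $\varepsilon\to0$ gives $\Utility^*(s)\le W(s)$. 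The inequality ``$\ge$'' is symmetric: by determinacy we also have $\Utility^*(s')=\sup_{\stratmin}\inf_{\stratmax}\Utility_{\SG,s'}^{\gamma}(\stratmax,\stratmin)$, so pick $\varepsilon$-optimal Minimizer strategies $\stratmin_{s'}$ from the successors, have Minimizer switch to $\stratmin_{s'}$ upon entering $s'$ and, if $s\in\StatesMin$, play in $s$ an action attaining the maximum defining $W(s)$ (and if $s\in\StatesMax$, any Maximizer action $a$ gives ${\sum}_{s'}\sgtransitions(s,a,s')\Utility^*(s')\ge\overline{\opt}^s_{a'\in\stateactions(s)}{\sum}_{s'}\sgtransitions(s,a',s')\Utility^*(s')$ as $\overline{\opt}^s=\min$); the same computation yields utility at $s$ at least $W(s)-\varepsilon$ against any Maximizer strategy.

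The only step that requires genuine care is the one-step decomposition: it is the tower property applied to the shift on paths together with the product form $b^{-\gamma\totalreward(\infinitepath)} = b^{-\gamma\reward(s)} \cdot b^{-\gamma\,(\text{reward of the remaining play})}$, and it is precisely this product form that dictates the shape of \cref{eq:utility_fixed_point}; everything else is bookkeeping with the factor $b^{-\gamma\reward(s)}\in(0,1]$, which never amplifies the $\varepsilon$-errors. Alternatively, one may quote the general theory of zero-sum Borel stochastic games, whose value is known to satisfy the one-step optimality operator for bounded, shift-measurable payoffs; the self-contained argument above avoids invoking that machinery.
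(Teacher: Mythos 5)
Your proof is correct and rests on the same core step as the paper's: the one-step decomposition of $\Expectation[b^{-\gamma\totalreward}]$ obtained from the product form $b^{-\gamma\totalreward}=b^{-\gamma\reward(s)}\cdot b^{-\gamma(\text{remaining reward})}$, which is exactly what dictates the shape of \cref{eq:utility_fixed_point}. Where you differ is in how optimality at the successors is handled. The paper fixes an (assumed) exactly optimal profile $(\stratmax^*,\stratmin^*)$ and argues that its residual after $(s,a)$ must realize $\Utility^*(s')$ at every reachable successor, \enquote{otherwise a player could do better}; this presupposes that optimal strategies exist, which at that point in the paper has not yet been established (it only follows later from the reduction to reachability). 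Your two-inequality argument---gluing $\varepsilon$-optimal strategies at the successors for \enquote{$\leq$}, and invoking determinacy to swap $\inf$ and $\sup$ for \enquote{$\geq$}---avoids that assumption and works only with quantities already known to be well defined. The price is a little more bookkeeping (the $\varepsilon$-errors, the case split on which player owns $s$, and the observation that a randomizing owner of $s$ cannot beat its best pure action), all of which you handle correctly; as you note, $0\leq\Utility\leq 1$ and $b^{-\gamma\reward(s)}\in(0,1]$ keep the errors controlled. Either route establishes the claim, but yours is the more self-contained of the two.
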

Unfortunately, $\Utility$ is not the unique or, at least, the pointwise smallest or largest fixed point of this equation system.
Consider the case where $\reward \equiv 0$, i.e.\ $b^{- \gamma \reward(s)} = 1$.
Here, every constant vector is a fixed point, however $\Utility^* \equiv 1$.
More generally, as the equations are purely multiplicative, for any fixed point $v$, every multiple $\lambda \cdot v$ is a fixed point, too.
\begin{example} \label{ex:example_multiple_fixpoints}
	Again consider the example of \cref{fig:running_example} with $b = 2$ and $\gamma = 1$.
	The (simplified) equations we get are:
	\begin{gather*}
		v_1 = 2^{-2} \cdot \min \{\tfrac{1}{2} v_2 + \tfrac{1}{2} v_3, v_4\} \qquad
		v_2 = 2^{-2} \cdot v_2 \qquad
		v_3 = v_3 \qquad
		v_4 = 2^{-4} \cdot v_3,
	\end{gather*}
	where $v_i$ corresponds to the value of $s_i$.
	First, for $v_2$, we observe that $v_2 = 0$ is the only valid assignment.
	Then, we have that $v_1 = 2^{-2} \cdot \min \{\tfrac{1}{2} 0 + \tfrac{1}{2} v_3, 2^{-4} v_3\} = 2^{-3} \cdot \min \{v_3, 2^{-3} v_3\}$.
	Clearly, this system is underdetermined and we obtain a distinct solution for any value of $v_3$.
\end{example}
To solve these issues, we need to define \enquote{anchors} of the equation.
We observe the resemblance of classical fixed point equations for stochastic systems.
In particular, for $\reward \equiv 0$, \cref{eq:utility_fixed_point} is the same as for reachability, \cref{eq:reachability_equation}.

\subsection{Reduction to Reachability}
\label{sub:reachability}
We define $S_0 = \{s \mid {\max}_{\stratmax} {\min}_{\stratmin} \ProbabilitySG<\SG, s><\stratmax, \stratmin>[\totalreward >0 ] = 0\}$ and $S_\infty = \{s \mid {\max}_{\stratmax} {\min}_{\stratmin} \ProbabilitySG<\SG, s><\stratmax, \stratmin>[\totalreward = \infty] = 1\}$ the set of states in which Maximizer cannot obtain a total reward of more than $0$ with positive probability against an optimal opponent strategy or ensure infinite reward with probability 1, respectively.
We show later on that these sets are simple to compute and MD strategies are sufficient.
Since $\reward(s) \geq 0$, all states in $s \in S_0$ necessarily have $\reward(s) = 0$.
Observe that $S_0$ may be empty, but then  $\States = S_\infty$ and so $\Utility^* = 0$, $\ERisk^* = \infty$.
Through these sets, we can connect optimizing the utility to a reachability objective.
\begin{restatable}{lemma}{reductionreachability} \label{stm:utility_reachability_equivalence}
	For any state $s$ in the game $\SG$, the optimal utility $\Utility^*$ is equal to the minimal probability of reaching the set $S_0$ from $s$ in game $\SG_R$, defined as follows:
	We add a designated sink state $\underline{s}$ (which may belong to either player and only has a self-loop back to itself) and define $\sgtransitions_R(s, a, s') = b^{-\gamma \reward(s')} \cdot \sgtransitions(s, a, s')$ for $s, s' \in \States$, $a \in \Actions(s)$ and $\sgtransitions_R(s, a, \underline{s}) = (1 - b^{- \gamma \reward(s)})$.
	There is a direct correspondence between optimal strategies.
\end{restatable}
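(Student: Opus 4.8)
The plan is to realise the exponential weight $b^{-\gamma\totalreward}$ as a \emph{survival probability} of an auxiliary ``killed'' process, thereby turning the computation of $\Utility$ into a reachability computation. In $\SG_R$ one interprets a step out of $s$ as: with probability $b^{-\gamma\reward(s)}$ the play survives and follows the original transition, otherwise it is absorbed in the fresh sink $\underline s$; since $\reward\ge 0$ this is a valid (sub-)stochastic kernel completed by the sink, and on $S_0$ (where $\reward\equiv 0$) nothing is ever killed, so $S_0$ together with $\underline s$ act as two absorbing ``anchors''. The first step is the per-profile identity: for MD profiles $\strategy=(\stratmax,\stratmin)$, expanding $\ExpectationSG<\SG,s><\strategy>[b^{-\gamma\totalreward}] = \ExpectationSG<\SG,s><\strategy>[\prod_{i}b^{-\gamma\reward(\infinitepath_i)}]$ and a routine (monotone) limit over the finite truncations of the product shows that this equals the probability that the killed chain $\SG_R^\strategy$ is never absorbed; and, once inside $S_0$ under a Minimizer strategy witnessing $S_0$ (i.e.\ keeping the total reward at $0$), ``never absorbed'' coincides with ``$\reach S_0$''. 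This already yields the optimality equation: substituting $v(\underline s)=0$ into the reachability equation \eqref{eq:reachability_equation} for $\SG_R$ collapses it exactly to the utility fixed-point equation \eqref{eq:utility_fixed_point} of \cref{stm:fixed_point_equation}, the factor $b^{-\gamma\reward(s)}$ factoring out of the $\opt$, once the Maximizer/Minimizer roles in the reachability game are assigned in the way dictated by $\ERisk$-maximisation.

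The second step is to pin the value down, since \eqref{eq:utility_fixed_point} alone is underdetermined (\cref{ex:example_multiple_fixpoints}). From the definition of $S_0$ we obtain $\Utility^*|_{S_0}\equiv 1$ (the Minimizer can keep the reward at $0$, while $b^{-\gamma\totalreward}\le 1$ always), matching $\val_{\SG_R,\reach S_0}$ on $S_0$; and on the states from which $S_0$ is unreachable in $\SG_R$ against an optimal opponent we must show $\Utility^*\equiv 0$. The crucial observation is that an \emph{optimal} $\ERisk$-Maximizer never lets a positive-probability set of plays get trapped forever in a reward-$0$ region disjoint from $S_0$: from such a region, by the very definition of $S_0$, it has a deviation securing positive reward with positive probability, which strictly lowers $b^{-\gamma\totalreward}$ hence $\Utility$ on that cone (without affecting other cones), contradicting optimality. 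Hence, under an optimal profile, almost every play either reaches $S_0$ or accumulates infinite reward; on the latter event $b^{-\gamma\totalreward}=0$ and the event is invisible to the reachability objective, while on the former the surviving mass equals the reachability mass — giving $\Utility^*(s)=\val_{\SG_R,\reach S_0}(s)$. Determinacy and MD-optimality for the reachability objective (standard for SGs, cf.\ \eqref{eq:reachability_equation}) then transfer through this equality, and an action is optimal in $\SG_R$ iff it is optimal for $\Utility$, which is the asserted correspondence of optimal strategies.

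The main obstacle is exactly this second step: matching the two ``anchor'' sets and thereby selecting, among the many fixed points of \eqref{eq:utility_fixed_point}, the one equal to the reachability value. Concretely, one must argue rigorously that the ``trapped in a reward-$0$ region outside $S_0$'' contribution can always be removed by the $\ERisk$-Maximizer and never needs to be created by the $\ERisk$-Minimizer, and that the remaining possibilities (reaching $S_0$, vs.\ accumulating infinite reward, vs.\ absorption in $\underline s$) partition the surviving plays as the reduction requires; finiteness of the state space — hence of the set of reward values, so that finite total reward forces eventual confinement to a reward-$0$ region — is what makes this trichotomy work, and is the place where the argument has to be handled with care.
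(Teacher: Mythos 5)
Your proposal is correct and follows essentially the same route as the paper's proof: reduce to reachability in $\SG_R$, anchor the fixed-point equation at $S_0$ (value $1$) and at the infinite-reward/absorption event (value $0$), and show that under optimal play almost every path falls into one of these two classes, the delicate point being the exclusion of plays trapped forever in a reward-$0$ region outside $S_0$ — which the paper likewise handles via the observation that such a region would have to be contained in $S_0$ or $S_\infty$ by optimality. Your killed-process/survival-probability reading of $\SG_R$ is just a repackaging of the paper's decomposition of $\reach S_0$ over finite prefixes, and your deviation argument for ruling out the trapped case matches the paper's BSCC-based reasoning in substance.
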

We note that reachability games can also be reduced to our case: %
\begin{restatable}{lemma}{capturingreachability}\label{lem:capturing_reachability}
	For any game $\SG$ and (absorbing) reachability goal $T$, we have $\val_{\SG, \reach T}(s) = 1 -  \Utility^*_{\SG}(s)$ with reward $\reward(s) = \indicator{T}(s)$ and $\gamma = 1$.
\end{restatable}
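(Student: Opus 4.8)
The plan is to specialize the construction of \cref{stm:utility_reachability_equivalence} to the given instance and read off the claim. Fix the reward function $\reward(s) = \indicator{T}(s)$ and $\gamma = 1$, and let $b > 1$ be arbitrary. First I would observe that since $T$ is absorbing and reward is earned exactly on the target states, the total reward $\totalreward(\infinitepath)$ along a path $\infinitepath$ is $0$ if $\infinitepath$ never enters $T$, and $\infty$ if it does (because once in $T$ the path stays in $T$ forever, accumulating reward $1$ at each step). Hence $b^{-\gamma\totalreward(\infinitepath)} = b^{-\totalreward(\infinitepath)}$ equals $1$ on paths avoiding $T$ and $0$ on paths reaching $T$; that is, $b^{-\totalreward} = \indicator{\neg \reach T}$ pointwise. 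Taking expectations under any strategy profile $\strategy$ gives $\Utility_{\SG,s}^{1}(\strategy) = \ProbabilitySG<\SG, s><\strategy>[\neg\reach T] = 1 - \ProbabilitySG<\SG, s><\strategy>[\reach T]$.

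Next I would push the $\opt$/$\overline\opt$ through. Maximizer wants to \emph{minimize} $\Utility$ (as noted in the excerpt, $\Utility$ is the quantity a risk-averse Maximizer minimizes), which by the displayed identity is the same as maximizing $\ProbabilitySG<\SG, s><\strategy>[\reach T]$; Minimizer wants to maximize $\Utility$, i.e.\ minimize the reachability probability. So
\begin{equation*}
	\Utility_{\SG, s}^{1 *} = {\inf}_{\stratmax}{\sup}_{\stratmin}\bigl(1 - \ProbabilitySG<\SG, s><\stratmax,\stratmin>[\reach T]\bigr) = 1 - {\sup}_{\stratmax}{\inf}_{\stratmin}\ProbabilitySG<\SG, s><\stratmax,\stratmin>[\reach T] = 1 - \val_{\SG,\reach T}(s),
\end{equation*}
where the last equality is the definition of the reachability value and uses determinacy of reachability games \cite{condon1992complexity} (and of the $\Utility$-game, established just above in the excerpt) to commute the quantifiers. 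This yields $\val_{\SG,\reach T}(s) = 1 - \Utility^*_{\SG}(s)$.

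There is essentially no hard step here; the lemma is a sanity-check observation showing the reduction of \cref{stm:utility_reachability_equivalence} is tight (reachability embeds into the $\Utility$-problem, not merely the other way around). The only point requiring a line of care is the claim $b^{-\totalreward} = \indicator{\neg\reach T}$: one must invoke that $T$ is absorbing so that entering $T$ forces $\totalreward = \infty$ (and $b^{-\infty} = 0$), while a path with $\totalreward$ finite must in fact have $\totalreward = 0$ since every visit to $T$ contributes $1$. Everything else is bookkeeping with the monotone relabelling of $\Utility$ versus reachability probability and an appeal to determinacy.
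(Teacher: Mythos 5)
Your proof is correct, but it takes a genuinely different and more elementary route than the paper's. You observe that for $\reward = \indicator{T}$ with $T$ absorbing, the payoff $b^{-\totalreward}$ is \emph{pointwise} the indicator of not reaching $T$ (a play entering $T$ stays there and accumulates infinite reward, all other plays accumulate zero), so $\Utility(\strategy) = 1 - \Probability[\reach T]$ holds for \emph{every} strategy profile, and the claim follows by pushing $1-(\cdot)$ through the $\inf\sup$ and invoking determinacy of reachability to identify the resulting $\sup\inf$ with $\val_{\SG,\reach T}$. The paper instead routes the argument through the reduction of \cref{stm:utility_reachability_equivalence}: it notes that $\SG_R$ agrees with $\SG$ except on $T$, that under optimal strategies almost every play reaches $S_0$ or $S_\infty \union \{\underline{s}\}$, and that reaching $S_\infty \union \{\underline{s}\}$ in $\SG_R$ is equivalent to reaching $T$ in $\SG$ (since $T \subseteq S_\infty$ and $\underline{s}$ is only reachable via $S_\infty$); the identity then falls out of $\Utility^*$ being the minimal $S_0$-reachability probability. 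Your version is self-contained and avoids the $S_0$/$S_\infty$ machinery entirely, which makes the lemma transparent as the sanity check it is meant to be; the paper's version buys uniformity with the surrounding development by reusing the already-established correspondence of values and optimal strategies. Both are sound; the one step in yours that needs the stated care is exactly the one you flag, namely that absorbency of $T$ forces $\totalreward \in \{0,\infty\}$ and that $b^{-\infty}=0$ under the paper's convention for the infinite product.
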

We highlight that this reduction from entropic risk games to reachability games is \emph{not} an effective reduction in the computational sense, since $\SG_R$ comprises \emph{irrational} transition probabilities even for entirely rational inputs.
We discuss how to tackle this in the next section and first proceed to derive some useful properties from this correspondence.
\begin{restatable}{lemma}{utilitysmallestfixedpoint} \label{stm:utility_smallest_fixpoint}
	The optimal utility $\Utility^*$ is the pointwise smallest solution of %the following set of equations
	\begin{equation} \label{eq:utility_equation}
		\begin{gathered}
			v(s) = 0 \quad \text{for $s \in S_\infty$,} \qquad v(s) = 1 \quad \text{for $s \in S_0$, and} \\
			v(s) = \overline{\opt}^s_{a \in \stateactions(s)} b^{-\gamma \reward(s)} \cdot \ExpectedSumSG{\sgtransitions}{s}{a}{v} \quad \text{otherwise}
		\end{gathered}
	\end{equation}
\end{restatable}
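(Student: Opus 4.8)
The plan is to derive the statement from the reachability reduction \cref{stm:utility_reachability_equivalence} together with the classical least--fixed--point characterisation of reachability values recalled in \cref{eq:reachability_equation}. The first, easy direction is to verify that $\Utility^*$ is \emph{a} solution of \cref{eq:utility_equation}: the \enquote{otherwise} clause is exactly \cref{stm:fixed_point_equation}; on $S_0$ every strategy profile forces $\totalreward = 0$ (by definition of $S_0$ the Maximizer cannot even make $\totalreward$ positive with positive probability), hence $\Utility^* \equiv 1$ there; and on $S_\infty$ the Maximizer has a memoryless deterministic strategy forcing $\totalreward = \infty$ against every Minimizer strategy, so $b^{-\gamma\totalreward} = 0$ almost surely and $\Utility^* \equiv 0$ there --- equivalently, one may invoke the earlier observation that $\Utility^*(s) = 0$ iff $s \in S_\infty$.

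For pointwise minimality I would transfer the problem to the game $\SG_R$ of \cref{stm:utility_reachability_equivalence}, in which the original Maximizer \emph{minimises} the probability of reaching $S_0$. Since $\SG_R$ is a finite stochastic game, its reachability value $p$ (on $\States \cup \{\underline{s}\}$) is, by \cref{eq:reachability_equation} read in the \enquote{dual} direction --- which is exactly what turns the $\opt$ there into $\overline{\opt}$ --- the pointwise smallest nonnegative solution of the corresponding Bellman system with the target $S_0$ pinned to $1$. Now take any nonnegative solution $v$ of \cref{eq:utility_equation} and extend it by $\tilde{v}(\underline{s}) \coloneqq 0$. Using $\sgtransitions_R(s, a, \underline{s}) = 1 - b^{-\gamma\reward(s)}$ together with $\tilde{v}(\underline{s}) = 0$, the leak term drops out and $\ExpectedSumSG{\sgtransitions_R}{s}{a}{\tilde{v}}$ collapses to $b^{-\gamma\reward(s)} \cdot \ExpectedSumSG{\sgtransitions}{s}{a}{v}$, so on $\States \setminus S_0$ the \enquote{otherwise} clause of \cref{eq:utility_equation} \emph{is} the $\SG_R$-Bellman equation for $\tilde{v}$, while $\tilde{v} \equiv 1$ on $S_0$ and $\tilde{v}(\underline{s}) = 0$. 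It then remains to check consistency of the anchor $\tilde{v} \equiv 0$ on $S_\infty$ with this system: from a state $s \in S_\infty$ the memoryless strategy witnessing membership in $S_\infty$ only visits $S_\infty$, so the weighted successor sum along the corresponding action is $0$, and since $\tilde{v} \geq 0$ the relevant $\overline{\opt}$ at $s$ equals $0 = \tilde{v}(s)$. Hence $\tilde{v}$ solves the full $\SG_R$-system, so $p \leq \tilde{v}$ pointwise by minimality of $p$; restricting to $\States$ and using $p = \Utility^*$ from \cref{stm:utility_reachability_equivalence} yields $\Utility^* \leq v$, as required.

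The step I expect to be most delicate is precisely this compatibility of the $S_\infty$-anchor with the least--fixed--point statement: one must make it precise that $S_\infty$ is contained in the \enquote{$S_0$ is surely avoidable} region of the reachability game $\SG_R$, so that pinning those states to $0$ neither violates the Bellman equations for a candidate solution nor lowers the value $p$. This in turn rests on the fact that once positive reward is collected infinitely often the induced $\SG_R$-play almost surely leaks to $\underline{s}$ --- a Borel--Cantelli argument using finiteness of $\States$ and hence a uniform positive lower bound on the leak probability at positive--reward states --- together with the fact that memoryless deterministic strategies suffice for the $S_0$/$S_\infty$ attractor computations. A minor additional point is notational only: \cref{eq:reachability_equation} is phrased for the Maximizer \emph{maximising} reachability, so it must be applied to the game with the two players swapped, which is the source of the $\overline{\opt}$ appearing in \cref{eq:utility_equation}. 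An alternative route to minimality, avoiding the explicit passage through $\SG_R$, would be to let $\sigma$ be the Maximizer strategy induced by a solution $v$ and run a supermartingale argument on $v$ along $\SG^{(\sigma,\tau)}$ for arbitrary $\tau$; I would expect this to require the same case analysis on $S_0$ and $S_\infty$ plus a convergence argument, so the reduction-based proof seems cleaner to write.
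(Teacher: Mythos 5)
Your proposal is correct and follows essentially the same route as the paper, which likewise derives the claim from the reachability reduction of \cref{stm:utility_reachability_equivalence} together with the standard least-fixed-point characterisation of reachability values (with the players swapped); you merely spell out the details the paper leaves implicit, such as extending a solution by $\tilde v(\underline{s})=0$ and checking the $S_\infty$ anchor. One cosmetic nit: on $S_0$ it is not that \emph{every} profile forces $\totalreward=0$, but that the Minimizer can force it against any Maximizer strategy, which still gives $\Utility^*\equiv 1$ there.
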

Yet, there might be multiple fixed points to the system of equations. %
This is to be expected, since already reachability on MDPs exhibits this problem \cite{DBLP:journals/tcs/HaddadM18}.
We provide a discussion of these issues together with a sufficient condition for uniqueness in \cref{app:basic_properties:uniqueness}.

\subsection{Strategy Complexity} \label{sub:strategies}
By \cref{stm:utility_reachability_equivalence}, the optimal negative exponential utility is achieved by reachability-optimal strategies in $\SG_R$. 
With the known results on reachability \cite{condon1990algorithms},
this yields:
\begin{theorem}
	MD strategies are sufficient to optimize the negative exponential utility and thus also entropic risk.
	More precisely, for all SGs $\SG$, there is an MD strategy $\stratmax$ for the Maximizer such that
	$
		\ERisk_{\SG, s}^{\gamma *}={\inf}_{\stratmin \in \Strategies<\SG>} \ERiskSG<\SG, s><\gamma>[(\stratmax, \stratmin)]
	$
	and analogously for the Minimizer.
\end{theorem}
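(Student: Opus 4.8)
The plan is to leverage the reduction to reachability from \cref{stm:utility_reachability_equivalence} together with the classical fact that turn-based stochastic reachability games admit optimal MD strategies for both players. First I would recall that, by \cref{stm:utility_reachability_equivalence}, the optimal utility $\Utility^*$ coincides with the value of the reachability objective $\reach S_0$ on the game $\SG_R$, and that this correspondence is witnessed by strategies in a direct, value-preserving way. Note that $\SG$ and $\SG_R$ have the same state and action sets apart from the trivial sink $\underline{s}$, whose single self-loop forces a unique choice there, so MD strategies for the two games are essentially the same objects. In $\SG_R$ the risk-averse Maximizer of $\ERisk$ plays the role of the \emph{minimizer} of the reachability probability and the Minimizer of $\ERisk$ plays the role of the maximizer; this reversal of roles is harmless since the classical results on reachability games are symmetric in the two players. (If $S_0 = \emptyset$, then $\States = S_\infty$, so $\Utility^* \equiv 0$ and $\ERisk^* \equiv \infty$, and the claim is trivial for any MD strategies; hence assume $S_0 \neq \emptyset$.)

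Next I would invoke the known result \cite{condon1990algorithms} that in a turn-based stochastic reachability game both players have optimal MD strategies and the game is determined. Applied to $\SG_R$ this gives an MD strategy $\stratmax$ for the player owning $\StatesMax$, i.e.\ the Maximizer of $\ERisk$, such that the value of $\reach S_0$ from $s$ equals $\sup_{\stratmin \in \Strategies<\SG>} \ProbabilitySG<\SG_R, s><(\stratmax, \stratmin)>[\reach S_0]$ for every state $s$ and \emph{every} opponent strategy (not just MD ones), and symmetrically an MD strategy $\stratmin$ for the other player. Transferring along the correspondence of \cref{stm:utility_reachability_equivalence}, the same $\stratmax$, viewed as an MD strategy of $\SG$, satisfies $\sup_{\stratmin \in \Strategies<\SG>} \Utility_{\SG, s}^{\gamma}(\stratmax, \stratmin) = \Utility^*_{\SG, s}$ for all $s$, and symmetrically for $\stratmin$. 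Here I would spell out that, because the Maximizer of $\ERisk$ is the minimizing player in the game on $\Utility$, ``optimal'' means precisely that it drives the $\sup$ over opponent responses down to the determined value $\Utility^*$.

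Finally I would pass from $\Utility$ to $\ERisk$ using $\ERiskSG<\SG, s><\gamma>[\strategy] = -\tfrac{1}{\gamma} \log_b(\Utility_{\SG, s}^{\gamma}(\strategy))$, which is a strictly decreasing, hence order-reversing, bijection from $(0,1]$ onto $[0,\infty)$, with the boundary case $\Utility = 0 \leftrightarrow \ERisk = \infty$ handled directly. Applying it to the identity of the previous paragraph turns the $\sup$ into an $\inf$ and yields $\inf_{\stratmin \in \Strategies<\SG>} \ERiskSG<\SG, s><\gamma>[(\stratmax, \stratmin)] = -\tfrac{1}{\gamma} \log_b(\Utility^*_{\SG, s}) = \ERisk_{\SG, s}^{\gamma *}$, where the last equality uses the already-established relation between the optimal utility and the optimal entropic risk together with determinacy of the $\ERisk$-game, so that the $\sup\inf$ defining $\ERisk_{\SG, s}^{\gamma *}$ equals the corresponding $\inf\sup$. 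The argument for the Minimizer is completely symmetric. I expect the only real subtlety, and thus the main obstacle, to be careful bookkeeping of the role swap of the two players under the reachability reduction and of the direction of the inequalities once the order-reversing map $-\tfrac{1}{\gamma}\log_b(\cdot)$ is applied; the remainder is a direct appeal to \cref{stm:utility_reachability_equivalence} and standard facts about reachability games.
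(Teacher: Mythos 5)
Your proposal is correct and follows essentially the same route as the paper: the paper's proof is exactly the observation that, by \cref{stm:utility_reachability_equivalence}, optimal strategies correspond to reachability-optimal strategies in $\SG_R$, for which MD strategies suffice by the classical results of \cite{condon1990algorithms}, with the transfer back to $\ERisk$ via the monotone map $-\tfrac{1}{\gamma}\log_b(\cdot)$. Your version merely spells out the player role-swap, the degenerate case $S_0 = \emptyset$, and the order reversal more explicitly than the paper's one-line argument.
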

\begin{remark}
	We highlight that this means that this notion of risk is history independent:
	Which actions are  optimal does not depend on what has already \enquote{gone wrong}, but purely on the potential future consequences.
	This is in stark contrast to, e.g., conditional value-at-risk optimal strategies for total reward, which require exponential memory and switch to a purely expectation maximizing (i.e.\ risk-ignorant) behaviour after \enquote{enough} went wrong \cite{DBLP:conf/aaai/Meggendorfer22}.
\end{remark}
\subsection{System of Inequalities}
\label{sub:inequalities}
The problem we want to solve is deciding whether the Maximizer can ensure an entropic risk of at least $t$.
Unfortunately, the reachability game $\SG_R$ is not directly computable, since even for rational rewards $b^{- \gamma \reward(s)}$ may be irrational.
As such, we cannot use this transformation directly to prove decidability or complexity results and need to take a different route.
Analogous to the classical solution to reachability, we first convert the problem to a system of inequalities.
Intuitively, we replace every $\max$ with $\geq$ for all options and dually $\min$ with $\leq$ (again, recalling that Maximizer wants to minimize the value in $\SG_R$).
Formally, we consider the following:
\begin{equation} \label{eq:utility_inequalities}
	\begin{gathered}
		v(\initialstate) \leq b^{-\gamma t}, \qquad v(s) = 0 \quad \text{for $s \in S_\infty$,} \qquad v(s) = 1 \quad \text{for $s \in S_0$,} \\
		v(s) \leq b^{-\gamma \reward(s)} \cdot \ExpectedSumSG{\sgtransitions}{s}{a}{v} \quad \text{for $s \in \StatesMax$, $a \in \Actions(s)$,} \\
		v(s) \geq b^{-\gamma \reward(s)} \cdot \ExpectedSumSG{\sgtransitions}{s}{a}{v} \quad \text{for $s \in \StatesMin$, $a \in \Actions(s)$, and} \\
		\lOr_{a \in \stateactions(s)} v(s) = b^{-\gamma \reward(s)} \cdot \ExpectedSumSG{\sgtransitions}{s}{a}{v} \quad \text{for $s \in \States$}
	\end{gathered}
\end{equation}
Observe that this essentially is the decision variant to the standard quadratic program for reachability applied to $\SG_R$ \cite{condon1992complexity}.
\begin{restatable}{lemma}{utilityinequalities}
	The system of equations \ref{eq:utility_inequalities} has a solution if and only if $\ERisk^* \geq t$.
\end{restatable}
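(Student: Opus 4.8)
The plan is to relate the system \cref{eq:utility_inequalities} to the smallest-fixed-point characterization of $\Utility^*$ from \cref{stm:utility_smallest_fixpoint}, and then to translate the resulting statement about $\Utility^*$ back into one about $\ERisk^*$ via the monotone transform $-\tfrac{1}{\gamma}\log_b(\cdot)$. Concretely, since $\ERisk^* = \sup_{\stratmax}\inf_{\stratmin}\ERiskSG<\SG,\initialstate><\gamma>[(\stratmax,\stratmin)]$, $\Utility^* = \inf_{\stratmax}\sup_{\stratmin}\Utility_{\SG,\initialstate}^{\gamma}((\stratmax,\stratmin))$, and $-\tfrac{1}{\gamma}\log_b(\cdot)$ is strictly decreasing, one gets $\ERisk^* = -\tfrac{1}{\gamma}\log_b(\Utility^*)$; hence $\ERisk^*\geq t$ is equivalent to $\Utility^*(\initialstate)\leq b^{-\gamma t}$, which is precisely the threshold constraint of \cref{eq:utility_inequalities} evaluated at $\Utility^*$.

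The crux is the following observation: ignoring the threshold constraint $v(\initialstate)\leq b^{-\gamma t}$, a vector $v$ satisfies the remaining constraints of \cref{eq:utility_inequalities} if and only if it satisfies \cref{eq:utility_equation} (and $\Utility^*$ satisfies both). For the non-trivial direction, at a Maximizer state $s$ the constraints $v(s)\leq b^{-\gamma\reward(s)}\cdot\ExpectedSumSG{\sgtransitions}{s}{a}{v}$ for all $a$ together with $\lOr_{a\in\stateactions(s)}v(s)=b^{-\gamma\reward(s)}\cdot\ExpectedSumSG{\sgtransitions}{s}{a}{v}$ force $v(s)=\min_a b^{-\gamma\reward(s)}\cdot\ExpectedSumSG{\sgtransitions}{s}{a}{v}$, and $\overline{\opt}^s$ is $\min$ here because the risk-averse Maximizer \emph{minimizes} $\Utility$ in $\SG_R$ (\cref{stm:utility_reachability_equivalence}); symmetrically, at a Minimizer state the one-sided "$\geq$" inequalities together with the disjunction force $v(s)=\max_a b^{-\gamma\reward(s)}\cdot\ExpectedSumSG{\sgtransitions}{s}{a}{v}$. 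The boundary conditions on $S_0$ and $S_\infty$ are literally identical in both systems; moreover $\Utility^*$ also meets the one-sided and disjunctive clauses on $S_0\cup S_\infty$, since by \cref{stm:fixed_point_equation} it satisfies the purely multiplicative equation \cref{eq:utility_fixed_point} everywhere, which forces $\overline{\opt}^s_{a\in\stateactions(s)}\ExpectedSumSG{\sgtransitions}{s}{a}{\Utility^*}$ to equal $0$ on $S_\infty$ and $1$ on $S_0$.

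Given this, both implications are immediate. If $\ERisk^*\geq t$, then $v=\Utility^*$ solves \cref{eq:utility_equation} by \cref{stm:utility_smallest_fixpoint}, hence satisfies all non-threshold constraints of \cref{eq:utility_inequalities} by the observation above, and it satisfies $v(\initialstate)=\Utility^*(\initialstate)\leq b^{-\gamma t}$; so \cref{eq:utility_inequalities} has a solution. Conversely, if $v$ is any solution of \cref{eq:utility_inequalities}, then $v$ is in particular a solution of \cref{eq:utility_equation}, and since $\Utility^*$ is the pointwise smallest such solution we obtain $\Utility^*(\initialstate)\leq v(\initialstate)\leq b^{-\gamma t}$, i.e.\ $\ERisk^*\geq t$.

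The only genuinely delicate point is the equivalence in the second paragraph: one has to verify carefully that pairing the one-directional inequalities with the disjunctive "some action attains equality" clause recovers exactly the $\overline{\opt}$-equation with the correct $\min$/$\max$ orientation, which hinges on keeping track of the fact that through the reduction of \cref{stm:utility_reachability_equivalence} it is the Maximizer who minimizes $\Utility$ in the reachability game $\SG_R$. Everything else is routine manipulation of $-\tfrac{1}{\gamma}\log_b(\cdot)$ and an appeal to the smallest-fixed-point property already established.
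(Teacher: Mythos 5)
Your proof is correct and follows essentially the same route as the paper's: use that $\ERisk^*\geq t$ is equivalent to $\Utility^*(\initialstate)\leq b^{-\gamma t}$, observe that the non-threshold constraints of \cref{eq:utility_inequalities} coincide with \cref{eq:utility_equation} (so $\Utility^*$ itself is a witness in one direction, via \cref{stm:fixed_point_equation} and \cref{stm:utility_smallest_fixpoint}), and invoke the pointwise-smallest-fixed-point property for the converse. The extra care you take with the $\min$/$\max$ orientation and the boundary states in $S_0\cup S_\infty$ is a welcome elaboration of details the paper leaves implicit, but it is not a different argument.
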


\subsection{Decidability Subject to Shanuel's Conjecture}
From \cref{eq:utility_inequalities}, we  obtain a  conditional decidability result for the general case:
\begin{theorem} \label{thm:schanuel}
	Let all quantities, i.e.\ rewards, transition probabilities, the risk-aversion factor $\gamma$, and the basis $b$ be given as formulas in the language of reals with exponentiation (i.e.\ with functions $+$, $\cdot$, and $\exp \colon x \mapsto e^x$).
	Then, the entropic risk threshold problem for SGs is decidable subject to Schanuel's conjecture.
\end{theorem}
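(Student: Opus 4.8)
The plan is to reduce the threshold problem to deciding the truth of a single first-order sentence over the real exponential field $(\Reals,+,\cdot,\exp)$, whose first-order theory is decidable subject to Schanuel's conjecture by the theorem of Macintyre and Wilkie. By determinacy and the sufficiency of MD strategies established above, the threshold problem is equivalent to deciding $\ERisk^* \ge t$; and by the equivalence of \cref{sub:inequalities} this holds if and only if the system \eqref{eq:utility_inequalities} is satisfiable over the reals. So it remains to express ``\eqref{eq:utility_inequalities} has a solution'' as such a sentence, taking care of three things.

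First, \emph{base change}: since $b>1$, the logarithm $\ell_b \coloneqq \ln b$ is the unique real with $\exp(\ell_b)=b$, and $\ell_b>0$. I introduce a fresh existentially quantified variable for $\ell_b$ together with the constraint $\exp(\ell_b)=b$, and rewrite every power $b^{-\gamma\reward(s)}$ (and $b^{-\gamma t}$) occurring in \eqref{eq:utility_inequalities} as $\exp(-\gamma\,\reward(s)\,\ell_b)$, which uses only $+$, $\cdot$, $\exp$. Second, \emph{input quantities}: each reward, transition probability, $\gamma$, $b$, and $t$ is supplied by a formula of the same language that defines it uniquely; for each I introduce a fresh existential variable, conjoin that defining formula, and substitute the variable for the quantity throughout. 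One may also first test well-formedness of the input --- that each formula pins down a unique real and that the transition data really form probability distributions --- which amounts to deciding finitely many sentences of the theory and is harmless under the same hypothesis. Third, the \emph{anchor sets} $S_0$ and $S_\infty$ appearing in \eqref{eq:utility_inequalities}: these are purely qualitative, but computing them requires knowing which rewards are positive versus zero and which transition entries are positive versus zero. I obtain this classification by deciding, for each state and each transition triple, the corresponding sentence (e.g.\ ``$\reward(s)>0$'') of the real exponential field; with it in hand, $S_0$ and $S_\infty$ are produced by the standard polynomial-time game-graph algorithms alluded to in \cref{sub:reachability}. (If $S_0=\emptyset$ then $\States=S_\infty$, hence $\ERisk^*=\infty$ and the answer is ``yes'' for every finite $t$.)

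Assembling the pieces, satisfiability of \eqref{eq:utility_inequalities} becomes a sentence $\exists\,(v(s))_{s\in\States}\,\exists\,\ell_b\,\exists\,\vec x\;\;\psi$, where $\vec x$ collects one fresh variable per input quantity and $\psi$ is the conjunction of: the defining formulas of the input quantities; $\exp(\ell_b)=b \wedge \ell_b>0$; the boundary conditions $v(s)=0$ for $s\in S_\infty$ and $v(s)=1$ for $s\in S_0$; the inequality $v(\initialstate)\le\exp(-\gamma t\,\ell_b)$; the per-action inequalities $v(s)\le\exp(-\gamma\,\reward(s)\,\ell_b)\cdot\ExpectedSumSG{\sgtransitions}{s}{a}{v}$ at Maximizer states and the reversed inequalities at Minimizer states; and, for each $s$, the disjunction $\lOr_{a\in\stateactions(s)} v(s)=\exp(-\gamma\,\reward(s)\,\ell_b)\cdot\ExpectedSumSG{\sgtransitions}{s}{a}{v}$. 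This is a sentence in the language of reals with exponentiation, and by the equivalence above it is true exactly when $\ERisk^*\ge t$; deciding it --- which is possible subject to Schanuel's conjecture --- decides the threshold problem.

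The work here is mostly conceptual: the reachability reduction of \cref{stm:utility_reachability_equivalence} introduces transition probabilities $b^{-\gamma\reward(\cdot)}$ that are not exactly computable, and the crux is to observe that these probabilities, the base change, and the anchor sets are all \emph{uniformly definable} in a theory that is decidable modulo Schanuel, so the non-effectiveness of that reduction does not obstruct decidability. The step needing the most care is the qualitative preprocessing: the sets $S_0$, $S_\infty$ and the support structure rest on facts such as the sign of a reward, which are themselves only semi-decidable a priori, so decidability of the real exponential field must already be invoked there, and not only for the closing existential sentence.
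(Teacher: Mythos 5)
Your proposal is correct and follows essentially the same route as the paper: express the satisfiability of the system \cref{eq:utility_inequalities} as a sentence in the language of the reals with exponentiation and invoke the Macintyre--Wilkie decidability result subject to Schanuel's conjecture. The additional details you supply --- the base change via $\exp(\ell_b)=b$, the handling of symbolically given input quantities, and the observation that the qualitative preprocessing for $S_0$ and $S_\infty$ already requires deciding sign sentences in the same theory --- are sound elaborations of steps the paper leaves implicit.
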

\begin{proof}
	In this case, the existence of a solution to \cref{eq:utility_inequalities} can also be expressed as a sentence in the language of the reals with exponentiation.
	The corresponding theory is known to be decidable subject to Shanuel's conjecture (see e.g.\ \cite{lang1966introduction}) as shown by \cite{MacintyreWilkie1996}, and decidability of this theory is equivalent to the so-called \enquote{weak Schanuel's conjecture}.
\end{proof}
In particular, this allows us to treat instances with basis $b=e$.
Yet, even if all rewards, transition probabilities, and $\gamma$ are given as rational values, but the basis $b$ equals $e$, we do not know how to check the satisfiability of \cref{eq:utility_inequalities} without relying on the theory of the reals with exponentiation.
Note, however, that we do not need the \enquote{full power} of the exponential function:
All values appearing in an exponent in \cref{eq:utility_inequalities} are constants.
So, the restricted exponential function that agrees with $\exp$ on a closed interval $[a_1,a_2]$ and is zero outside of this interval is sufficient.
The theory of the reals with restricted exponentiation has some additional nice properties compared to the theory of the reals with full exponentiation:
For example, it allows for quantifier elimination by \cite{van1994elementary} and related works.
Nevertheless, this does not allow us to immediately obtain an unconditional decidability result.

\section{The Algebraic Case}
\label{sec:algebraic}
If  all occurring values are rational, then all numbers of the system of inequalities \cref{eq:utility_inequalities} are algebraic.
The results of this section establish that the threshold problem for such instances is decidable.
A detailed exposition of the results can be found in Appendix \ref{app:algebraic};
 an overview of the complexity results can also be found in \cref{tbl:overview}. 
Formally, we define:
\begin{definition}
	An \emph{algebraic instance} of the entropic risk threshold problem is an instance where all occurring values, i.e.\ the transition probabilities of the game $\SG$, all rewards assigned by the reward function $r$, the risk-aversion parameter $\gamma$, the basis $b$, and the threshold $t$, are rational and encoded as the fraction of co-prime integers in binary.
\end{definition}

In general, for algebraic instances, there is a  reduction of our problem to the existential theory of the reals, leading to the following result where $\exists \Reals$ denotes the complexity class of problems that are polynomial-time reducible to the existential theory of the reals:
\begin{restatable}{theorem}{generalalgebraic}\label{thm:generalalgebraic}
	For algebraic instances, the entropic risk threshold problem is decidable in $\exists \Reals$ and thus in PSPACE.
\end{restatable}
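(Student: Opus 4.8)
The plan is to show that, for an algebraic instance, the entropic risk threshold problem reduces in polynomial time to deciding the truth of an existential sentence over the ordered field of the reals. By \Cref{eq:utility_inequalities} together with the preceding lemma, $\ERisk^* \geq t$ holds if and only if the system of (in)equalities \eqref{eq:utility_inequalities} has a solution. The obstacle to stating this system directly as a first-order formula over $(\Reals, +, \cdot, \leq)$ is the presence of the terms $b^{-\gamma \reward(s)}$ and $b^{-\gamma t}$, which are in general irrational and not obviously expressible by a quantifier-free formula. The key observation is that, since $b$, $\gamma$, $\reward(s)$ and $t$ are all rational in an algebraic instance, each such term is an algebraic number: writing $\gamma = p/q$ and $\reward(s) = m/n$ with integers $p,q,m,n$, the number $c_s \coloneqq b^{-\gamma \reward(s)}$ is the unique positive real root of the polynomial $b^{pm} \cdot y^{qn} - 1 = 0$ (with appropriate sign/normalisation), and similarly for $b^{-\gamma t}$.

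First I would introduce, for every state $s$, a fresh existentially quantified variable $c_s$, and add the defining constraints $c_s > 0$ and $c_s^{qn} \cdot b^{pm} = 1$ (after clearing denominators so that all coefficients are integers encoded in binary, which is polynomial in the input size since the exponents $qn$ and $pm$ are bounded by a polynomial in the bit-length of $\gamma$ and $\reward$). The constraint $c_s > 0$ together with the polynomial equation pins $c_s$ down to the intended real value, because $y \mapsto y^{qn}$ is strictly monotone on the positive reals. Analogously introduce a variable $c_t$ for $b^{-\gamma t}$ with its defining constraint. Then I would replace every occurrence of $b^{-\gamma \reward(s)}$ in \eqref{eq:utility_inequalities} by $c_s$ and $b^{-\gamma t}$ by $c_t$, obtaining a finite conjunction/disjunction of polynomial (in)equalities with integer coefficients in the variables $\{v(s)\}_{s \in \States} \cup \{c_s\}_{s \in \States} \cup \{c_t\}$. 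The sets $S_0$ and $S_\infty$ are computable in polynomial time by standard reachability/attractor-type graph analyses on $\SG$ (they only depend on which rewards and transition probabilities are zero, not on $b$ or $\gamma$), so they can be treated as constants when writing down the formula. Prefixing the whole quantifier-free formula with existential quantifiers over all these variables yields a sentence $\varphi$ in the existential theory of the reals whose truth is equivalent to the solvability of \eqref{eq:utility_inequalities}, hence to $\ERisk^* \geq t$.

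Finally I would check that the construction is genuinely polynomial: the number of variables is $O(\cardinality{\States})$, the number of atomic formulas is $O(\cardinality{\States} \cdot \cardinality{\Actions} + \text{(size of the disjunction in the last line)})$, and the bit-sizes of all integer coefficients (in particular $b^{pm}$ and the numbers arising from clearing denominators in $\ExpectedSumSG{\sgtransitions}{s}{a}{v}$) are polynomial in the input, using that $b$ is a fixed rational given in binary and the relevant exponents are polynomial in the input length. This places the problem in $\exists\Reals$ by definition; membership in $\mathsf{PSPACE}$ then follows from Canny's result that the existential theory of the reals is decidable in $\mathsf{PSPACE}$. The main thing to be careful about — and what I expect to be the only genuinely delicate point — is the bound on the degree and coefficient size of the defining polynomials for the $c_s$: one must argue that clearing the rational exponent $\gamma \reward(s)$ into an integer power produces an exponent that is polynomially, not exponentially, large in the binary encoding of $\gamma$ and $\reward$, which holds because $\gamma$ and each $\reward(s)$ are fractions of integers written in binary, so their numerators and denominators — and hence the product exponents $qn$, $pm$ — have bit-length linear in the input.
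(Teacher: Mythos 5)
Your overall strategy is the same as the paper's: reduce to the solvability of \cref{eq:utility_inequalities}, replace each irrational constant $b^{-\gamma\reward(s)}$ (and $b^{-\gamma t}$) by a fresh existentially quantified variable pinned down as the unique positive real satisfying a polynomial relation, and invoke the PSPACE decision procedure for $\exists\Reals$. However, there is a genuine gap in your size analysis, and it is exactly at the point you yourself flag as delicate. Writing $\gamma = p/q$ and $\reward(s) = m/n$ with $p,q,m,n$ encoded in \emph{binary}, the \emph{numerical values} of the exponents $qn$ and $pm$ are exponential in the input length; only their \emph{bit-lengths} are linear. Your sentence ``the product exponents $qn$, $pm$ have bit-length linear in the input'' is true but does not support the conclusion ``the exponents are bounded by a polynomial in the bit-length,'' which is false. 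Consequently, neither the term $y^{qn}$ (which in the language $(\Reals,+,\cdot)$ must be spelled out as an iterated product) nor the integer coefficient $b^{pm}$ (whose binary representation has roughly $pm\cdot\log_2 u$ bits for $b = u/v$) can be written down directly in polynomial size, so the formula you describe is exponentially large as stated.

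The paper closes this gap with repeated squaring: it introduces $O(\log(qn))$ auxiliary existential variables $c_0 = y$, $c_{i} = c_{i-1}\cdot c_{i-1}$ and $d_0 = b$, $d_i = d_{i-1}\cdot d_{i-1}$, and expresses $y^{qn} = b^{pm}$ as an equality of products of those $c_i$ and $d_i$ indexed by the $1$-bits of $qn$ and $pm$. This yields a defining subformula of size $O(\log(pm) + \log(qn) + \log u + \log v)$, hence a polynomial-time reduction to $\exists\Reals$. Everything else in your proposal (the positivity constraint $c_s>0$ to select the intended root, the polynomial-time computability of $S_0$ and $S_\infty$, the counting of variables and atoms) is fine; you only need to replace the naive encoding of the exponentiations by the repeated-squaring gadget.
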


\begin{remark}
	We note that already in simple MCs, solutions might not be rational.
	Consider a MC yielding the a distribution over total rewards of $\{0 \mapsto \frac{1}{2}, 1 \mapsto \frac{1}{2}\}$.
	For $b = 2$ and $\gamma = 1$, $\ERisk^*$ equals $- \log_2(\tfrac{1}{2} \cdot 2^0 + \tfrac{1}{2} \cdot 2^{-1}) = \log_2(\tfrac{4}{3}) = 2 - \log_2(3)$, an irrational number.
	Note that the input comprises only small, rational values.
	The threshold problem for the negative exponential utility suffers from the same issue:
	For $b = 2$ and $\gamma = \tfrac{1}{2}$, $\Utility^*$ on the same MC equals $\tfrac{1}{2} \cdot 2^0 + \tfrac{1}{2} \cdot 2^{-\frac{1}{2}} = \tfrac{1}{2(1 + \sqrt{2})}$, again an irrational number.

On the one hand, this indicates that already for MCs, there is no obvious improvement of the complexity upper bound.
Although  the negative exponential utility is the solution to a linear system of equations in the MC case, the main obstacle for efficient computations is the complicated form of the occurring numbers.
On the other hand, the \enquote{bottleneck} for complexity lies already in the MC case:
If $C$ is the threshold problem  for  MCs, the problem for SGs can be solved in $(\Sigma_2^P)^C \cap (\Pi_2^P)^C$ due to determinacy and optimality of MD strategies:
In a game $\SG$, we have $\ERisk^{\ast}_{\SG}\geq t$ if, for all MD-strategies for the Minimizer, there is an MD-strategy for  the Maximizer such that in the resulting MC $\ERisk\geq t$, or equivalently, if there is an MD-strategy for the Maximizer such that for all MD-strategies for the Minimizer  $\ERisk\geq t$  in the resulting MC.
This puts the problem into $(\Sigma_2^P)^C$ and  $(\Pi_2^P)^C$.
\end{remark}
For \cref{thm:generalalgebraic}, we use the standard decision procedure for the existential theory of the reals as a \enquote{black box} and do not make use of the special form of our problem.
To exploit the specific structure of the system of inequalities, we note that for explicit computations on algebraic numbers the following two  quantities are relevant for the resulting computational complexity:
Firstly, the degree of the field extension of $\mathbb{Q}$ in which the computation can be carried out. Secondly, the bitsize of the coefficients of the minimal polynomials of the involved algebraic numbers (see, e.g., \cite{AdlerB94,Beling01}).  Alternatively, the bitsize of the representations of the algebraic numbers in a fixed basis of the  field extension in which the computations can be carried out can be used. Note that the size of the basis is precisely the degree of that field extension.
Motivated by these observations, we  consider \emph{small algebraic instances}, which allow us to prove that all occurring algebraic numbers have a sufficiently small representation.%
\begin{definition}
	A \emph{small algebraic instance} of the entropic risk threshold problem consists of a SG $\SG$ with rational transition probabilities, an integer reward function $r$, a rational risk-aversion parameter $\gamma$, a rational basis $b$, and a rational threshold $t$.
	Moreover, the rewards, $\gamma$, and $t$ are encoded in unary, and as the fraction of co-prime integers encoded in unary, respectively.
	The remaining rational numbers are encoded as the fraction of co-prime integers in binary.
	If $\SG$ is an MDP or a MC, we call the instance a small algebraic instance of an MDP or a MC.
\end{definition}
\begin{remark}%
	For simplicity, we assume for small algebraic instances that all rewards are in $\Naturals$.
	If this is not the case, we can multiply all rewards with the least common multiple $D$ of the denominators of the rewards and  use a new risk-aversion parameter $\gamma' = \gamma / D$.
	The resulting negative exponential utility is not affected by this transformation.
	The change of the optimal entropic risk  by a factor of $D$ can  be addressed  by also rescaling the threshold $t' = t \cdot D$.
	Nevertheless, note that this affects the encoding size of the risk-aversion factor $\gamma$.
\end{remark}
Relying on algorithms for explicit computations in algebraic numbers \cite{AdlerB94,Beling01}, we obtain:
\begin{restatable}{theorem}{smallalgebraicthreshold}\label{thm:smallalgebraicthreshold}
	For small algebraic instances, the entropic risk threshold problem: (a) belongs  to $\mathsf{NP}\cap\mathsf{coNP}$ for SGs; and (b) can be solved in polynomial time for MDPs or MCs.
\end{restatable}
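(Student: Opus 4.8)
The plan is to reduce everything to explicit computation in a single algebraic number field whose degree and representation size are polynomial in the input. Then the threshold problem for MCs is a linear system over that field, the one for MDPs is a linear program over it, and the SG case follows from determinacy and the existence of optimal MD strategies for both players, which have already been established.

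The key preliminary step is to pin down the field. Write $b = b_1/b_2$, $\gamma = \gamma_1/\gamma_2$, $t = t_1/t_2$ in lowest terms. In a small algebraic instance $\gamma_1,\gamma_2,t_1,t_2$ and all rewards are encoded in unary, hence polynomially bounded, while $b_1,b_2$ have polynomial bit size. Put $q = \gamma_2 t_2$ and let $\theta = b^{1/q}$ be the positive real root of $b_2 x^q - b_1$. Then $K := \mathbb{Q}(\theta)$ has degree $d \le q$ over $\mathbb{Q}$, which is polynomial, and the minimal polynomial of $\theta$, being a factor of $b_2 x^q - b_1$, still has polynomially bounded coefficients by Mignotte's factor bound (the bound is exponential only in the degree, which here is the polynomial $q$). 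Every coefficient occurring in \cref{eq:utility_equation} equals $b^{-\gamma r(s)} = \theta^{-\gamma_1 t_2 r(s)}$, and the threshold constant is $b^{-\gamma t} = \theta^{-\gamma_1 t_1}$; all the exponents are products of polynomially bounded quantities. Reducing these powers modulo the minimal polynomial of $\theta$ yields representations of all of them in the power basis $1,\theta,\dots,\theta^{d-1}$ of polynomially bounded bit size. For elements of $K$ presented this way, addition, multiplication, division and sign testing against $0$ (via an isolating interval for $\theta$ and a root-separation bound) can be performed in polynomial time by the algorithms of \cite{AdlerB94,Beling01}.

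For a Markov chain, by \cref{stm:utility_reachability_equivalence} the value $\Utility^*(\initialstate)$ is the probability of reaching $S_0$ in the Markov chain $\SG_R$; after identifying $S_0$ and $S_\infty$ by elementary polynomial-time graph procedures and substituting the boundary values $1$ on $S_0$ and $0$ on $S_\infty$, this probability is the unique solution of the resulting linear system on $S_? := \States \setminus (S_0 \cup S_\infty)$ — unique because no bottom SCC of $\SG_R$ lies inside $S_?$ (as in standard reachability), so that solution equals $\Utility^*$ on $S_?$ by \cref{stm:utility_smallest_fixpoint}. Solving this linear system over $K$ — either by Gaussian elimination with the usual Cramer/Hadamard-type bounds to keep intermediate entries of polynomial size, or as a degenerate case of linear programming over $K$ \cite{AdlerB94} — produces $\Utility^*(\initialstate) \in K$ in polynomial time, and $\ERisk^*(\initialstate) \ge t$ holds iff $\Utility^*(\initialstate) \le b^{-\gamma t}$, which is decided by computing the sign of the element $b^{-\gamma t} - \Utility^*(\initialstate)$ of $K$. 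For an MDP, \cref{stm:utility_reachability_equivalence} shows that $\Utility^*$ is an optimal reachability value of the reachability MDP $\SG_R$; after the analogous polynomial-time preprocessing, it is the value of the standard linear program characterising optimal reachability probabilities in MDPs, whose coefficients lie in $K$. This linear program is solved in polynomial time by the linear programming algorithms over algebraic numbers of \cite{AdlerB94,Beling01}, and the threshold is again checked by a sign comparison in $K$; this proves part~(b).

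For an SG, determinacy together with optimality of MD strategies gives: $\ERisk^* \ge t$ iff some MD Maximizer strategy $\stratmax$ guarantees $\inf_{\stratmin} \ERiskSG<\SG,\initialstate><\gamma>[(\stratmax,\stratmin)] \ge t$, and $\ERisk^* < t$ iff some MD Minimizer strategy $\stratmin$ guarantees $\sup_{\stratmax} \ERiskSG<\SG,\initialstate><\gamma>[(\stratmax,\stratmin)] < t$. Guessing such a strategy and fixing it turns the instance into a small algebraic instance of an MDP, solved in polynomial time by part~(b); hence the SG threshold problem lies in $\mathsf{NP}$ (guess $\stratmax$) and, via the complement, in $\mathsf{coNP}$ (guess $\stratmin$), which is part~(a). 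The step I expect to require the most care is the number-field bookkeeping of the second paragraph: one must verify that the unary encodings in the definition of a small algebraic instance really do confine \emph{every} arising algebraic number — the minimal polynomial of $\theta$, the powers $\theta^{-k}$, the intermediate results of Gaussian elimination, and the numbers produced by the LP solver — to a field of polynomial degree with polynomial-size representations, so that the algebraic-number machinery of \cite{AdlerB94,Beling01} runs in polynomial time. Everything else (the graph preprocessing for $S_0$ and $S_\infty$, the reachability LP for MDPs, and determinacy for SGs) is standard once the arithmetic is under control.
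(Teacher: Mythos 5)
Your proposal is correct and follows essentially the same route as the paper: confine all arising coefficients to an extension $\mathbb{Q}(b^{1/q})$ whose degree and representation sizes are polynomial because the rewards, $\gamma$ and $t$ are unary-encoded, solve the resulting linear system/program over algebraic numbers with the algorithms of \cite{AdlerB94,Beling01} for MCs and MDPs, and get $\mathsf{NP}\cap\mathsf{coNP}$ for SGs by guessing an MD strategy for one player and invoking determinacy. The only (cosmetic) difference is the number-field bookkeeping: where you control the minimal polynomial of $b^{1/q}$ via Mignotte's factor bound and fold the denominator of $t$ into $q$, the paper first normalizes $b$ and $q$ so that $x^q-b$ is irreducible and then exhibits the minimal polynomials of the individual coefficients $b^{-\ell/q}$ explicitly, and it handles the Maximizer-MDP case by passing to the dual LP so that everything reduces to feasibility checking.
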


While the mentioned results concern the threshold problem, we can even go a step further in small algebraic instances of MCs.
Here, the system of inequalities simplifies to a linear system of equations, which we can solve \emph{explicitly} in the algebraic numbers.
For small algebraic instances, this is possible in polynomial time yielding the following result.
\begin{restatable}{theorem}{smallalgebraiccomputation}\label{thm:algebraic_optimal_value}
	For small algebraic instances,  an explicit representation of $\Utility^*$ can be computed in: (a) polynomial time for MCs; and (b)
	 in polynomial space for  SGs and MDPs.
\end{restatable}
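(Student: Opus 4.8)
The plan is to show that, for a small algebraic instance, all numbers occurring in the characterisation of $\Utility^*$ live in a single number field $K \supseteq \Rationals$ of degree polynomial in the input size, and then to compute $\Utility^*$ by exact arithmetic in $K$. Write $\gamma = p/q$ in lowest terms; since $\gamma$ is encoded in unary, $p$ and $q$ are polynomial in the input size, and the rewards $\reward(s)$ are nonnegative integers encoded in unary. Let $\beta$ be the positive real root of $x^q - b$ (unique since $b > 1$) and set $K \coloneqq \Rationals(\beta)$. Then $\beta^q = b \in \Rationals$, so $d \coloneqq [K : \Rationals] \le q$; a minimal polynomial of $\beta$ is the monic irreducible factor of $x^q - b$ having $\beta$ as a root, obtainable by polynomial-time factorisation of $x^q - b$ over $\Rationals$ together with a sufficiently precise numeric approximation of $\beta$. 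Every coefficient $b^{-\gamma \reward(s)} = \beta^{-p\reward(s)}$ occurring in \cref{eq:utility_equation} lies in $K$ and, using $\beta^q = b$, admits a representation of polynomial bitsize in the $\Rationals$-basis $1, \beta, \dots, \beta^{d-1}$ of $K$ (the exponent $p\reward(s)$ being polynomial). Consequently, addition, multiplication and inversion in $K$, as well as deciding the sign of a $K$-element from its basis representation, are polynomial-time operations, and bitsizes grow only polynomially throughout a Gaussian elimination performed over $K$~\cite{AdlerB94,Beling01}.

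For part~(a), let the instance be a Markov chain. By \cref{stm:utility_reachability_equivalence}, $\Utility^*$ equals the probability of reaching $S_0$ in the Markov chain $\SG_R$, whose transition probabilities $\sgtransitions_R(s,s') = b^{-\gamma\reward(s')} \cdot \sgtransitions(s,s')$ lie in $K$. I would first compute $S_0$ (a polynomial-time graph computation) and then the set $U$ of states from which $S_0$ is reachable with positive probability; the latter is a graph search in the support of $\SG_R$, which among states of $\SG$ coincides with the support of $\SG$ since $b^{-\gamma\reward(s')} > 0$ always. The desired reachability probabilities are then the unique solution of the linear system $v(s) = 1$ for $s \in S_0$, $v(s) = 0$ for $s \notin U$, and $v(s) = \sum_{s' \in \States} \sgtransitions_R(s,s') \cdot v(s')$ for $s \in U \setminus S_0$, all of whose coefficients lie in $K$. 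Solving this system by Gaussian elimination over $K$ yields, in polynomial time, an explicit representation of $\Utility^*$ as a vector over $K$.

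For part~(b), consider an SG. By determinacy and sufficiency of MD strategies (\cref{sec:basic_porperties}), $\Utility^* = \min_{\stratmax \in \StrategiesMD<\SG>} \max_{\stratmin \in \StrategiesMD<\SG>} \Utility^{(\stratmax,\stratmin)}$, where the $\min$ and $\max$ are taken componentwise over states, and for each MD profile $(\stratmax,\stratmin)$ the value $\Utility^{(\stratmax,\stratmin)}$ is the value of the Markov chain $\SG^{(\stratmax,\stratmin)}$, hence by part~(a) a vector over $K$ of polynomial bitsize. A polynomial-space algorithm enumerates the MD Maximizer strategies $\stratmax$ one at a time; for each it enumerates the MD Minimizer strategies $\stratmin$, computes $\Utility^{(\stratmax,\stratmin)}$ as in part~(a), and keeps the componentwise running maximum over $\stratmin$ (the comparisons of $K$-elements being polynomial-time); it then keeps the componentwise running minimum of these maxima over $\stratmax$ and outputs it. Each Markov-chain value is computed in polynomial time, each strategy and each intermediate $K$-element has polynomial bitsize, and only a bounded number of them are stored at any moment, so the algorithm uses polynomial space. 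As MDPs arise as the special case $\StatesMin = \emptyset$, this covers them too; we do not claim a better bound for MDPs, since — although the MDP threshold problem is polynomial by \cref{thm:smallalgebraicthreshold} — recovering the exact value of $\Utility^*$ seems to require solving \cref{eq:utility_equation} over $K$, which we do not know to be possible in polynomial time.

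The main obstacle lies in the first paragraph: arguing that a \emph{small} algebraic instance confines every occurring number to a single number field of polynomially bounded degree — precisely what the unary encodings of $\gamma$ and the rewards buy — and then controlling the bitsize blow-up of exact arithmetic over $K$ in the linear-algebra step, for which one invokes the known polynomial bounds for Gaussian elimination over number fields~\cite{AdlerB94,Beling01}. For part~(b), the additional point is that one cannot afford to store all exponentially many MD profiles, so the argument hinges on each per-profile computation returning an exact, polynomial-bitsize $K$-representation, which keeps the running optimum within polynomial space.
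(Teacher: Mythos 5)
Your proposal is correct and follows essentially the same route as the paper: confine all coefficients to the single number field $\Rationals(b^{1/q})$, whose degree is polynomially bounded because $\gamma$ and the rewards are in unary; solve the MC system by exact Gaussian elimination over that field with polynomial bitsize control; and handle SGs/MDPs by enumerating MD strategy profiles in polynomial space and comparing the resulting algebraic values. The only cosmetic differences are that the paper normalizes the instance so that $x^q-b$ is irreducible (rather than factoring it) and proves the bitsize bound for Gaussian elimination directly via Edmonds' fraction-free method, whereas you delegate both to standard polynomial-time algorithms.
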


\section{Approximation Algorithms} \label{sec:approximation}

The results of the previous section suggest, depending on the form of the input, a polynomial-space algorithm or even worse in the general case.
Clearly, this is somewhat unsatisfactory for practical applications.
Recall that the difficulties are due to the occurring irrational transition probabilities.
In the hope that we can work with approximations of these numbers, we now aim to identify an approach which allows us to approximate the correct answer, i.e.\ compute a value close to the optimal entropic risk that the Maximizer can ensure.
Again, fix an SG $\SG$, reward function $\reward$, risk parameter $\gamma$, and risk basis $b$ throughout this section.
Then, given a precision $\varepsilon > 0$, we aim to compute a value $v$ such that $\abs{{\ERisk^*} - v} < \varepsilon$, i.e.\ an approximation with small absolute error.

Since entropic risk is the logarithm of utility, we need to obtain an approximation of $\Utility^*$ to a sufficiently small \emph{relative} error.
Concretely, we need to compute a value $v_U$ such that $b^{- \gamma \varepsilon} \leq v_U / \Utility^* \leq b^{\gamma \varepsilon}$.
Then, $v = - \frac{1}{\gamma} \log_b(v_U)$ yields an approximation, since
\begin{gather*}
	{\ERisk^*} - v = - \tfrac{1}{\gamma} \log_b(\Utility^*) + \tfrac{1}{\gamma} \log_b(v_U) = \tfrac{1}{\gamma} \log_b(v_U / \Utility^*) = (*) \\
	(*) \geq \tfrac{1}{\gamma} \log_b(b^{-\gamma \varepsilon}) = -\varepsilon \quad \text{ and } \quad (*) \leq \tfrac{1}{\gamma} \log_b(b^{\gamma \varepsilon}) = \varepsilon.
\end{gather*}
(When we are interested in a concrete value for $v$, we need to determine $v_U$ with a slightly higher precision and then approximate $\log_b(v_U)$ sufficiently.)
Now, in order to approximate $\Utility^*$, we still need to deal with a system comprising potentially irrational transition probabilities.
We argue that the occurring values $b^{-\gamma \reward(s)}$ can be \enquote{rounded} to a sufficient precision while keeping the overall relative error small.
Using techniques from \cite{chatterjee2012robustness}, in \cref{app:approximation} we provide an effective way to compute a game $\SG_{\approx}$, which behaves \enquote{similarly} to the reachability game $\SG_R$ from \cref{stm:utility_reachability_equivalence}.
Once $\SG_{\approx}$ is computed, we can employ classical solution methods, such as linear equation solving for MCs, linear programming for MDPs, or, e.g., quadratic programming for SGs leading to an algorithm in polynomial space for SGs.
\begin{theorem}\label{thm:approximation}
	In MCs and MDPs, the optimal value $\ERisk^\ast$ can be approximated up to an absolute error of $\varepsilon$ in time polynomial in the size of the system, $-\log(\varepsilon)$, $\log  b$, $\gamma\cdot r_{\max}$, and $1/(\gamma\cdot r_{\min})$, where $r_{\max}$ and $r_{\min}$ are the largest and smallest occurring non-zero rewards, respectively.
	For SGs, this is possible in polynomial space.
\end{theorem}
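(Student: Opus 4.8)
The plan is to build on the reachability characterization (\cref{stm:utility_reachability_equivalence}): $\Utility^*$ equals the minimal reachability probability of $S_0$ in $\SG_R$, whose transition probabilities $\sgtransitions_R(s,a,s') = b^{-\gamma \reward(s')}\cdot\sgtransitions(s,a,s')$ and $\sgtransitions_R(s,a,\underline{s}) = 1 - b^{-\gamma\reward(s)}$ involve the irrational numbers $b^{-\gamma\reward(s)}$. The key observation is that there are only as many distinct such numbers as there are distinct reward values, and each satisfies $b^{-\gamma r_{\max}} \le b^{-\gamma\reward(s)}\le b^{-\gamma r_{\min}} < 1$ (for nonzero rewards). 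First I would fix a target relative precision $\eta > 0$ (to be chosen as a function of $\varepsilon$, $\gamma$, $b$, and the model size) and compute, for each distinct nonzero reward $r$, a rational $\tilde{c}_r$ with $|\tilde{c}_r - b^{-\gamma r}|$ small enough that $1 - \eta \le \tilde{c}_r / b^{-\gamma r} \le 1 + \eta$; this requires only $\BigO(-\log\eta + \gamma r_{\max}\log b)$ bits, since it amounts to approximating a fixed algebraic-or-transcendental constant in a known interval, e.g.\ via the series for $b^{-\gamma r} = \exp(-\gamma r \ln b)$. Replacing each $b^{-\gamma\reward(s)}$ by $\tilde{c}_{\reward(s)}$ (and renormalizing the sink probability) yields a game $\SG_\approx$ with rational transition probabilities of polynomial bitsize.

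Second, I would invoke the robustness results of \cite{chatterjee2012robustness}: a multiplicative $\eta$-perturbation of all transition probabilities of a stochastic game changes reachability values by at most a factor that is polynomial in $\eta$ and exponential only in a quantity controlled by the smallest reachability probability. The subtlety is that $\Utility^*$ can be exponentially small (it is a product over a path), so an additive error bound is useless and we genuinely need the relative (multiplicative) guarantee; the bound from \cite{chatterjee2012robustness} is stated exactly in the multiplicative form we need, giving $b^{-\gamma\varepsilon}\le \val_{\SG_\approx}(\initialstate)/\Utility^* \le b^{\gamma\varepsilon}$ once $\eta$ is chosen appropriately, namely $-\log\eta$ polynomial in the model size, $-\log\varepsilon$, $\log b$, $\gamma r_{\max}$, and $1/(\gamma r_{\min})$. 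One has to be a little careful that $S_0$, $S_\infty$, and the underdetermined fixed points survive the perturbation: since $S_0$ and $S_\infty$ depend only on which rewards are zero and on the graph structure, not on the exact probabilities, they are unchanged, and the reachability value in $\SG_\approx$ is the pointwise smallest fixed point by the standard theory, matching \cref{stm:utility_smallest_fixpoint}.

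Third, having reduced to computing (an approximation of) the minimal reachability probability of $S_0$ in the rational game $\SG_\approx$: for MCs this is a linear system, solvable in polynomial time; for MDPs it is a linear program, again polynomial time; for SGs, it is the value of a simple stochastic game, computable in $\mathsf{NP}\cap\mathsf{coNP}$ — but to get a \emph{polynomial-space} algorithm one can either solve the associated quadratic program (existential theory of the reals, hence $\mathsf{PSPACE}$) or run value iteration for exponentially many steps with bounded precision, which fits in polynomial space. Finally, from $v_U := \val_{\SG_\approx}(\initialstate)$ I would output $v := -\tfrac{1}{\gamma}\log_b(v_U)$ (computing the logarithm to sufficient precision, which costs only additional bits polynomial in the stated parameters), and the chain of inequalities displayed just before the theorem statement shows $|\ERisk^* - v| < \varepsilon$. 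The main obstacle I anticipate is the careful bookkeeping in the second step: making the multiplicative error bound of \cite{chatterjee2012robustness} yield precisely the claimed polynomial dependence on $1/(\gamma r_{\min})$ — the term $1/(\gamma r_{\min})$ appears because the sink probabilities $1 - b^{-\gamma\reward(s)}$ can be as small as $\gamma r_{\min}\ln b + \BigO((\gamma r_{\min})^2)$, so the relative perturbation of these near-zero quantities, not the near-one quantities, is what forces $\eta$ to scale with $\gamma r_{\min}$.
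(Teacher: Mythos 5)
Your proposal is correct and follows essentially the same route as the paper: reduce to the reachability game $\SG_R$, round the irrational probabilities $b^{-\gamma\reward(s)}$ to rationals of polynomially bounded bitsize, invoke the multiplicative robustness bound of \cite{chatterjee2012robustness} to control the relative error of the reachability value, solve the resulting rational game by standard means, and convert back via the logarithm. You also correctly pinpoint the source of the $1/(\gamma r_{\min})$ dependence in the sink probabilities $1 - b^{-\gamma\reward(s)}$, which is exactly the term $-\log_2(1-b^{-\gamma r_{\min}})$ appearing in the paper's rounding lemma.
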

In particular, for fixed $b$ and $\gamma$, and bounded rewards (both from above and below), we obtain a PTIME solution for MC and MDP.
In general, the procedure is exponential for SG.
Alternatively, we can also apply different approaches such as value iteration \cite{SG_EC_LICS2023}.
\begin{remark}
	We note the connection to the small algebraic case:
	The \enquote{limiting factor} in both cases is the (size of the) product of $\gamma$ and the state rewards.
	If these are fixed or given in unary, respectively, the complexity of our proposed algorithms is significantly reduced.
\end{remark}
As a final note, recall that we do not assume $\gamma$ or the transition probabilities to be rational.
We only require that we can expand their binary representation to arbitrary precision.
Then, we can conservatively approximate their logarithm to evaluate the required rounding precision and approximate the transition probabilities of $\SG_{\approx}$ in the same way.

\section{Conclusion}

We applied  the entropic risk  to total rewards in SGs to capture risk-averse behavior in these games. The objective forces agents to achieve a good overall performance while keeping the chance of particularly bad outcomes small.
We showed that SGs with the entropic risk as payoff function are determined and admit optimal MD-strategies. This reflects
the
time-consistency of  entropic risk and makes   entropic risk   an appealing objective as, in contrast, the optimization of other risk-averse objective functions that have been studied on MDPs in the literature require strategies with large memory or complicated randomization.

Computationally,
 difficulties arise due to the involved exponentiation leading to irrational or even transcendental numbers. For the general case, we obtained decidability of the threshold problem  only subject to Shanuel's conjecture while for purely rational inputs, the problem can be solved via a reduction to the existential theory of the reals. 
 Additional restrictions on the encoding of the input allowed us to obtain better upper bounds. Further, we provided an approximation algorithm for the optimal value.
For an overview of the results, see \cref{tbl:overview}.

A question  that is left open is whether the entropic risk threshold problem for algebraic instances of MCs can be solved more efficiently than by the polynomial-time reduction to the existential theory of the reals. This case constitutes a bottleneck in the complexity.
Furthermore, we worked with non-negative rewards, which made a
reduction from games with the entropic risk objective to reachability games possible. Dropping the restriction to non-negative rewards constitutes an interesting direction of future research, in which additional difficulties arise and a reduction to reachability is not possible anymore. 
A further direction for future work is the experimental evaluation of the proposed algorithms to assess their practical applicability as well as to investigate the behavior of the resulting optimal strategies.
In particular, it might be interesting to investigate the \enquote{cost} of risk-awareness, namely how much the expected total reward of a risk-aware strategy differs from a purely expectation maximizing one on realistic systems.

\bibliographystyle{IEEEtran}
\bibliography{main}

\clearpage
\appendix
\section{Additional Discussion for \ref{sec:basic_porperties}} \label{app:basic_properties}

\subsection{Omitted Proofs} \label{app:basic_properties:proofs}

\fixedpoint*

\begin{proof}
	Fix an arbitrary Maximizer state $s$ (the statement follows analogous for Minimizer states).
	Let $\strategy = (\stratmax^*, \stratmin^*)$ an optimal strategy profile, ensuring a value of $\Utility^*$ in $s$.
	Observe that
	\begin{equation*}
		\Utility^*(s) = \ExpectationSG<\SG, s><\strategy>[b^{-\gamma \totalreward}] = \integral<\infinitepath \in \Infinitepaths<\SG>>{b^{- \gamma \sum_{i=1}^{\infty} \reward(\infinitepath_i)}}{\ProbabilitySG<\SG, s><\strategy>}.
	\end{equation*}
	We can rewrite $b^{- \gamma \sum_{i=1}^{\infty} \reward(\infinitepath_i)} = b^{-\gamma \reward(\infinitepath_1)} \cdot b^{-\gamma \sum_{i=2}^\infty \infinitepath_i}$.
	Inserting this equality in the above equation, splitting the set of paths by the action taken and the successor state, and shifting indices by 1 in the integral, we get
	\begin{equation*}
		\ExpectationSG<\SG, s><\strategy>[b^{-\gamma \totalreward}] = b^{-\gamma \reward(s)} \sum_{a \in \stateactions(s), s' \in \States} \sgtransitions(s, a, s') \ExpectationSG<\SG, s'><\strategy_{(s, a)}>[b^{-\gamma \totalreward}],
	\end{equation*}
	where $\strategy_{(s, a)}$ is the strategy profile both players follow after seeing $(s, a)$.
	Now, observe that $\ExpectationSG<\SG, s'><\strategy_{(s, a)}>[b^{-\gamma \totalreward}] = \Utility^*(s')$ for any $s'$ where $\sgtransitions(s, a, s') > 0$:
	If it were  different, the respective player could employ a better strategy in this particular case, ensuring a different value in $s$.
	Thus,
	\begin{equation*}
		\Utility^*(s) = b^{-\gamma \reward(s)} \sum_{a \in \stateactions(s), s' \in \States} \sgtransitions(s, a, s') \Utility^*(s').
	\end{equation*}
	Now, it immediately follows that taking any optimal action ensures the best outcome, proving the result.
\end{proof}

\reductionreachability*

\begin{proof}[Sketch]
	We first provide a brief proof sketch.
	Essentially, we first show that $\States_0$ and $\States_\infty$ are \enquote{sinks} of the play, i.e.\ that any optimal strategy can and will keep the play inside either set once it is reached.
	Moreover, they are the only sinks in that sense, i.e.\ under any optimal strategy either of them is reached with probability one.

	The remainder of the proof then shows that the \enquote{discounted} reachability (achieved by the transitions to $\underline{s}$) corresponds to the negative utility and that optimal strategies correspond.
	From $\SG$ to $\SG_R$, we split all paths that reach $\States_0$ by their prefix, observing that the total reward obtained by the path equals that of the prefix (since once in $\States_0$ no more reward is obtained), and all other paths obtain a reward of $\infty$ with probability 1, meaning they reach $\underline{s}$ with probability 1, too.
	From $\SG_R$ to $\SG$, we again split the paths reaching $\States_0$ by their prefix and notice that the probability of those prefixes exactly corresponds to the discounted total reward they would achieve in $\SG$.
\end{proof}

\begin{proof}
	Before we tackle the actual statement, we discuss some important properties of $S_0$.

	\underline{Properties of $S_0$ and $S_\infty$:}
	We first show that $S_0$ represents a \enquote{sink} of the play, in the sense that any optimal Minimizer strategy will keep the play inside $S_0$.
	Moreover, this also proves that such a strategy always exists.
	Observe that by definition there exists a (memoryless deterministic) strategy of Minimizer ensuring that Maximizer can only obtain a total reward of $0$ for any state in $S_0$ and consequently a utility of $1$.
	This already shows that by keeping the play inside $S_0$, Minimizer acts optimally (there is no better outcome).
	One can also show that remaining inside $S_0$ is the only optimal play, however we do not require this statement.
	We analogously show a similar property for $S_\infty$, i.e.\ an optimal strategy of the Maximizer can always keep the play in $S_\infty$.
	We note that these strategies are rather simple to obtain, indeed for any state $s \in S_0$ we can just take any action $a \in \stateactions(s)$ with $\support(\sgtransitions(s, a)) \subseteq S_0$, dually for $S_\infty$.
	In particular, they are memoryless and deterministic.

	Finally, we show that $S_0$ and $S_\infty$ are the only two sinks, i.e.\ $\ProbabilitySG<\SG, \initialstate><\strategy^*>[\reach (S_0 \union S_\infty)] = 1$ for any utility optimal strategy profile $\strategy^*$.
	(Recall that both sets are absorbing for optimal strategies.)
	Suppose there exists some set of states $S'$ disjoint from both $S_0$ and $S_\infty$ in which paths remain forever with positive probability under $\strategy^*$.
	There necessarily exists a subset of $S'$ which are visited infinitely often with probability 1 under $\strategy$, let $S'$ equal this subset.
	If all states in $S'$ have a reward of $0$, then necessarily $S' \subseteq S_0$:
	Since $\strategy^*$ is optimal but nevertheless a reward of $0$ is obtained in $S'$, the Minimizer can ensure a total reward of $0$ in $S'$.
	Dually, if there is some state with non-zero reward in $S'$, this state occurs infinitely often with probability 1, ensuring an infinite expected reward, i.e.\ $S' \subseteq S_\infty$.

	To summarize, we have that $S_0$ and $S_\infty$ are the unique \enquote{sinks} of the game under \emph{optimal strategies}.
	With this in place, we can proceed with the main proof.

	\medspace

	We need to show equivalence of optimal values and a correspondence between strategies.
	In particular, we shall prove that any optimal strategy profile in $\SG$ is directly equivalent to an optimal strategy in $\SG_R$, and for the other direction we only need to modify the decisions of Minimizer on $S_0$ to keep the play inside.
	We note a subtlety: We first prove that any optimal strategy in $\SG$ achieves a reachability probability of $\Utility^*$ in $\SG_R$.
	This alone does not yet show that $\Utility^*$ is the optimal reachability probability.
	We then also prove that optimal strategies for reaching $S_0$ on $\SG_R$ corresponds to a strategy achieving $\Utility^*$ on $\SG$.
	In the following, we slightly abuse notation and identify strategies on $\SG$ with strategies on $\SG_R$, extended by the meaningless choice in $\underline{s}$.

	\underline{From $\SG$ to $\SG_R$:}
	We start with the forward direction.
	Let $\strategy^* = (\stratmax^*, \stratmin^*)$ be utility-optimal strategies in $\SG$.
	We shall argue that $\Utility^* = \Utility(\strategy^*) = \ProbabilitySG<\SG_R, \initialstate><\strategy^*>[\reach S_0]$.
	To this end, we investigate the set $\reach S_0$ more closely.
	Every path which reaches a set does so after finitely many steps.
	Thus, we split this set by the finite prefix after which paths reach their respective destination (a countable set).
	Formally, for a prefix $\finitepath$, we write $\reach_\finitepath S_0$ to denote the set of all paths which begin with $\finitepath$ and reach $S_0$ exactly at the end of $\finitepath$.
	For simplicity, we define this set to be empty if $\finitepath$ does not end in $S_0$ or already reached $S_0$ earlier.
	This way, we have in general that $\reach S_0 = \Union_{\finitepath} \reach_\finitepath S_0$ and the sets are pairwise disjoint.
	The probability of reaching $S_0$ in $\SG_R$ thus can be written as $\sum_{\finitepath} \ProbabilitySG<\SG_R, \initialstate><\strategy^*>[\reach_\finitepath S_0]$, the probability of all finite paths reaching $S_0$ (observe the similarity to the general proof of measurability for $\reach T$).
	By inserting the definition of transition probabilities in $\SG_R$ and reordering, we obtain $\prod_{i=1}^{\cardinality{\finitepath}} b^{-\gamma \reward(\finitepath_i)} \cdot \ProbabilitySG<\SG, \initialstate><\strategy^*>[\reach_\finitepath S_0]$.
	Note that $\underline{s}$ is absorbing and thus cannot occur on any path in $\reach_\finitepath S_0$.

	Recall that no rewards are obtained once a path reaches $S_0$ (since $\stratmin^*$ is optimal), thus the total reward of any path $\infinitepath \in \reach_\finitepath S_0$ exactly equals $\totalreward(\infinitepath) = \sum_{i=1}^{\cardinality{\finitepath}} \reward(\finitepath_i)$.
	Since $\Utility$ is defined as expectation, we can similarly split up the set of all runs in $\SG$ in a linear manner first into $\reach S_0$ and $\overline{\reach S_0}$ and then further split up $\reach S_0$ as above.
	Together, we obtain
	\begin{multline*}
		\Utility^* = \ExpectationSG<\SG, \initialstate><\strategy^*>[b^{-\gamma \totalreward}] = \\
			{\sum}_{\finitepath} b^{- \gamma \totalreward(\finitepath)} \cdot \ProbabilitySG<\SG, \initialstate><\strategy^*>[\reach_\finitepath S_0] + \integral<\infinitepath \in \overline{\reach S_0}>{b^{-\gamma \totalreward(\infinitepath)}}{\ProbabilitySG<\SG, \initialstate><\strategy^*>}.
	\end{multline*}
	Observe that the left hand side exactly equals the probability of $\reach S_0$ in $\SG_R$ as we argued above.
	It thus remains to show that the remaining integral has a value of zero.
	Here, we need to exploit the optimality of $\stratmax^*$.
	Recall that in this case almost all paths which do not reach $S_0$, i.e.\ $\overline{\reach S_0}$, instead reach $S_\infty$.
	Thus, for all these paths the Maximizer ensures an infinite total reward, corresponding to a utility of $0$.

	In summary, every optimal strategy profile $\strategy^*$ for $\SG$ reaches $S_0$ in $\SG_R$ with probability $\Utility^*$.

	\underline{From $\SG_R$ to $\SG$:}
	Now, let $\strategy^* = (\stratmax^*, \stratmin^*)$ be (memoryless deterministic) reachability-optimal strategies in $\SG_R$.
	Note that transition probabilities on $S_0$ are the same for $\SG$ and $\SG_R$, since $\reward(s) = 0$ on all these states.
	Thus, there exists a Minimizer strategy which keeps the play inside $S_0$ in $\SG_R$ and we assume w.l.o.g.\ that $\stratmin^*$ behaves in this way.
	This clearly does not influence the probability of reaching $S_0$ in the first place and thus $\stratmin^*$ remains an optimal strategy.
	Moreover, for $S_\infty$ observe that there exists a Maximizer strategy in $\SG$ ensuring that an infinite reward is obtained with probability 1.
	Then, following this strategy on $S_\infty$ in $\SG_R$ ensures that $\underline{s}$ is reached almost surely.
	This means that once $S_\infty$ is reached, Maximizer can ensure that $S_0$ is never reached, since $\underline{s}$ is absorbing.

	We divide the set of all possible infinite paths in $\SG_R$ into three groups, namely (i)~those which reach $S_0$, (ii)~those which reach $\underline{s}$, and (iii)~all others (which, as we will show, turn out to be a null set).
	Note that for (ii) considering those which reach $S_\infty$ would be wrong, since $\underline{s}$ can also be reached from other states (to be precise, from any state with $\reward(s) > 0$).
	For the first kind, we apply the same reasoning as above, obtaining that $\ProbabilitySG<\SG_R, \initialstate><\strategy^*>[\reach_\finitepath S_0] = \prod_{i=1}^{\cardinality{\finitepath}} b^{-\gamma \reward(\finitepath_i)} \cdot \ProbabilitySG<\SG, \initialstate><\strategy^*>[\reach_\finitepath S_0]$.
	For the second case, observe there is no direct equivalent of $\reach \{\underline{s}\}$ in $\SG$.
	Instead, we show that the set of states for which the optimal probability to reach $S_0$ is $0$ exactly is $S_\infty \union \{\underline{s}\}$.
	As argued above, this certainly is true for all states in $S_\infty \union \{\underline{s}\}$.
	Thus, choose some state $s \in \States \setminus (S_0 \union S_\infty)$.
	For this state, there exists a Minimizer strategy in $\SG$ ensuring that a finite total reward is obtained with non-zero probability (otherwise, $s$ would belong to $S_\infty$).
	This can only be the case if $S_0$ is reached with non-zero probability under this strategy in $\SG$.
	By replicating this strategy on $\SG_R$, Minimizer can as well ensure a non-zero probability of reaching $S_0$, since every path in $\SG$ also is possible in $\SG_R$ (only with a potentially decreased probability).
	We now want to argue that the third kind has measure zero.
	Consider the Markov chain $\MC$ induced by $\strategy^*$ (which is finite since $\strategy^*$ is memoryless).
	If $\MC$ had any BSCC containing states in neither $S_0$ nor $S_\infty \union \{\underline{s}\}$, we could adapt the Minimizer strategy to follow the previous strategy in this BSCC, increasing the probability to reach $S_0$.
	Recall that Minimizer wants to maximize the probability of reaching $S_0$, so this contradicts the optimality of $\strategy^*$.
	Consequently, the induced Markov chain only has BSCCs which are subsets of either $S_0$ or $S_\infty \union \underline{s}$, meaning almost all paths reach either of these.
	Together we obtain the result, i.e.\ that $\strategy^*$ obtains a utility of $\Utility^* = \ProbabilitySG<\SG_R, \initialstate><\strategy^*>[\reach_\finitepath S_0]$ in $\SG$.

	\underline{Combining the results:}
	From the first part, we get that any utility optimal strategy obtains a reachability probability of $\Utility^*$ in $\SG_R$.
	From the second part, we dually get that the optimal reachability probability in $\SG_R$ equals $\Utility^*$ and the optimal strategies for the reachability correspond to strategies obtaining $\Utility^*$ in $\SG_R$.
	Consequently, memoryless deterministic strategies are sufficient to optimize utility and we can obtain both the optimal utility as well as optimal strategies by solving the reachability game $\SG_R$.
\end{proof}

\capturingreachability*

\begin{proof}
	We assume w.l.o.g.\ that all states of $T$ are absorbing in $\SG$ (can be achieved in linear time without changing the reachability probability).
	Now, observe that the transition structure of $\SG_R$ completely agrees with $\SG$ except on $T$, where every state has a self-loop and a transition to $\underline{s}$.
	As we argued in the proof before, under optimal strategies almost all runs either reach $S_0$ or $S_\infty \union \underline{s}$ in $\SG_R$.
	Thus, since Maximizer is minimizing the probability to reach $S_0$, the probability to reach $S_\infty \union \underline{s}$ is maximized.
	To conclude, observe that (i)~$T \subseteq S_\infty$, in particular $S_\infty$ additionally exactly contains all states from which the Maximizer can force the play into $T$ with probability one, and (ii)~$\underline{s}$ cannot be reached without reaching $S_\infty$ first.
	Hence, the optimal probability of reaching $S_\infty \union \underline{s}$ in $\SG_R$ equals the probability to reach $S_\infty$ which in turn is the optimal probability to reach $T$ in $\SG$.
\end{proof}

\utilitysmallestfixedpoint*

\begin{proof}
	Follows directly from \cref{stm:utility_reachability_equivalence} combined with standard result on reachability for stochastic games \cite{DBLP:conf/spin/ChatterjeeH08}.
	Note that maximizing and minimizing reachability in stochastic games is equivalent, since we can simply swap the players.
\end{proof}

\utilityinequalities*

\begin{proof}
	\underline{If:}
	In this case, i.e.\ $\ERisk^* \geq t$, we have $\Utility^* \leq b^{-\gamma t}$.
	Thus, $\Utility^*$, which is a solution to \cref{eq:utility_fixed_point}, immediately satisfies all equations.

	\underline{Only If:}
	Observe that the first four equations ensure that any solution actually solves \cref{eq:utility_equation}.
	Since $\Utility^*$ is the pointwise smallest solution by \cref{stm:utility_smallest_fixpoint}, having a vector $v$ which satisfies the last inequality ensures that $\Utility^*$ does so, too.
	We conclude by noticing the equivalence of the first four equations to optimal solutions of the quadratic program for reachability, see e.g.\ \cite{condon1992complexity} for further information.
%
%	%	
\end{proof}

\subsection{Uniqueness of Fixpoint} \label{app:basic_properties:uniqueness}

We identify a condition on $\SG$ which ensures the reachability game $\SG_R$ being \enquote{stopping} \cite{shapley1953stochastic} (or \enquote{halting}), which implies that the fixed point is unique.
\begin{restatable}{lemma}{uniquefixedpoint} \label{stm:utility_unique_fixpoint_condition}
	\cref{eq:utility_equation} has a unique solution if for every strategy profile $\strategy$ and state $s$, we have $\ProbabilitySG<\SG, s><\strategy>[\reach (S_0 \union \{s \mid \reward(s) > 0\})] > 0$, i.e.\ $S_0$ or a state that yields some non-zero reward is always reached with some positive probability.
\end{restatable}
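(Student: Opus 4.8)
The plan is to read \cref{eq:utility_equation} as a fixed-point equation and to show that a suitable power of the associated operator contracts in the supremum norm, so that Banach's fixed-point theorem forces uniqueness. Write $S' \coloneqq \States \setminus (S_0 \union S_\infty)$ and let $P \coloneqq S_0 \union \{s \mid \reward(s) > 0\}$ be the target set appearing in the hypothesis. A solution of \cref{eq:utility_equation} is precisely a vector $v \in \Reals^{S'}$ with $v = T v$, where $T \colon \Reals^{S'} \to \Reals^{S'}$ is given by $(Tv)(s) \coloneqq \overline{\opt}^s_{a \in \stateactions(s)} b^{-\gamma \reward(s)} \cdot \sum_{s' \in \States} \sgtransitions(s, a, s') \bar{v}(s')$, with $\overline{\opt}^s$ as in \cref{eq:utility_equation} and $\bar v$ the extension of $v$ by the hard-wired values $1$ on $S_0$ and $0$ on $S_\infty$. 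The goal is to show that $T^{N+1}$, with $N \coloneqq \cardinality{\States}$, is a contraction.

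First I would upgrade the pointwise hypothesis to a uniform reachability bound. Let $Z$ be the set of states $s$ admitting some strategy profile $\strategy$ with $\ProbabilitySG<\SG, s><\strategy>[\reach P] = 0$. A standard attractor-type computation shows that $Z$ is the largest subset of $\States \setminus P$ such that every $s \in Z$ has an action whose support stays inside $Z$; by assumption $Z = \emptyset$. Since $\States$ is finite, the increasing sets $U_k$ of states from which $P$ is reached within $k$ steps under every choice of actions satisfy $\bigcup_k U_k = \States$ (otherwise $\States \setminus \bigcup_k U_k$ would be a non-empty $P$-avoiding trap, contradicting $Z = \emptyset$), and this union stabilizes after at most $\cardinality{\States}$ steps. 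Unrolling this, for every state $s$ and every strategy profile $\strategy$ one obtains $\ProbabilitySG<\SG, s><\strategy>[\reach P \text{ within } N \text{ steps}] \geq \eta_0$, where $\eta_0 \coloneqq p_{\min}^{N} > 0$ and $p_{\min}$ is the smallest positive transition probability in $\SG$.

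Next I would expand $T^{N+1}$. Since $\min$ and $\max$ are $1$-Lipschitz in the supremum norm over the available actions, unrolling $N+1$ applications of $T$ and discarding branches that leave $S'$ (there $v$ and $w$ agree with the boundary values, so they contribute $0$) yields, for all $v, w \in \Reals^{S'}$,
\begin{equation*}
	\norm{T^{N+1} v - T^{N+1} w}_\infty \leq \norm{v - w}_\infty \cdot \max_{s \in S'} \ \max_\kappa \ExpectationSG*<\SG, s><\kappa>[\indicator{\{s_1, \dots, s_N \in S'\}} \cdot {\prod}_{i=0}^{N} b^{-\gamma \reward(s_i)}],
\end{equation*}
where $\kappa \colon \States \to \Actions$ ranges over memoryless deterministic action selections and $s_0 = s, s_1, s_2, \dots$ is the trajectory in the induced Markov chain. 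To bound the right-hand side below $1$, I would fix $s \in S'$ and $\kappa$ and split according to whether the trajectory reaches $P$ within $N$ steps. On the event that it does not, its probability is at most $1 - \eta_0$ and the product is at most $1$. On the complementary event, intersected with $\{s_1, \dots, s_N \in S'\}$, the trajectory cannot hit $S_0$, hence it hits some $s_j \in \{s \mid \reward(s) > 0\} \cap S'$ with $j \leq N$, so the product is at most $\beta \coloneqq \max \{b^{-\gamma \reward(s)} \mid s \in S',\ \reward(s) > 0\} < 1$ (with $\beta \coloneqq 0$ if this set is empty). Calling the probabilities of these two sub-events $x$ and $y$, and using $x + y \leq 1$ and $y \leq 1 - \eta_0$, elementary optimization gives $\beta x + y \leq 1 - \eta_0(1 - \beta) \eqqcolon \lambda < 1$.

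Hence $T^{N+1}$ is a $\lambda$-contraction, so it has a unique fixed point $v^\star$; as $T v^\star$ is again a fixed point of $T^{N+1}$, we get $T v^\star = v^\star$, so $v^\star$ is the unique fixed point of $T$, i.e.\ \cref{eq:utility_equation} has a unique solution. The step I expect to be most delicate is the third one: rigorously unrolling $T^{N+1}$ through the interleaved $\min$/$\max$ operators into an expectation over memoryless action selections, and arranging the bookkeeping so that the discount collected at positive-reward states combines with $\eta_0$ into a contraction factor strictly below $1$ — in particular, using the horizon $N+1$ (rather than $N$) so that a positive reward encountered only at step $N$ is still counted.
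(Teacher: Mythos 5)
Your proof is correct and follows essentially the same route as the paper: the paper reduces to the induced reachability game $\SG_R$, observes that the hypothesis makes it stopping, and invokes Shapley's classical result, explicitly noting that the $\cardinality{\States}$-fold iteration of the operator underlying \cref{eq:utility_equation} is a contraction --- which is precisely the argument you carry out in detail. The only nitpick is that the maximum in your unrolled bound for $T^{N+1}$ should range over time-dependent (or arbitrary history-dependent) deterministic action selections rather than memoryless ones, since the finite-horizon optimum need not be stationary; this is harmless here because your uniform reachability bound from the first step holds for all strategy profiles.
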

\begin{proof}
	Under the assumption, the reachability game $\SG_R$ is \emph{stopping}, i.e.\ no matter what either player does, the game eventually stops with probability 1.
	Uniqueness of the equations for reachability on $\SG_R$ follows from standard results on reachability \cite{shapley1953stochastic}.

	Intuitively, given the assumptions, the reachability game $\SG_R$ will reach $S_0 \union \{\underline{s}\}$ with positive probability from any state (since there is a positive probability to directly transition to $\underline{s}$ in $\SG_R$ for any state with $\reward(s) > 0$).
	Now, observe that any set can only be reached with positive probability if there exists a path of length at most $\cardinality{S}$ to it.
	This implies that every $\cardinality{S}$ steps, there is some probability of reaching those states.
	If we repeat this ad infinitum, these states will be reached with probability 1.
	One can prove that the fixed point iteration corresponding to \cref{eq:utility_equation} applied $\cardinality{S}$ times is a contraction and thus has a unique fixed point.
\end{proof}
However, we hardly can expect to find a necessary condition:
Since our problem is essentially equivalent to reachability, the remarks of \cite{DBLP:journals/iandc/EisentrautKKW22} transfer to our case.
In particular, they give an if and only if condition for uniqueness of fixed points together with reasoning why we cannot easily decide this condition through graph analysis.
On the positive side, we can efficiently compute both $S_0$ and $S_\infty$ as well as check the stopping criterion directly on $\SG$.
\begin{restatable}{lemma}{quadtratictime}
	The sets $S_0$ and $S_\infty$ as well as the property of \cref{stm:utility_unique_fixpoint_condition} can be obtained / checked in time quadratic w.r.t.\ the number of states and transitions (assuming that the set $\{s \mid \reward(s) > 0\}$ can be determined in at most quadratic time).
\end{restatable}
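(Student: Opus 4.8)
The plan is to reduce each of the three tasks to solving a classical qualitative game -- reachability/safety, or almost-sure B\"uchi -- on the underlying graph of $\SG$, and then to invoke the well-known attractor and fixed-point algorithms for such games, all of which run within the claimed bound. Throughout I write $R_{>0} \coloneqq \{s \in \States \mid \reward(s) > 0\}$, which by assumption is available in at most quadratic time. First I would record two observations. Since rewards are non-negative and only finitely many reward values occur, for any fixed strategy profile the event $\totalreward > 0$ is exactly the event of visiting $R_{>0}$, and the event $\totalreward = \infty$ is exactly the event of visiting $R_{>0}$ infinitely often. And since whether a target is hit with positive probability in a Markov chain depends only on the chain's graph, for objectives phrased through positive probability the stochastic choices in $\SG$ may be treated as a third, cooperative player that resolves each chosen action to an arbitrary successor in its support.

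Using these, I would first identify $S_0$. By the two observations, $\max_{\stratmax}\min_{\stratmin}\ProbabilitySG<\SG,s><\stratmax,\stratmin>[\totalreward > 0] = 0$ holds precisely when the Minimizer can surely avoid $R_{>0}$ from $s$, i.e.\ when $s$ lies outside the attractor of $R_{>0}$ for the coalition of the Maximizer and the cooperative stochastic player; equivalently, $S_0$ is the largest set $Z$ disjoint from $R_{>0}$ such that every Maximizer state in $Z$ has \emph{all} of its actions' supports inside $Z$ and every Minimizer state in $Z$ has \emph{some} action with support inside $Z$. This greatest fixed point is obtained by the standard iterative-removal procedure in time linear in the number of states and transitions.

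For $S_\infty$, the first observation makes $\totalreward = \infty$ the B\"uchi condition ``$R_{>0}$ infinitely often'', so $s \in S_\infty$ precisely when the Maximizer almost-surely wins the B\"uchi game on $\SG$ with target $R_{>0}$ from $s$ (in turn-based stochastic games the guaranteed probability of a B\"uchi condition attains $1$ exactly on the Maximizer's almost-sure winning region; see, e.g., \cite{DBLP:journals/jcss/ChatterjeeH12}). That region is computed by the classical algorithm for almost-sure B\"uchi in turn-based stochastic games -- roughly, a nested fixed point alternately solving almost-sure reachability inside shrinking sub-arenas -- which runs in time quadratic in the number of states and transitions. Finally, the property of \cref{stm:utility_unique_fixpoint_condition} -- that from every state and under every profile $S_0 \cup R_{>0}$ is reached with positive probability -- fails, by the second observation, exactly when the two players can cooperate to stay forever in $U \coloneqq \States \setminus (S_0 \cup R_{>0})$, i.e.\ exactly when the largest subset $V$ of $U$ in which every state (of either player) has some action with support inside $V$ is non-empty; this cooperative ``trap'' is once more a greatest fixed point, computed by iterative removal in linear time, and it uses only $S_0$ and $R_{>0}$, which were already produced within the bound, so the whole check stays quadratic in the number of states and transitions.

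The hard part will be the two reductions themselves: justifying that ``$\totalreward > 0$ with positive probability'' and ``$\totalreward = \infty$ with probability $1$'' genuinely collapse to graph reachability of $R_{>0}$ and to an almost-sure B\"uchi objective on $R_{>0}$, and that the $\max/\min$ in the definitions of $S_0$ and $S_\infty$ may be replaced by the corresponding qualitative-game quantifiers -- which is where determinacy of these finite qualitative games and the sufficiency of memoryless deterministic strategies enter (and which simultaneously underlies the ``MD strategies suffice'' claim made for these sets in \cref{sub:reachability}). Once this is in place, the three computations are textbook attractor / greatest-fixed-point computations with the stated complexity.
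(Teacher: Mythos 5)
Your proposal is correct and follows essentially the same route as the paper: both reduce $S_0$ to a qualitative (sure) safety condition on avoiding $\{s \mid \reward(s) > 0\}$, $S_\infty$ to an almost-sure B\"uchi condition on visiting that set infinitely often, and the negation of the condition of \cref{stm:utility_unique_fixpoint_condition} to a cooperative sure-safety query, all solvable by standard attractor/fixed-point computations on the game graph in quadratic time. The only difference is presentational — the paper delegates these three qualitative queries to the algorithms of de Alfaro and Henzinger, whereas you spell out the corresponding fixed-point characterizations explicitly.
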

\begin{proof}
	Decidability of such qualitative properties is well-known (e.g.\ through attractor computations on the game graph).
	For completeness, we rephrase the problems in terms of \cite{DBLP:conf/lics/AlfaroH00}.
	Observe that $S_0$ is equivalent to the condition \enquote{at every step we have $\reward(s) = 0$ with probability 1} and $S_\infty$ means \enquote{we infinitely often see $\reward(s) > 0$ with probability 1}, i.e.\ a \emph{sure safety} and \emph{sure Buchi} condition, respectively.
	Similarly, for the condition of \cref{stm:utility_unique_fixpoint_condition}, consider its negation:
	We ask if there exists a strategy under which the probability to reach $S_0 \union \{s \mid \reward(s) > 0\}$ is zero -- a sure safety query.
	All three queries can be easily answered using the approaches of \cite{DBLP:conf/lics/AlfaroH00}.

	We note that all computations only require knowledge of the transition structure of the underlying (hyper-)graph and the set of states with $\reward(s) > 0$ (which we assumed to be computable in quadratic time).
\end{proof}

\section{The Algebraic Case}
\label{app:algebraic}

\subsection{Preliminaries: Vector Spaces and Field Extensions}
We assume some familiarity with the algebraic field extensions. We introduce the main concepts briefly and provide our notation.
For more details, see, e.g., \cite{DBLP:books/daglib/0070572}.

\noindent
\textbf{Vectors.}
For a vector $v\in \mathbb{K}^n$ of a field $\mathbb{K}$, we denote its components by $v_0,\dots,v_{n-1}$.
Whenever comparing two vectors  $x$ and $y$ over an ordered field  by $x \leq y$ or taking their $\max$ or $\min$, this is understood point-wise.
We often encounter functions assigning values to (finitely many) states;
for convenience, we assume an implicit (arbitrary but fixed) numbering of each set of states and identify such functions with the corresponding vectors.

\noindent
\textbf{Field extensions.}
Given a complex number $\alpha$, the smallest field extension $\mathbb{F}$ of $\mathbb{Q}$ containing $\alpha$, i.e.\ with $\alpha\in \mathbb{F}$, is denoted by $\mathbb{Q}(\alpha)$.
The number $\alpha$ is called algebraic if there is a non-constant rational polynomial $P\in \mathbb{Q}[X]$ with $P(\alpha)=0$.
A complex number that is not algebraic is called transcendental.
The degree of an algebraic number $\alpha$ is the smallest degree of a non-zero polynomial $P$ with $P(\alpha)=0$.
There is a unique non-zero polynomial $P$ with $P(\alpha)=0$ of minimal degree with leading coefficient $1$, which is called the minimal polynomial of $\alpha$.
For an algebraic number $\alpha$ of degree $q$, $\mathbb{Q}(\alpha)$ is a $\mathbb{Q}$-vector space of dimension $q$.
One basis is given by $(\alpha^0,\dots, \alpha^{q-1})$.
For any basis $B=(b_0,\dots,b_{q-1})$ of $\mathbb{Q}(\alpha)$ as $\mathbb{Q}$-vector space
and any number $\beta\in \mathbb{Q}(\alpha)$, there is a unique vector $v\in \mathbb{Q}^q$ such that $\beta=\sum_{i=0}^{q-1}v_i\cdot b_i$.
We call $v$ the representation of $\beta$ in basis $B$.
Vice versa, given a vector $v\in \mathbb{Q}^q$, we denote the number  $\sum_{i=0}^{q-1}v_i\cdot b_i $ it represents in basis $B$ by $[v]_B$.

\subsection{General Algebraic Case}

The existence of a solution of \cref{eq:utility_inequalities} for an algebraic instance of our problem can be expressed as an existential sentence in the language of the reals without exponentiation.
All occurring constants are algebraic numbers of the form $c\cdot b^{\frac{p}{q}}$ with $p, q \in \Naturals$ and $c\in \mathbb{Q}$ in this case.
These can be represented in linear space using exponentiation by repeated squaring.
\begin{lemma}
	Let $b = \frac{u}{v} \in \Rationals$ a rational number where $u,v\in \Naturals\setminus\{0\}$.
	Given $p \in \Integers$ and $q \in \Naturals \setminus \{0\}$, we can define $b^{\frac{p}{q}}$ by an existential sentence in the language of the reals using $\mathcal{O}(\log p + \log q + \log u + \log v)$ symbols.
\end{lemma}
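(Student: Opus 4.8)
The goal is to write a first-order existential formula $\varphi(x)$ over the reals (with only $+$, $\cdot$, $<$, $=$, constants $0,1$) that is satisfied by exactly one real number, namely $b^{p/q} = (u/v)^{p/q}$, and to do so with $\mathcal{O}(\log p + \log q + \log u + \log v)$ symbols. The plan is to reduce the task to a chain of short defining formulas: first define the integers $u$, $v$, $|p|$, $q$ by their binary expansions using repeated squaring; then define the rational $b = u/v$; then characterise $y = b^{p/q}$ by the polynomial relation $y^q = b^{|p|}$ (or $y^q \cdot b^{|p|} = 1$ when $p < 0$) together with the positivity constraint $y > 0$, which pins down the unique positive real $q$-th root.

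The first step is the only one requiring care. A natural number $N$ with binary representation $N = \sum_{i=0}^{k} \beta_i 2^i$, where $k = \lfloor \log_2 N \rfloor$ and $\beta_i \in \{0,1\}$, can be defined by introducing auxiliary variables $z_0, \dots, z_k$ with $z_0 = 1$ (meaning $z_j = 2^{2^j}$, i.e.\ the successive squares) enforced by the conjunction $\bigwedge_{j=0}^{k-1} z_{j+1} = z_j \cdot z_j$, and then setting the target equal to the product $\prod_{j : \beta_j = 1} z_j$ (only those factors with a $1$ bit appear). This uses $\mathcal{O}(k) = \mathcal{O}(\log N)$ symbols and $\mathcal{O}(\log N)$ existentially quantified auxiliary variables. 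Doing this for each of $u$, $v$, $|p|$, $q$ in turn, the total symbol count is $\mathcal{O}(\log u + \log v + \log p + \log q)$.

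With integer constants in hand, $b^{|p|}$ is likewise defined by repeated squaring on $b$ itself: introduce variables $w_0, \dots, w_m$ with $w_0 \cdot v = u$ (so $w_0 = b$) and $w_{j+1} = w_j \cdot w_j$, where $m = \lfloor \log_2 |p| \rfloor$, and set $B' = \prod_{j : \text{bit } j \text{ of } |p| \text{ is } 1} w_j$, again costing $\mathcal{O}(\log p + \log u + \log v)$ symbols. The final formula is then
\begin{equation*}
	\varphi(y) \;\equiv\; \exists \bar{z}\, \exists \bar{w}\, \exists B' \; \bigl( (\text{defn of } B' = b^{|p|}) \wedge y > 0 \wedge \Psi_q(y, B') \bigr),
\end{equation*}
where $\Psi_q(y, B')$ says $y^q = B'$ if $p \geq 0$ and $y^q \cdot B' = 1$ if $p < 0$; here $y^q$ is itself expanded by one more repeated-squaring block of size $\mathcal{O}(\log q)$. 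Uniqueness of the solution follows because $t \mapsto t^q$ is strictly increasing on the positive reals, so $y$ is forced to be the unique positive $q$-th root of the positive real $b^{|p|}$ (resp.\ $b^{-|p|}$), which is $b^{p/q}$.

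The main obstacle — really the only nontrivial point — is making the bookkeeping of the repeated-squaring encodings precise enough that the symbol count genuinely comes out to $\mathcal{O}(\log p + \log q + \log u + \log v)$ rather than, say, a polynomial in these logarithms; one must be careful that the product $\prod_{j : \beta_j = 1} z_j$ is written as a term of length $\mathcal{O}(k)$ (e.g.\ by associating it left-to-right) and that the several encoding blocks share no hidden blow-up. Everything else is routine: composing the blocks with conjunctions and a handful of quantifiers adds only a constant factor, and correctness is immediate from the strict monotonicity argument above.
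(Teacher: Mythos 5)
Your proof takes essentially the same route as the paper's: express $x^q = b^{p}$ via repeated squaring of $x$ and of $b$ (with $b$ introduced through $v\cdot d_0 = u$), letting the bit patterns of $q$ and $|p|$ select which squares enter the two products, for a total of $\mathcal{O}(\log p+\log q+\log u+\log v)$ symbols. Your two additions --- the constraint $y>0$ to pin down the positive $q$-th root when $q$ is even, and the explicit handling of negative $p$ via $y^q\cdot b^{|p|}=1$ --- are small correctness refinements that the paper's proof glosses over (it tacitly assumes $p\geq 0$ and asserts uniqueness without the positivity condition).
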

\begin{proof}
	We provide a formula defining $x$ such that $x = b^{\frac{p}{q}}$.
	Let $p$ and $q>0$ be natural numbers with binary representations $p = \sum_{i=0}^P p_i\cdot 2^i$ and $q = \sum_{i=0}^Q q_i\cdot 2^i$ where $p_i, q_i \in \{0,1\}$ for all $0 \leq i \leq P$ and $Q$, respectively.
	The formula $x = b^{\frac{p}{q}}$ with free variable $x$ is equivalent to $x^q = b^p$. Using $d_0=b$ as abbreviation for $v \cdot d_0 = u$, the formula $x^q = b^p$ can then be written as
	\begin{align*}
		\exists c_0, \dots, c_Q, d_0, \dots, d_P. & c_0 = x \land {\lAnd}_{i=1}^Q (c_i = c_{i-1} \cdot c_{i-1}) \land         \\
		                                          & d_0 = b \land {\lAnd}_{i=1}^P (d_i = d_{i-1} \cdot d_{i-1}) \land \\
		                                          & {\prod}_{\{i \mid p_i = 1\}} d_i = {\prod}_{\{i \mid q_i = 1\}} c_i
	\end{align*}
	This formula is uniquely satisfied by assigning $c_i = x^{2^i}$ and $d_i = b^{2^i}$.
	Observe that $q = \sum_{i=0}^Q 2^i q_i$, and consequently $x^q = \prod_{i=0}^Q x^{2^i \cdot q_i} = \prod_{\{i \mid q_i = 1\}} x^{2^i}$, and similar for $b^p$.
	For the size bound, observe that $Q = \lceil \log_2 q \rceil$ and $P = \lceil \log_2 p \rceil$, thus the conjunctions and multiplication require $\mathcal{O}(\log p + \log q)$ symbols.
	Finally, $v \cdot d_0 = u$ requires $\mathcal{O}(\log v + \log u)$ space to specify.
\end{proof}
Consequently, an existential sentence in the language of the reals expressing the satisfiability of \cref{eq:utility_inequalities} can be computed in polynomial time from $\SG$, $b$, and $\gamma$.
This shows \cref{thm:generalalgebraic}.

\subsection{Threshold Problem in Algebraic Extensions of Low Degree}
In the  approach above, we invoke a decision procedure for the full existential theory of the reals.
The problem we solve, however, is of a rather simple form.
In fact, in Markov decision processes, solving the threshold problem boils down to checking the solvability of a linear system of inequalities.

Throughout this section, fix a small algebraic instance of an MDP $\MDP = (\States, \Actions, \sgtransitions)$ and initial state $\initialstate$. 
All coefficients occurring in \cref{eq:utility_inequalities} now are of the form $\sgtransitions(s,a,t) \cdot b^{-r(s) \cdot p/q}$ for a rational probability $\sgtransitions(s,a,t)$, a natural reward $r(s)$, and the rational basis $b$.
Consequently, all coefficients are contained in the field extension $\mathbb{Q}(b^{1/q})$.

We now  determine the degree  and the minimal polynomial of $b^{1/q}$.
Let $b = b_1/b_2$ for co-prime natural numbers $b_1$ and $b_2$.
Now, let $d$ be the greatest divisor of $q$ such that $b_1$ and $b_2$ have integral $d$-th roots $d_1$ and $d_2$.
Then, we can rewrite $b^{1/q}=(d_1/d_2)^{d/q}$.
We now define $q' = q/d$ and $b'=d_1/d_2$.
Thus, $(b')^{1/q'}=b^{1/q}$ and $q'$ is a natural number smaller or equal to $q$.
Now, $q'$ has no divisor $d'$ such that $b'$ has a rational $d'$th root.
By \cite[Thm~8.1.6]{karpilovsky1989topics}, we conclude that $x^{q'}-b'$ is irreducible and hence the minimal polynomial of $b^{1/q}$.

In fact, we can switch to the risk aversion factor $\gamma' = p/q'$ and the basis $b'$ instead of $\gamma$ and $b'$ without affecting the  negative exponential utility.
The entropic risk is changed by the integral factor $d\leq q$ by this switch.
Hence, we from now on work with the following assumption:
\begin{assumption} \label{ass:irreducible}
	In a small algebraic instance with risk aversion factor $\gamma=p/q$ and basis $b$, we can assume w.l.o.g.\ that $x^q-b$ is irreducible over $\mathbb{Q}$.
\end{assumption}
In case of an MDP in which all states belong to the Minimizer, i.e.\ $\Utility$ is maximized, we obtain the following system of linear inequalities from \cref{eq:utility_inequalities}:
\begin{equation} \label{eq:utility_linear_porgram}
	\begin{gathered}
		v(s) = 0 \quad \text{for $s \in S_\infty$,} \qquad v(s) = 1 \quad \text{for $s \in S_0$,} \\
		v(s) \geq b^{-\gamma \reward(s)} \cdot \ExpectedSumSG{\sgtransitions}{s}{a}{v} \quad \text{for $s \in \StatesMin$, $a \in \Actions(s)$,} \\
		v(\initialstate) \leq b^{-\gamma t}.
	\end{gathered}
\end{equation}
Note that we can drop the disjunction from \cref{eq:utility_inequalities} as the point-wise least solution to \cref{eq:utility_linear_porgram} always satisfies this disjunction.
However, if instead all states belong to the Maximizer we cannot drop the disjunction directly.
Nevertheless, we consider the following linear program
\begin{equation} \label{eq:utility_linear_porgram2}
	\begin{gathered}
	\text{Maximize } v(\initialstate) \text{ subject to}\\
		v(s) = 0 \quad \text{for $s \in S_\infty$,} \qquad v(s) = 1 \quad \text{for $s \in S_0$,} \\
		v(s) \leq b^{-\gamma \reward(s)} \cdot \ExpectedSumSG{\sgtransitions}{s}{a}{v} \quad \text{for $s \in \StatesMax$, $a \in \Actions(s)$.} 
	\end{gathered}
\end{equation}
Then, in the optimal solution, we have to check whether $v(\initialstate) \leq b^{-\gamma t}$; or in other words, all solutions to the constraints have to satisfy $v(\initialstate) \leq b^{-\gamma t}$.
By considering the dual linear program instead, we can transform this linear program into one where the objective function has to be minimized and has the same optimal value $v(\initialstate)$.
Here, comparing this optimal value to the threshold by $v(\initialstate) \leq b^{-\gamma t}$ boils down to finding one solution that satisfies all constraints and the threshold condition again.
So, for both Maximizer- and Minimizer-MDPs, the entropic risk threshold problem  boils down to checking the feasibility of a linear program, i.e.\ the satisfiability of a system of linear inequalities.

In \cite{AdlerB94,Beling01}, it is shown that the feasibility of a linear program over algebraic numbers can be checked in time polynomial in the size of the representations of the occurring algebraic numbers and in the degree of a field extension of $\mathbb{Q}$ that contains all coefficients.
The representation of algebraic numbers used in \cite{AdlerB94,Beling01} is the following:
An algebraic number $\alpha$ is represented by its minimal polynomial $P$ over $\mathbb{Q}$ together with a rational interval $(a,b)$ containing only the zero $\alpha$ of $P$.
The size of the representation is the bitsize of coefficients of $P$ and the interval bounds of $a$ and $b$ as fractions of (co-prime) integers.

By \cref{ass:irreducible}, the degree of the field extension containing all occurring numbers is $q$ and hence at most linear in the input size since $q$ is given in unary.
For the following lemma, we show that the coefficients in the minimal polynomials of all constants in the linear program of interest are small as well. Then, the mentioned polynomial-time algorithms for the feasibility of linear systems of inequalities over algebraic numbers from \cite{AdlerB94,Beling01} are applicable.
\begin{restatable}{lemma}{smallalgebraicMDP}\label{lem:small_algebraic_MDP}
	Let $\MDP$ be a small algebraic instance of an MDP.
	We can  decide whether $\ERisk^{\ast}_{\MDP} \geq t$ and whether $\ERisk^{\ast}_{\MDP} \leq t$ in polynomial time.
	Furthermore, for a rational threshold $t^\prime$ given in binary, we can decide whether $\Utility^{\ast}_{\MDP} \leq t^\prime$ and whether $\Utility^{\ast}_{\MDP} \geq t^\prime$ in polynomial time.
\end{restatable}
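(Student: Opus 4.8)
The plan is to reduce each of the four decision questions to testing feasibility of a linear system over a number field of polynomially bounded degree whose coefficients all admit a representation of polynomially bounded size, and then to invoke the polynomial-time algorithm for linear feasibility over algebraic numbers of \cite{AdlerB94,Beling01}. First I would recall the reductions already set up above: whether $\MDP$ is a Maximizer- or a Minimizer-MDP, deciding $\ERisk^{\ast}_{\MDP} \geq t$ is equivalent to feasibility of the linear system \cref{eq:utility_linear_porgram} (Minimizer-MDP), respectively of the system obtained from \cref{eq:utility_linear_porgram2} by passing to the dual linear program (Maximizer-MDP). The three companion questions reduce the same way: $\ERisk^{\ast}_{\MDP} \leq t$ by using the reversed threshold inequality (and LP duality where needed), and $\Utility^{\ast}_{\MDP} \leq t'$ respectively $\Utility^{\ast}_{\MDP} \geq t'$ by replacing the constant $b^{-\gamma t}$ by the rational $t' \in \Rationals$, which only simplifies the system. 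In every case, then, the question becomes whether a linear system with polynomially many variables (one per state) and constraints ($O(\lvert\Actions(s)\rvert)$ per state) over a certain number field is feasible.

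Next I would fix that field and bound its degree and the sizes of the coefficients. Write $\gamma = p/q$ and $b = b_1/b_2$ with co-prime naturals, and set $\beta \coloneqq b^{1/q}$. Every coefficient of \cref{eq:utility_linear_porgram}/\cref{eq:utility_linear_porgram2} is of the form $\sgtransitions(s,a,s') \cdot b^{-\gamma\reward(s)} = c\cdot\beta^{-rp}$ with $c = \sgtransitions(s,a,s') \in \Rationals$ given in binary and $r = \reward(s)$, $p$, $q$ given in unary; these lie in $\mathbb{Q}(\beta)$, which by \cref{ass:irreducible} has degree exactly $q$, hence polynomial in the input size. For the two $\ERisk$-questions the only extra constant $b^{-\gamma t}$ with $t = t_1/t_2$ in unary lies in $\mathbb{Q}(b^{1/(q t_2)}) \supseteq \mathbb{Q}(\beta)$ of degree at most $q t_2$, still polynomial; for the two $\Utility$-questions no extension beyond $\mathbb{Q}(\beta)$ is needed. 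The key step is an exponent reduction: writing $rp = kq - j$ with $k \coloneqq \lceil rp/q\rceil$ and $j \coloneqq kq - rp \in \{0,\dots,q-1\}$, and using $\beta^q = b$, we obtain
\[
  c\cdot\beta^{-rp} \;=\; \bigl(c\,b^{-k}\bigr)\,\beta^{j} \;=\; \frac{e_1}{e_2}\,\beta^{j}
\]
for co-prime integers $e_1,e_2$. Since $c$ is given in binary and $k \leq rp$ is polynomial (as $r$ and $p$ are given in unary), the number $c\,b^{-k} = c_1 b_2^{k}/(c_2 b_1^{k})$, and hence $e_1$ and $e_2$, has bitsize $O\bigl(k\,(\log b_1 + \log b_2 + 1)\bigr)$, which is polynomial. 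Thus every coefficient is a rational multiple, of polynomial bitsize, of a single element $\beta^{j}$ of the fixed power basis $(1,\beta,\dots,\beta^{q-1})$ of $\mathbb{Q}(\beta)$. If one prefers the minimal-polynomial-plus-isolating-interval representation used in \cite{AdlerB94,Beling01}, note that $\alpha \coloneqq (e_1/e_2)\beta^{j}$ satisfies $\alpha^{q} = (e_1/e_2)^{q} b^{j}\in\Rationals$, a rational known explicitly and of polynomial bitsize, so $\alpha$ is the unique positive real root of the integer polynomial $e_2^{q} b_2^{j} X^{q} - e_1^{q} b_1^{j}$; its minimal polynomial divides this polynomial in $\mathbb{Q}[X]$, so by Gauss's lemma and Mignotte's bound on factors of integer polynomials it has a primitive integer representative whose coefficients have bitsize bounded polynomially in $q$ and in the bitsize of $\alpha^{q}$, i.e.\ polynomial in the input, and a rational isolating interval with polynomial-bitsize endpoints follows from standard root-separation bounds and a polynomially accurate rational approximation of $(\alpha^{q})^{1/q}$.

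Putting the two steps together, each of the four questions becomes a feasibility test for a linear system with polynomially many variables and constraints, all of whose coefficients lie in a common number field of polynomial degree and admit a representation of polynomial size; by \cite{AdlerB94,Beling01} such a test runs in polynomial time, which proves the lemma. I expect the main obstacle to be precisely the blow-up control of the second step: a priori the exponentials $b^{-\gamma\reward(s)}$ are algebraic numbers whose degree and whose minimal polynomials could be huge, and the two ingredients that keep them manageable are (i) the unary encoding of $\gamma$ (and of $t$), which via \cref{ass:irreducible} keeps the degree of the extension equal to $q$, and (ii) the identity $\beta^{-rp} = b^{-\lceil rp/q\rceil}\,\beta^{\,q\lceil rp/q\rceil - rp}$, which writes every coefficient as a polynomially small rational multiple of a basis element. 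Dropping either --- for instance giving $\gamma$ in binary --- would make $q$, and with it both the extension degree and the coefficient sizes, exponential, which is exactly why the restriction to small algebraic instances is needed.
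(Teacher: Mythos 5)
Your proposal is correct and follows essentially the same route as the paper: reduce to linear feasibility over $\mathbb{Q}(b^{1/q})$ via \cref{eq:utility_linear_porgram}/\cref{eq:utility_linear_porgram2} and LP duality, perform the exponent reduction $b^{-rp/q} = b^{-k}\cdot b^{\pm\ell/q}$ to get polynomially small rational multiples of basis elements, and invoke \cite{AdlerB94,Beling01}. The one place where the paper is more constructive is the minimal polynomial: rather than arguing its smallness abstractly via Gauss/Mignotte (which still leaves you needing polynomial-time factorization to actually produce the representation the algorithms of \cite{AdlerB94,Beling01} consume), the paper identifies it explicitly as $x^m - b^{-m\ell/q}$ with $m = q/\gcd(\ell,q)$, using the linear independence of $1, b^{1/q},\dots,b^{(q-1)/q}$ guaranteed by \cref{ass:irreducible}.
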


\begin{proof}
	In order to rely on the mentioned polynomial algorithms 
	from \cite{AdlerB94,Beling01}, we have to show that we can obtain small representations of all coefficients in \cref{eq:utility_linear_porgram,eq:utility_linear_porgram2}.
	The coefficients have the form $\sgtransitions(s,a,s')\cdot b^{-r(s)\cdot p/q}$ where $\sgtransitions(s,a,s')$ and $b$ given in binary and $r(s)$, $p$, and $q$ in unary.
	
	First of all, we note that it is sufficient to find a small representation for $b^{-r(s)\cdot p/q}$.
	As shown in \cite{AdlerB94,Beling01}, from this representation, a representation of $\sgtransitions(s,a,s')\cdot b^{-r(s)\cdot p/q}$ can be found in polynomial time.
	So, let $n=r(s)\cdot p$.
	We rewrite $n=k \cdot q +\ell$ for some natural number $\ell<q$.
	The numerical values of $k$ and $\ell$ are polynomial in the size of the original input as $r(s)$, $p$ and $q$ are given in unary.
	Now, $b^{-n/q}=b^{-k}\cdot b^{-\ell/q}$.
	The number $b^{-k}$ is a rational whose size in binary representation is linear in the size of the binary representation of $b$ and in $k$.
	So, $b^{-k}$ can be obtained in polynomial time and by the same argument as before, it is sufficient to find a small representation for $ b^{-\ell/q}$.

	The algebraic number $b^{-\ell/q}$ is a zero of the polynomial $x^q-b^{-\ell}$.
	Furthermore, it is the only positive real zero of this polynomial and it lies between $0$ and $1$.
	Consequently, the minimal polynomial of $b^{-\ell/q}$ has only one zero in the interval $(0,1)$, too, which yields the interval for the representation.
	In order to find the minimal polynomial, note that $b^{-0/q}$, $b^{-1/q}$, \dots, $b^{-(q-1)/q}$ are linearly independent over $\mathbb{Q}$ by \cref{ass:irreducible}.
	Let $m = q / \gcd(\ell,q)$.
	Now, the remainders of $0,\ell,2\ell,3\ell,\dots$ are periodic modulo $q$ with period $m$.
	This means that $b^{-0\ell/q}$, $b^{-\ell/q}$, \dots, $b^{-(m-1)\ell/q}$ are linearly independent over $\mathbb{Q}$ and consequently the degree of $b^{-\ell/q}$ is at least $m$.
	But $(b^{-\ell/q})^m$ is a rational as $\ell \cdot m$ is an integer multiple of $q$.
	Together, $x^m-(b^{-m\cdot \ell/q})$ is the minimal polynomial of $b^{-\ell/q}$.
	As $\ell\cdot m/q$ is a natural number smaller than $q$, the rational coefficient $b^{-\ell\cdot m/q}$ can be computed in polynomial time.
	Hence, the representation of $b^{-\ell/q}$ in terms of this minimal polynomial and the interval $(0,1)$ can be obtained from $\MDP$ in polynomial time.
	
	As the threshold $t$ on the entropic risk in a small algebraic instance is given in unary, we can similarly obtain a small representation for the corresponding threshold $b^{-\gamma\cdot t}$ 
	on the negative utility. If we directly want to compare the negative utility to a rational threshold $t'$ instead, this threshold can be given in binary.
	Now, the polynomial time algorithms of \cite{AdlerB94,Beling01} for checking the feasibility of linear programs with algebraic coeffiecients are applicable after computing the representations of all constants occurring in \cref{eq:utility_linear_porgram,eq:utility_linear_porgram2} in polynomial time.
\end{proof}

%s
For SGs, this result allows us to non-deterministically guess a strategy for one of the players and check the threshold condition in the resulting MDP in polynomial time.
By guessing a strategy for Maximizer, we can check whether $\ERisk^{\ast}_{\SG}\geq t$ in $\mathsf{NP}$, by guessing a strategy for the Minimizer, we obtain a $\mathsf{coNP}$-upper bound analogously,
finishing the proof of \cref{thm:smallalgebraicthreshold}.
Note that for a rational threshold $t^\prime$ given in binary, the problem to decide whether $\Utility^{\ast}_{\SG} \geq t^\prime$ for a small algebraic instance $\SG$ of an SG is in $\mathsf{NP}\cap\mathsf{coNP}$ as well with analogous reasoning and \cref{lem:small_algebraic_MDP}.

Finally, we show that despite the restrictions on the encoding of the input our problem %for the entropic risk in stochastic games
is still at least as hard as for general reachability games:
\begin{lemma}
	The threshold problem for stochastic reachability games is polynomial-time reducible to the entropic risk threshold problem on small algebraic instances of SGs.
\end{lemma}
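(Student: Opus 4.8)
The plan is to reduce from the threshold problem for stochastic reachability games itself (see \cite{condon1992complexity}): given a stochastic game $\SG$ with rational transition probabilities, an initial state $\initialstate$, an absorbing target set $T$, and a rational threshold $p$, decide whether $\val_{\SG, \reach T}(\initialstate) \geq p$. First I would dispose of the boundary cases. If $p \leq 0$ the answer is trivially \enquote{yes}; if $p \geq 1$ the question asks whether $\val_{\SG, \reach T}(\initialstate) = 1$, i.e.\ whether the Maximizer wins almost surely, which is a purely qualitative property decidable in polynomial time (e.g.\ via attractor computations on the game graph, independently of the exact transition probabilities). In both of these cases the reduction simply outputs a fixed trivially-true or trivially-false small algebraic instance. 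So from now on assume $0 < p < 1$ and write $1 - p = c/d$ with $c, d$ co-prime natural numbers, $0 < c < d$ (concretely $c = p_2 - p_1$ and $d = p_2$ if $p = p_1/p_2$ in lowest terms).

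The heart of the reduction rests on \cref{lem:capturing_reachability} together with the observation that, for an indicator reward on an absorbing target, the utility value is \emph{insensitive} to the parameters $b$ and $\gamma$. I would make all states of $T$ absorbing (in linear time, without changing $\ProbabilitySG<\SG, \initialstate><\strategy>[\reach T]$ for any profile $\strategy$), set $\reward = \indicator{T}$, $\gamma = 1$, $t = 1$, and choose the rational basis $b = d/c > 1$. With a $0/1$ reward on an absorbing target, every path $\infinitepath$ satisfies $\totalreward(\infinitepath) = \infty$ if $\infinitepath \in \reach T$ and $\totalreward(\infinitepath) = 0$ otherwise, hence $b^{-\gamma \totalreward(\infinitepath)} \in \{0, 1\}$ regardless of the concrete values $b > 1$ and $\gamma > 0$. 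Consequently $\Utility_{\SG, \initialstate}^{\gamma}(\strategy) = 1 - \ProbabilitySG<\SG, \initialstate><\strategy>[\reach T]$ for every profile, and taking $\inf_{\stratmax}$ and $\sup_{\stratmin}$ gives $\Utility^*_{\SG}(\initialstate) = 1 - \val_{\SG, \reach T}(\initialstate)$ — exactly the statement of \cref{lem:capturing_reachability}. By determinacy (established above) the entropic risk threshold problem is equivalent to deciding whether $\ERisk_{\SG, \initialstate}^{\gamma *} \geq t$, and $\ERisk_{\SG, \initialstate}^{\gamma *} \geq t$ holds iff $\Utility^*_{\SG}(\initialstate) \leq b^{-\gamma t}$; since $b^{-\gamma t} = (d/c)^{-1} = 1 - p$, this is equivalent to $1 - \val_{\SG, \reach T}(\initialstate) \leq 1 - p$, i.e.\ to $\val_{\SG, \reach T}(\initialstate) \geq p$. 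Thus the constructed instance is a yes-instance exactly when the original reachability instance is.

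Finally I would check that the constructed instance is a \emph{small} algebraic instance and that the construction runs in polynomial time. The rewards lie in $\{0, 1\}$ and $\gamma = t = 1$, so each admits a constant-size unary encoding; the transition probabilities of $\SG$ are rational and already given as binary fractions of co-prime integers; and $b = d/c$ is rational with $c$ and $d$ of binary size polynomial in that of $p$. All of these are computable in polynomial time. I do not expect a genuine obstacle here: the only point that needs care is recognizing that the utility collapses to $1$ minus the reachability probability no matter what $b$ and $\gamma$ are (for an indicator reward on an absorbing target), which frees up one parameter — here $b$, though one could symmetrically vary $t$ — to encode the threshold $1-p$ exactly as a rational number, together with treating the boundary thresholds $p \in \{0,1\}$ separately so that one never needs $b^{-\gamma t} = 0$, which is unattainable for $b > 1$.
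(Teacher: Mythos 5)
Your proof is correct and follows the same basic route as the paper: both reductions instantiate \cref{lem:capturing_reachability} with the indicator reward $\reward = \indicator{T}$ on an absorbing target, so that $\Utility^* = 1 - \val_{\SG,\reach T}$, and both then observe that the resulting instance is small algebraic. The difference lies in how the threshold is encoded. The paper fixes $b=2$, $\gamma=1$ and leaves the threshold conversion implicit; taken literally, comparing $\ERisk^*$ against a reachability threshold $p$ would then require the entropic-risk threshold $t = -\log_2(1-p)$, which is irrational for most rational $p$ and hence not admissible in a small algebraic instance (where $t$ must be a rational in unary). You sidestep this by exploiting that, for an indicator reward on an absorbing target, $\Utility^*$ is independent of $b$ and $\gamma$, which frees you to set $t=1$ and encode $1-p = c/d$ into the basis $b = d/c$ (given in binary, as the definition permits), so that $b^{-\gamma t} = 1-p$ exactly. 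Together with your separate treatment of the boundary cases $p \le 0$ and $p \ge 1$ (the latter needed since $b^{-\gamma t} = 0$ is unattainable for $b>1$), this makes the reduction fully rigorous; it is a legitimate and arguably cleaner instantiation of the same idea the paper sketches.
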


\begin{proof}
	This follows  from \cref{lem:capturing_reachability}:
	For each stochastic reachability game $\SG$, we can construct an SG $\SG'$ with rewards $0$ and $1$, basis $b=2$, and $\gamma=1$ in polynomial time such that the reachability value in $\SG$ equals the optimal entropic risk in $\SG'$.
	Clearly, this is a small algebraic instance.
\end{proof}

\subsection{Optimal Solution in Algebraic Extensions of Low Degree}
\label{sec:explicit}

So far, we presented solutions to the threshold problem.
We will now address how to compute an explicit representation of the optimal negative exponential utility.
To this end, we take a closer look at how the equation system we obtain for MCs can be solved with computations in the algebraic numbers.
In the end, we state the complexity-theoretic consequences for SGs that we obtain from this solution for MCs.

Let $\MC = (\States, \mctransitions)$ be a small algebraic instance of a MC with reward function $r : S \to \mathbb{N}$, initial state $\initialstate$, $\gamma = p/q \in \mathbb{Q}$ a risk-aversion parameter for co-prime integers $p$ and $q$, and $b\in \mathbb{Q}$ a basis.
Again, we work with \cref{ass:irreducible} as justified in the previous section.
The main result of this section is the following lemma leading to \cref{thm:algebraic_optimal_value}.
\begin{lemma} \label{thm:algebraic_MC}
	A representation of the negative exponential utility $\Utility^*$ in $\MC$ in the basis $B=(1,b^{1/q},b^{2/q},\dots, b^{q-1/q})$ of $\mathbb{Q}(b^{1/q})$ can be computed in polynomial time.
\end{lemma}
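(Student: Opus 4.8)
The plan is to reduce the computation of $\Utility^*$ in a Markov chain to solving a linear system over the field $\mathbb{Q}(b^{1/q})$, and then argue that this linear-algebra computation can be carried out symbolically in polynomial time once we fix the basis $B = (1, b^{1/q}, \dots, b^{(q-1)/q})$. First I would recall from \cref{stm:utility_smallest_fixpoint} that for a Markov chain (which is an SG with no choices) the system \cref{eq:utility_equation} has no $\opt$ and no disjunction: it is a genuine linear system $v(s) = b^{-\gamma \reward(s)} \cdot \sum_{s'} \mctransitions(s,s') v(s')$ for $s \notin S_0 \cup S_\infty$, together with $v(s) = 0$ on $S_\infty$ and $v(s) = 1$ on $S_0$, and $\Utility^*$ is its pointwise smallest solution. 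As in the proof of \cref{lem:small_algebraic_MDP}, every coefficient $b^{-\gamma \reward(s)} \mctransitions(s,s')$ lies in $\mathbb{Q}(b^{1/q})$, and — using $n = \reward(s)\cdot p = kq + \ell$ and $b^{-n/q} = b^{-k} b^{-\ell/q}$ — its representation in the basis $B$ can be computed in polynomial time (here $b^{-k} \in \mathbb{Q}$ has polynomial bitsize because $\reward(s), p, q$ are given in unary, and $b^{-\ell/q}$ is a single basis vector up to a rational scalar, possibly after clearing the leading power via $b^{-\ell/q} = b \cdot b^{-(q-\ell)/q}$ when needed, which stays inside the span of $B$).

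Next I would set up the reduced system. After removing $S_0$ and $S_\infty$ (computable in quadratic time by the lemma on qualitative properties), the remaining states $S' = S \setminus (S_0 \cup S_\infty)$ give a square system $(I - P)v = c$, where $P$ is the $S' \times S'$ matrix with entries $b^{-\gamma \reward(s)} \mctransitions(s,s')$ and $c(s) = \sum_{s'' \in S_0} b^{-\gamma \reward(s)} \mctransitions(s,s'')$ collects the contributions of the absorbing-value-$1$ states. The key structural fact is that this system has a \emph{unique} solution on $S'$: by the reachability correspondence of \cref{stm:utility_reachability_equivalence}, from every state in $S'$ the game $\SG_R$ reaches $S_0 \cup \{\underline{s}\}$ with positive probability — states with positive reward transition to $\underline{s}$ directly, and states with zero reward in $S'$ are not in $S_0$, so they too can escape — hence the spectral radius of $P$ restricted to $S'$ is strictly less than $1$ and $I - P$ is invertible over $\mathbb{R}$, therefore over $\mathbb{Q}(b^{1/q})$. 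Thus $v = (I-P)^{-1} c$ restricted to $S'$ equals $\Utility^*$ restricted to $S'$, and the entries on $S_0 \cup S_\infty$ are the rational constants $1$ and $0$.

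It then remains to perform Gaussian elimination on $(I - P)v = c$ with entries represented as vectors in $\mathbb{Q}^q$ (coordinates in the basis $B$). Arithmetic in $\mathbb{Q}(b^{1/q})$ reduces to rational arithmetic: addition is componentwise, and multiplication is the usual polynomial multiplication in $b^{1/q}$ followed by reduction modulo the minimal polynomial $x^q - b$ (available by \cref{ass:irreducible}), which only ever multiplies a coordinate by the constant $b$ when folding back powers $b^{q}, b^{q+1}, \dots$; since division by a field element is multiplication by its inverse, and the inverse of an element of $\mathbb{Q}(b^{1/q})$ is again computable by solving a $q \times q$ rational linear system, every field operation runs in time polynomial in $q$ and in the bitsizes of the rational coordinates. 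The standard polynomial bound on intermediate bitsize in Gaussian elimination over a field (e.g.\ via the usual fraction-free / minor-based argument, applied coordinatewise over $\mathbb{Q}$ after fixing the basis) guarantees that no coordinate blows up superpolynomially, so the entire solve runs in polynomial time and outputs the representation of $\Utility^*(\initialstate)$ (and of every $\Utility^*(s)$) in the basis $B$. The main obstacle I anticipate is precisely this bitsize control: one must verify that performing elimination over $\mathbb{Q}(b^{1/q})$ — equivalently over $\mathbb{Q}$ in dimension $q \cdot |S'|$ with a structured matrix — keeps numerators and denominators polynomially bounded; this follows from Cramer's rule / Bareiss-type bounds once one observes that the relevant determinants are determinants of matrices with polynomially-bounded rational entries, but it is the step that needs care rather than the linear algebra itself. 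For part (b), the MDP/SG case follows by the usual reasoning: guessing MD strategies for both players reduces to the MC computation, and since MD strategies have polynomial size and the MC solve is in polynomial time, an explicit representation of $\Utility^*$ for SGs and MDPs can be produced in polynomial space.
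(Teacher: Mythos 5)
Your proposal matches the paper's proof essentially step for step: reduce to the linear system over $\mathbb{Q}(b^{1/q})$ after removing $S_0$ and $S_\infty$, compute basis-$B$ representations of the coefficients $c\cdot b^{-n/q}$ by splitting off the integer part of the exponent, and run Gaussian elimination with field arithmetic done coordinatewise over $\mathbb{Q}$, controlling intermediate bitsizes via the Edmonds minor-based argument (the step you rightly flag as the one needing care, and which the paper works out explicitly). The only nit is the identity you use to fold a negative fractional power into the basis — it should read $b^{-\ell/q} = b^{-1}\cdot b^{(q-\ell)/q}$, not $b\cdot b^{-(q-\ell)/q}$ — but this is a sign slip, not a gap.
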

First, we compute sets $S_0$ and $S_\infty$ of states from which reward $0$ or $\infty$, respectively, is received almost surely.
This can be done by the analysis of bottom strongly connected components and the computation of reachability probabilities in MCs in polynomial time, but of course it is also a special case of the analogous computation of such sets in games above.
The system of inequalities \cref{eq:utility_inequalities} simplifies to 
\begin{equation} \label{eq:utility_MC}
	\begin{gathered}
		v(s) = 0 \quad \text{for $s \in S_\infty$,} \qquad v(s) = 1 \quad \text{for $s \in S_0$,} \\
		v(s) = {\sum}_{t\in S} b^{-\gamma\cdot r(s)} \cdot \mctransitions(s,t) \cdot v(t) \quad \text{for $s \in S\setminus(S_0\cup S_{\infty})$.} 
	\end{gathered}
\end{equation}
This is a linear equation system $Ax=v$ where the entries of $A$ and $v$ are of the form $c \cdot b^{-\gamma w}$ for rational $c$ and natural numbers $w$.
Due to the pre-processing of $S_0$ and $S_{\infty}$, we know that $Ax=v$ has a unique solution and that this solution contains the negative exponential utility from each state.

As we assume that all rewards are natural numbers, all entries in $A$ and $v$ are elements of the number field $\mathbb{Q}(b^{1/q})$.
By \cref{ass:irreducible},
 $\mathbb{Q}(b^{1/q})$ is a $q$-dimensional $\mathbb{Q}$-vector space.
The tuple $B=(1,b^{1/q},b^{2/q},\dots, b^{(q-1)/q})$ forms a basis of $\mathbb{Q}(b^{1/q})$.
So, each element  $a \in \mathbb{Q}(b^{1/q})$ can be represented by a vector $z \in \mathbb{Q}^q$ with $[z]_B = \sum_{i=0}^{q-1} z_{i} \cdot b^{i/q}=a$.

Now, our goal is to compute an explicit representation of the value $v(\initialstate)$ in the unique solution of \cref{eq:utility_MC}, i.e.\ a vector $z \in \mathbb{Q}^q$ such that $[z]_B = v(\initialstate)$.
For this purpose, we solve \cref{eq:utility_MC} via Gaussian elimination by explicit computations in $\mathbb{Q}(b^{1/q})$ using representations in the basis $B$.
The following results lead to the main result of this section. 
\begin{restatable}{lemma}{representationcoefficients} \label{lem:writing}
	For a number of the form $c \cdot b^{-n/q}$ with $c \in \mathbb{Q}$ and $n \in \mathbb{N}$ we can compute its representation in basis $B$ in time polynomial in the length of the binary representation of $c$, and polynomial in the numerical values of $n$ and $q$.
\end{restatable}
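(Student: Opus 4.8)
The plan is to reduce this to one Euclidean division followed by elementary rational arithmetic, exploiting that raising $b$ to a rational power with denominator $q$ never leaves $\mathbb{Q}(b^{1/q})$ and in fact lands on a single basis vector up to a rational scalar. First I would perform the division $n = kq + \ell$ with $k \in \mathbb{N}$ and $0 \le \ell < q$; a single division suffices, and since $k \le n$ the numerical values of $k$ and $\ell$ are polynomial in $n$ and $q$. Then
\[
	c \cdot b^{-n/q} \;=\; \bigl(c \cdot b^{-k}\bigr) \cdot b^{-\ell/q},
\]
so it remains to represent $b^{-\ell/q}$ in $B = (1, b^{1/q}, \dots, b^{(q-1)/q})$ and multiply that representation coordinate-wise by the rational $c \cdot b^{-k}$.

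Next I would note that $b^{-\ell/q}$ is, up to the rational factor $b^{-1}$, one of the basis vectors itself: multiplying by $b^{q/q} = b$ yields $b^{(q-\ell)/q}$. If $\ell = 0$ then $b^{-\ell/q} = 1$, whose representation is the vector with $1$ in coordinate $0$ and $0$ elsewhere. If $1 \le \ell \le q-1$ then $1 \le q-\ell \le q-1$, so $b^{(q-\ell)/q}$ is precisely the basis element in coordinate $q-\ell$, whence $b^{-\ell/q} = b^{-1} \cdot b^{(q-\ell)/q}$ has as its representation the vector with $b^{-1}$ in coordinate $q-\ell$ and $0$ elsewhere. (Here I invoke \cref{ass:irreducible}, which guarantees that $B$ genuinely is a basis of the $q$-dimensional $\mathbb{Q}$-vector space $\mathbb{Q}(b^{1/q})$, so this is the unique such representation.) Scaling by $c \cdot b^{-k}$, the sought representation of $c \cdot b^{-n/q}$ is the vector with a single nonzero entry: $c \cdot b^{-k}$ in coordinate $0$ when $\ell = 0$, and $c \cdot b^{-(k+1)}$ in coordinate $q-\ell$ otherwise.

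Finally I would check the complexity. Writing $b = b_1/b_2$ in lowest terms, the rational $c \cdot b^{-(k+1)} = (c \cdot b_2^{k+1})/b_1^{k+1}$ (and similarly with exponent $k$) has numerator and denominator of bit length $\mathcal{O}\!\bigl((k+1)(\log b_1 + \log b_2)\bigr)$ plus the bit length of $c$, hence polynomial in the bit length of $c$ and in $k \le n$; it is obtained by repeated multiplication, so in polynomial time, and selecting the coordinate $q-\ell$ is trivial. The only point needing a moment's attention is keeping the exponent $k$ — and therefore the size of $b_1^{k+1}$ — from exploding, but this is immediate from $k = \lfloor n/q \rfloor \le n$, and $n$ is permitted to contribute polynomially to the running time. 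I do not expect a genuine obstacle: the lemma is essentially a bookkeeping observation closing the gap between ``$b^{-n/q} \in \mathbb{Q}(b^{1/q})$'' and ``its $B$-representation is cheaply computable''.
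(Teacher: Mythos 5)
Your proposal is correct and follows essentially the same route as the paper: a single Euclidean division writing $-n/q$ as an integer plus a fractional part with denominator $q$, after which $c\cdot b^{-n/q}$ is a rational scalar (computable by repeated multiplication, of bit size polynomial in $|c|$, $n$, and the size of $b$) times a single basis element of $B$. The only cosmetic difference is that the paper writes $-n/q=-k+\ell/q$ directly (so the basis coordinate is $\ell$ and the scalar is $c\cdot b^{-k}$), whereas you divide $n=kq+\ell$ and then shift by one power of $b$, landing on coordinate $q-\ell$ with scalar $c\cdot b^{-(k+1)}$ — these are the same decomposition.
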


\begin{proof}
	We can write $-n/q$ as $-k+\ell/q$ for natural numbers $k$ and $\ell$ with $\ell<q$.
	Then, $c \cdot b^{-n/q} = c \cdot b^{-k} \cdot b^{\ell/q}$ which has the representation $c \cdot b^{-k}\cdot e_\ell$ in $\mathbb{Q}^q$ where $e_\ell$ is the $\ell$-th standard basis vector.
	The coefficient $c \cdot b^{-k}$ can be computed in time polynomial in the representation of $c$ and in the numerical value of $k$ which is bounded by the numerical value of $n$.
\end{proof}

\begin{restatable}{lemma}{multiplication} \label{lem:multiplication}
	Given two representations $z, z' \in \mathbb{Q}^q$ in basis $B$ of numbers $y=[z]_B $ and $y' = [z']_B $ in $\mathbb{Q}(b^{1/q})$, we can compute the representation of $y \cdot y'$ in time polynomial in the size of the binary representations of $z$ and $z'$.
	Furthermore, we can compute the representation of $y^{-1}$ in time polynomial in the size of the binary representations of $z$.
\end{restatable}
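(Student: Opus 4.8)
The plan is to make the field $\mathbb{Q}(b^{1/q})$ fully explicit as the quotient ring $\mathbb{Q}[X]/(X^q-b)$, identifying $b^{1/q}$ with the residue class of $X$. This identification is legitimate precisely because $X^q-b$ is irreducible over $\mathbb{Q}$ by \cref{ass:irreducible}. Under it, a representation vector $z=(z_0,\dots,z_{q-1})\in\mathbb{Q}^q$ in the basis $B$ corresponds to the polynomial $P_z(X)=\sum_{i=0}^{q-1}z_iX^i$ of degree $<q$, and $[z]_B=P_z(b^{1/q})$. Every arithmetic operation in $\mathbb{Q}(b^{1/q})$ then becomes an operation on polynomials in $\mathbb{Q}[X]$ followed by reduction modulo $X^q-b$, the reduction amounting to the rewriting $X^k\mapsto b\cdot X^{k-q}$ for $k\ge q$.

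For the product I would multiply $P_z$ and $P_{z'}$ as ordinary polynomials (of degree $\le 2q-2$) and reduce, which yields the closed form, for $0\le k\le q-1$,
\[
  w_k \;=\; \sum_{\substack{0\le i,j\le q-1\\ i+j=k}} z_i z'_j \;+\; b\cdot\!\! \sum_{\substack{0\le i,j\le q-1\\ i+j=k+q}} z_i z'_j .
\]
This uses $O(q^2)$ rational multiplications and additions. Writing $z$, $z'$, and $b$ with common denominators, each numerator $z_iz'_j$ has bit length at most $\text{size}(z)+\text{size}(z')$, summing at most $q$ of them costs an additive $O(\log q)$, and multiplying by $b$ an additive $O(\text{size}(b))$, with the common denominator of $w$ bounded additively as well; hence $w$ has size polynomial in $\text{size}(z)$, $\text{size}(z')$, $\text{size}(b)$, $q$ and is computed in polynomial time.

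For the inverse of $y=P_z(b^{1/q})$ with $z\ne 0$ I would use that $\deg P_z<q=\deg(X^q-b)$ together with irreducibility of $X^q-b$, so $\gcd(P_z,X^q-b)=1$ in $\mathbb{Q}[X]$. The extended Euclidean algorithm on $P_z$ and $X^q-b$ returns $U,V\in\mathbb{Q}[X]$ with $UP_z+V(X^q-b)=1$; reducing $U$ modulo $X^q-b$ gives a polynomial of degree $<q$ whose coefficient vector represents $y^{-1}$, since evaluating at $b^{1/q}$ yields $U(b^{1/q})\cdot y=1$. Equivalently --- and this is the route that most transparently delivers the complexity bound --- I would assemble the $q\times q$ rational matrix $M_y$ of ``multiplication by $y$'' in the basis $B$ (its $j$-th column is the product formula above applied to $z$ and the $j$-th standard basis vector), observe that $M_y$ is invertible because $y\ne 0$ and $\mathbb{Q}(b^{1/q})$ is a field, and solve the system $M_yw=e_0$ over $\mathbb{Q}$, where $e_0$ is the representation of $1$.

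The point requiring care is the bit complexity, not the operation count. For the product it is routine, as sketched above. For the inverse, a naive polynomial Euclidean algorithm over $\mathbb{Q}$ can suffer exponential intermediate-coefficient blow-up, so one must either invoke the classical fact that extended polynomial gcd over $\mathbb{Q}$ is computable in polynomial time (e.g.\ via a subresultant or primitive remainder sequence), or, in the linear-system formulation, invoke the classical polynomial bound on the bit complexity of Gaussian elimination over $\mathbb{Q}$ (fraction-free Bareiss elimination). Either way, the entries of $M_y$ and $e_0$ have size polynomial in $\text{size}(z)$, $\text{size}(b)$, and $q$, hence so does the solution $w$, and the whole computation runs in polynomial time. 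Together with \cref{lem:writing}, this provides all the field operations needed to run the Gaussian elimination over $\mathbb{Q}(b^{1/q})$ in the proof of \cref{thm:algebraic_MC} in polynomial time.
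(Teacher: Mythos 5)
Your proof is correct and takes essentially the same route as the paper: your closed-form product is exactly the paper's bilinear sum $\sum_{\ell,h} z_\ell z'_h m_{\ell,h}$ with the reduction $X^k \mapsto b\cdot X^{k-q}$ made explicit, and your linear system $M_y w = e_0$ is precisely the paper's equation system for the inverse. Your extra care about coefficient blow-up in the rational linear solve (and the extended-Euclid alternative) goes slightly beyond what the paper spells out, but the argument is the same.
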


\begin{proof}
	Multiplication is a bi-linear map.
	Thus, in order to determine a representation of $[z]_B \cdot [z']_B$ for given $z, z' \in \mathbb{Q}^q$, it is sufficient to know the $q^2$-many representations $m_{\ell, h}$ of $[e_{\ell}]_B \cdot [e_h]_B$ for standard basis vectors $e_{\ell}$ and $e_h$ with $0 \leq \ell,h \leq q-1$.
	The representation of $[z]_B \cdot [z']_B $ can then be computed as the sum
	\begin{equation} \label{eqn:multilpication}
		{\sum}_{0\leq \ell,h\leq q-1} z_{\ell} \cdot z'_h\cdot m_{\ell,h}.
	\end{equation}
	where the vectors $m_{\ell, h}$ are the representations of $b^{(\ell-1)/q}\cdot b^{h/q}= b^{(\ell+h)/q}$.
	If $\ell+h<q$, the representation is $m_{\ell, h} = e_{\ell+h}$.
	If $q \leq \ell+h < 2q$, the value is $b \cdot b^{\ell+h-q}$ and its representation is $b \cdot e_{\ell+h-q}$.
	This proves the first claim.

	In order to compute the representation of the inverse $y^{-1}$, we take a vector of variables $x$.
	The equation
	\begin{equation*}
		{\sum}_{0\leq \ell,h \leq q-1} z_{\ell} \cdot x_h \cdot m_{\ell,h} = e_0
	\end{equation*}
	yields a rational equation system with $q$ variables which has a unique solution as the inverse is unique.
	Given $z$, the system can be constructed in polynomial time using the $q^2$-many vectors $m_{\ell,h}$.
	Consequently, the representation in basis $B$ of $y^{-1}$ can be computed in time polynomial in the binary representation of $z$, proving the second claim.
\end{proof}

\begin{restatable}{lemma}{gaussian}
	We can perform Gaussian elimination on $Ax=v$ to obtain a representation of $x$ in the basis $B$ in time polynomial in the encoding size of the given small algebraic instance of a MC.
\end{restatable}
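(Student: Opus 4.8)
The plan is to reduce the task of solving $Ax=v$ over the number field $\mathbb{Q}(b^{1/q})$ to solving an ordinary rational linear system whose dimension is $q\cdot\cardinality{\States}$ and whose coefficients all have polynomially bounded bitsize, and then to invoke the classical polynomial-time bounds for Gaussian elimination over $\mathbb{Q}$.

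First I would record that every entry of $A$ and of $v$ is of the form $c\cdot b^{-\gamma w}$ with $c\in\mathbb{Q}$ of polynomial bitsize and $w\in\mathbb{N}$ of polynomial numerical value (it is a product of a transition probability, given in binary, with a term $b^{-\gamma r(s)}$ where $r(s)$, $\gamma=p/q$ are given in unary and, by \cref{ass:irreducible}, $[\mathbb{Q}(b^{1/q}):\mathbb{Q}]=q$). By \cref{lem:writing} the representation in the basis $B$ of each such entry, i.e.\ a vector in $\mathbb{Q}^q$, can be computed in polynomial time and hence has polynomial bitsize.

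Next I would \emph{unroll} the system over $\mathbb{Q}$: write each unknown $x(s)\in\mathbb{Q}(b^{1/q})$ as its coordinate vector $(x_0(s),\dots,x_{q-1}(s))\in\mathbb{Q}^q$ with respect to $B$, and expand each equation $\sum_{t} A_{s,t}\,x(t)=v(s)$ in the basis $B$. Using bilinearity of multiplication together with the structure constants $m_{\ell,h}$ from the proof of \cref{lem:multiplication} (which themselves have polynomial bitsize), this expansion turns each such equation into $q$ rational linear equations in the $q\cdot\cardinality{\States}$ rational unknowns $x_i(s)$. Collecting all of them yields a rational linear system $\tilde{A}\tilde{x}=\tilde{v}$ of dimension $q\cdot\cardinality{\States}$, all of whose coefficients have polynomial bitsize and which can be assembled in polynomial time. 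Because the preprocessing of $S_0$ and $S_\infty$ guarantees that $Ax=v$ has a \emph{unique} solution over $\mathbb{Q}(b^{1/q})$, the underlying map is injective, and the $\mathbb{Q}$-linear map it induces on coordinates is exactly $\tilde{A}$; hence $\tilde{A}$ is nonsingular. I would then solve $\tilde{A}\tilde{x}=\tilde{v}$ by fraction-free (Bareiss-style) Gaussian elimination, or equivalently Gaussian elimination over $\mathbb{Q}$ with reduction of fractions to lowest terms after every pivot step, and read off the blocks of the solution $\tilde{x}$ as the basis-$B$ representations of all $x(s)$, in particular of $x(\initialstate)=\Utility^*(\initialstate)$.

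The main obstacle is controlling the bitsize of the intermediate quantities: a naive elimination — whether performed over $\mathbb{Q}$ or directly over $\mathbb{Q}(b^{1/q})$ via the arithmetic of \cref{lem:multiplication} — can in principle roughly double the bitsize at each of the $q\cdot\cardinality{\States}$ steps, producing an exponential blow-up. The standard remedy, which I would use here, is that in the Bareiss/Edmonds analysis every intermediate entry equals, up to a known common factor, a minor of the (cleared-denominator) augmented matrix, and minors have bitsize polynomial in the dimension and the coefficient bitsize by Hadamard's inequality; thus the elimination runs in time polynomial in $q\cdot\cardinality{\States}$ and the coefficient bitsize — hence polynomial in the encoding size of the small algebraic instance — and outputs a solution $\tilde{x}$ of polynomial bitsize. (One could instead argue the bound directly for elimination carried out over $\mathbb{Q}(b^{1/q})$ using \cref{lem:multiplication}, but this amounts to essentially the same bookkeeping, so passing through the unrolled rational system is the cleanest route.)
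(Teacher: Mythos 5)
Your proof is correct, but it takes a genuinely different route from the paper. The paper performs Gaussian elimination \emph{directly} on the $\cardinality{\States}\times\cardinality{\States}$ system over $\mathbb{Q}(b^{1/q})$, storing each entry as its coordinate vector in the basis $B$, and then controls the bitsize of intermediate field elements by hand: it invokes Edmonds' observation that intermediates are determinants of submatrices (sums of $n!$ products of $n$ input entries) and explicitly bounds the components of a product of $n$ representation vectors by $q^n m^n b^n$ over a common denominator $b_2^n d$ --- i.e., it redoes the size analysis inside the number field, and uses \cref{lem:multiplication} for the arithmetic. You instead take the restriction-of-scalars route: replace each field entry by its $q\times q$ regular-representation block (via the structure constants $m_{\ell,h}$), obtain a nonsingular rational system of dimension $q\cdot\cardinality{\States}$, and invoke the standard Bareiss/Hadamard bound off the shelf. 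Your version buys a cleaner appeal to known results on rational Gaussian elimination and avoids re-proving bitsize bounds for field arithmetic, at the cost of a $q$-fold dimension blow-up (harmless, since $q$ is given in unary); the paper's version keeps the system small and self-contained in $\mathbb{Q}(b^{1/q})$, which fits its subsequent use of \cref{lem:multiplication} for inversion. Your auxiliary claims all check out: $B$ is a genuine $\mathbb{Q}$-basis by \cref{ass:irreducible}, so equality of field elements is equality of coordinate vectors and the unrolled system is equivalent to the original, and nonsingularity of $A$ over $\mathbb{R}$ (from uniqueness of the solution) transfers to injectivity, hence nonsingularity, of the induced rational matrix $\tilde{A}$.
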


\begin{proof}
	We perform Gaussian elimination on the matrix $A$ and vector $v$ in a way that ensures that all intermediate numbers have a small representation.
	In $A$ and $v$, we write representations from $\mathbb{Q}^q$ at every entry.
	Writing down the system with representations can be done in polynomial time by \cref{lem:writing}.
	In \cite{edmonds1967systems}, it is shown that it is possible to perform Gaussian elimination in a way such that all numbers occurring during the computation are determinants of a submatrix of the original input.
	If the problem dimension is $n$, this means that all numbers occurring during the computation are sums of $n!$ products of at most $n$ numbers from the original input.
	In our case, $n$ is the number of states of the Markov chain $M$.

	Let $d$ be the least common multiple of all denominators of numbers in $A$ and $v$.
	The bit size of $d$ is linear in the bit sizes of the denominators.
	Generalizing \cref{eqn:multilpication} in the proof of \cref{lem:multiplication}, we obtain that the product of $n$ numbers given as representations $x^i\in \mathbb{Q}^q$ for $1\leq i \leq n$ is
	\begin{equation*}
		y = \sum_{\substack{0\leq \ell_i \leq q-1 \\ 1\leq i \leq n}} \prod_{i=1}^{n} (x^{i})_{\ell_i} \cdot b^{\lfloor \sum_{i=1}^{n} (x^{i})_{\ell_i}/q \rfloor} \cdot e_{(\sum_{i=1}^{n} (x^{i})_{\ell_i} \mathop{\mathsf{mod}} q)}.
	\end{equation*}
	If $m$ is the maximal absolute value of any entry in one of the $x^i \in \mathbb{Q}^q$ for $1 \leq i \leq n$, each component of this vector $y$ has absolute value less than $q^n \cdot m^n \cdot b^n$.
	Furthermore, if $b=b_1/b_2$, each component of $y$ is an integer multiple of $1/(b_2^n\cdot d)$.
	So, each component $y_i$ is the fraction of an integer less than $q^n \cdot m^n\cdot b^n \cdot b_2^n \cdot d$ and an integer less than $b_2^n\cdot d$.
	The bit size of these numbers is bounded by $n \cdot (\log_2(q)+\log_2(m)+\log_2(b)+\log_2(b_2)) + \log_2(d)$.
	Thus, the bit size of all components of $y$ is at most polynomial in the bit sizes of the entries of the vectors $x^i\in \mathbb{Q}^q$ for $1\leq i \leq n$.

	Together, any product of $n$ entries of $A$ and $v$ has a representation in basis $B$ of polynomial size.
	Furthermore, all numbers that can occur are integer multiples of $1/(b_2^n\cdot d)$.
	Consequently, if we want to add $n!$-many such numbers, we can rewrite all rational numbers to denominator $1/(b_2^n\cdot d)$ and afterwards add the integer enumerators component-wise.
	This increases the bit size by a factor of at most $\log_2(n!) < \log_2(n) \cdot n$.
	So, all intermediate numbers that occur when performing Gaussian elimination as in \cite{edmonds1967systems} have a representation whose bit size is bounded by a polynomial in the input size.
	As $Ax=v$ has a unique solution, the Gaussian elimination produces this solution in polynomially many steps and all necessary multiplications and divisions can be carried out in polynomial time by \cref{lem:multiplication} and the fact that all intermediate numbers occurring have a polynomially large representation.
\end{proof}
Put together, this finishes the proof of \cref{thm:algebraic_MC}.
So, on small algebraic instances of MCs, we can compute and explicit representation of the negative exponential utility in polynomial time.
This allows us to conclude that
	on a small algebraic instance of an SG,  an explicit representation of $\ERisk^*$ can be computed in polynomial space:
	In polynomial space, we can go through all MD-strategies $\sigma$ for the Maximizer.
	For each strategy $\stratmax$, we compute a representation of $\Utility$  for each Minimizer MD-strategy $\stratmin$ in the resulting MC and compare it to the least value we have seen so far that the Minimizer can enforce against $\stratmax$.
	To compare the explicit representations computed in the process, we can rely on the algorithms in \cite{AdlerB94,Beling01} to compare algebraic numbers.
	Once we have found the value the Maximizer can enforce with $\stratmax$, we consider the next strategy of the Maximizer and keep track of the best value found so far.
This concludes the proof of
\cref{thm:algebraic_optimal_value}.

\subsection{Integer Exponents}

To conclude, we consider the special case that $b$ is a rational number and $\gamma \cdot \reward(s)$ is an integer for all states $s$.
Then, the transition probabilities $b^{-\gamma \reward(s)}$ are rational.
In this case, we can directly compute $\SG_R$, requiring space linear in the numerical values $\gamma \cdot \reward(s)$.
Then, we can apply standard methods to decide whether the Maximizer can ensure a reachability probability of at most $b^{-\gamma t}$.
In particular, for a fixed upper bound on $\gamma \cdot \reward(s)$, this yields a polynomial procedure for MCs and decision processes.
Note that even if $\gamma t$ is not an integer, we could compute the optimal reachability precisely and then check whether this (rational) value is larger or smaller than the threshold by computing sufficiently many digits of $b^{-\gamma t}$.

\section{Approximation} \label{app:approximation}

More formally, let $\SG_R$ the reachability game from \cref{stm:utility_reachability_equivalence}.
We define a new game $\SG_\approx$, only multiplicatively changing the transition probabilities of $\SG_R$.
In particular, we have $\mdptransitions^\approx(s, a, t) = \mdptransitions^R(s, a, t) \cdot (1 + \delta_s) \in \Rationals$ and $\abs{\delta_s}$ is small.
By slightly modifying statements of \cite{chatterjee2012robustness}, we obtain a general bound on the relative difference between reachability values in $\SG_R$ and $\SG_{\approx}$.
In the following, we require some tools of \cite{chatterjee2012robustness}.
The exact definitions are rather technical, we refer to \cref{app:robustness_definitions} for a summary and \cite{chatterjee2012robustness} for a more complete picture.
In a nutshell, we use the notion of \emph{structurally equivalent games}.
Intuitively, this means that the states, actions, and supports of transitions are equivalent; in other words, the induced graphs are the same.
For such equivalent games $\SG_1$ and $\SG_2$ one can define the \emph{relative difference} $\mathrm{dist}_R(\SG_1, \SG_2)$, which refers to the largest quotient of the probabilities of two corresponding transitions minus 1.
In turn, this distance bounds the difference in reachability values.

Intuitively, for each state $s$ we want to find a factor $\delta_s$ such that (i)~$b^{-\gamma \reward(s)} \cdot (1 + \delta_s)$ is rational, (ii)~has a sufficiently small denominator to be computationally viable, and (iii)~not change the obtained value too much.

\begin{restatable}{lemma}{restatestructurally} \label{stm:game_rounding}
	Let $\SG_1$, $\SG_2$ two structurally equivalent games together with a reachability objective $T$.
	Set $0 \leq d = \mathrm{dist}_R(\SG_1, \SG_2)$.
	Then
	\begin{equation*}
		(1 + d)^{-2 \cardinality{S}} \leq \frac{\val(\SG_1, T)}{\val(\SG_2, T)} \leq (1 + d)^{2 \cardinality{\States}}.
	\end{equation*}
\end{restatable}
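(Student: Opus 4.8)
The plan is to reduce the claim to a comparison of reachability probabilities in two Markov chains with the same underlying graph, and then to bound those via the spanning-forest (matrix-tree) formula for reachability probabilities. Fix an arbitrary state $s$; all bounds below are uniform in $s$, and I write $\mathrm{Pr}$ for the probability measure on paths induced by a strategy profile.

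\emph{Reduction to Markov chains.} Recall that reachability games are determined and admit optimal MD strategies for both players. I would fix an MD strategy $\sigma$ optimal for the Maximizer in $\SG_1$ and an MD strategy $\tau$ optimal for the Minimizer in $\SG_2$; by structural equivalence both are valid strategies in both games. Then $\val(\SG_1, T)(s) = \min_{\tau'} \mathrm{Pr}^{\sigma, \tau'}_{\SG_1, s}[\reach T] \le \mathrm{Pr}^{\sigma, \tau}_{\SG_1, s}[\reach T]$, while $\mathrm{Pr}^{\sigma, \tau}_{\SG_2, s}[\reach T] \le \max_{\sigma'} \mathrm{Pr}^{\sigma', \tau}_{\SG_2, s}[\reach T] = \val(\SG_2, T)(s)$ since $\tau$ is Minimizer-optimal in $\SG_2$. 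The Markov chains $\SG_1^{(\sigma, \tau)}$ and $\SG_2^{(\sigma, \tau)}$ have the same graph, and each of their transition probabilities is a transition probability of $\SG_1$, resp.\ $\SG_2$, for the common state--action pair selected by the combined strategy, so their relative distance is at most $d$. Hence it suffices to prove the following Markov-chain claim: if $\MC_1, \MC_2$ have the same state set and graph and relative distance at most $d$, then $\mathrm{Pr}^{\MC_1}_s[\reach T] \le (1+d)^{2\cardinality{\States}} \cdot \mathrm{Pr}^{\MC_2}_s[\reach T]$. The lower bound of the lemma then follows by exchanging the roles of $\SG_1$ and $\SG_2$ (the relative distance is symmetric). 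If $\val(\SG_2, T)(s) = 0$, then $T$ is unreachable from $s$ in the common graph, hence $\val(\SG_1, T)(s) = 0$ as well and the statement is trivial.

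\emph{The Markov-chain bound.} I would assume w.l.o.g.\ that the states of $T$ are absorbing, which changes neither $\reach T$ nor structural equivalence. Let $S_0$ be the set of states from which $T$ is unreachable in the common graph; it equals $\{s : \mathrm{Pr}^{\MC_i}_s[\reach T] = 0\}$ for $i=1,2$, and put $U = \States \setminus (T \cup S_0)$. Every bottom SCC of the chain is either contained in $S_0$ (being closed and disjoint from $T$, it cannot reach $T$) or is an absorbing singleton in $T$; hence every state of $U$ is transient. By the matrix-tree theorem applied to $I - Q$, where $Q$ collects the transition probabilities inside $U$ (equivalently, the Markov chain tree theorem for absorbing chains), one obtains
\begin{equation*}
	\mathrm{Pr}^{\MC_i}_s[\reach T] \;=\; \frac{\sum_{F \in \mathcal{F}_s} \;\prod_{(x \to y) \in F} \delta_i(x, y)}{\sum_{F \in \mathcal{F}} \;\prod_{(x \to y) \in F} \delta_i(x, y)},
\end{equation*}
where $\mathcal{F}$ is the set of subgraphs $F$ of the common graph in which every state of $U$ has exactly one outgoing edge and which contain no directed cycle (so $F$ is a forest of in-trees whose roots are exactly the states in $T \cup S_0$), and $\mathcal{F}_s \subseteq \mathcal{F}$ collects those $F$ in which $s$ lies in a tree rooted in $T$. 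The crucial point is that every $F \in \mathcal{F}$ has exactly $\cardinality{U}$ edges, each carrying a genuine transition probability $\delta_i(x, y)$ with $x \in U$. Since $(1+d)^{-1} \le \delta_1(x, y)/\delta_2(x, y) \le 1+d$ for all such edges, each summand of the numerator (resp.\ denominator) for $\MC_1$ differs from the matching summand for $\MC_2$ by a factor in $[(1+d)^{-\cardinality{U}}, (1+d)^{\cardinality{U}}]$. Bounding the numerator from above and the denominator from below yields $\mathrm{Pr}^{\MC_1}_s[\reach T] \le (1+d)^{2\cardinality{U}} \cdot \mathrm{Pr}^{\MC_2}_s[\reach T] \le (1+d)^{2\cardinality{\States}} \cdot \mathrm{Pr}^{\MC_2}_s[\reach T]$.

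\emph{Main obstacle.} The reduction to Markov chains is routine bookkeeping with optimal MD strategies. The substantive step is the second one: a direct decomposition of $\mathrm{Pr}_s[\reach T]$ into a sum over finite paths reaching $T$ fails, because a path of length $n$ carries an error factor $(1+d)^{n}$ with $n$ unbounded, so no bound depending only on $\cardinality{\States}$ could come out of such an argument. The resolution — the one place where the structure of the problem is genuinely used — is to express $\mathrm{Pr}_s[\reach T]$ as a ratio of two polynomials in the transition probabilities with nonnegative coefficients and the \emph{same} total degree $\cardinality{U}$, which is precisely the content of the spanning-forest formula; this is in essence the route taken in \cite{chatterjee2012robustness}, whose quantitative bounds I would adapt. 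A technically equivalent alternative is to use Cramer's rule on the linear system $p = b + Qp$ together with the positivity of the relevant minors of the M-matrix $I - Q$.
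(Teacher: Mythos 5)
Your proof is correct, but it takes a genuinely different route from the paper's. The paper does not argue from scratch: it obtains the Markov-chain bound by adapting the mean-discounted-time machinery of \cite{chatterjee2012robustness} (the value of the multi-discounted objective is a ratio of polynomials of bounded degree in the transition probabilities; reachability/parity values are recovered as the limit of discount factors tending to $1$), and then transfers the bound to games by fixing optimal MD strategies exactly as you do. You replace the citation-plus-limit argument with a direct, self-contained derivation: the Markov chain forest theorem (equivalently, Cramer's rule on $p = b + Qp$ together with the all-minors matrix-tree expansion of $\det(I-Q)$) writes $\ProbabilityMC<\MC_i, s>[\reach T]$ as a ratio of two polynomials with nonnegative coefficients in which every monomial has total degree exactly $\cardinality{U}$, and the bound $(1+d)^{2\cardinality{U}} \leq (1+d)^{2\cardinality{\States}}$ falls out immediately. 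Both arguments hinge on the same key structural fact --- the value is a ratio of same-degree nonnegative polynomials in the transition probabilities, so a per-edge multiplicative perturbation cannot compound beyond $(1+d)^{2\cardinality{\States}}$ --- but yours makes that fact explicit and elementary (at the cost of having to justify the forest formula, which you do correctly, including the degenerate cases $s \in T$, $s \in S_0$ and the observation that each forest edge leaves a state of $U$), whereas the paper's version inherits the more general parity statement for free from the cited work. Your identification of why the naive path-decomposition fails (unbounded error accumulation along long paths) is exactly the right diagnosis, and your bookkeeping in the game-to-chain reduction --- Maximizer-optimal in $\SG_1$, Minimizer-optimal in $\SG_2$, symmetry for the lower bound --- is sound.
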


\begin{proof}
	We modify proofs of \cite{chatterjee2012robustness} as follows.

	First, consider \cite[Lem.~3]{chatterjee2012robustness}:
	During the proof, we get that
	\begin{equation*}
		(1 + d)^{- 2 \cardinality{S}} \leq \frac{\val(\MC_1, \mathsf{MDT}(\lambda, r))(s)}{\val(\MC_2, \mathsf{MDT}(\lambda, r))(s)} \leq (1 + d)^{2 \cardinality{S}},
	\end{equation*}
	where $\MC_1$ and $\MC_2$ are two structurally equivalent Markov chains, $d = \mathrm{dist}_R(\MC_1, \MC_2)$ their relative difference, and $\mathsf{MDT}$ is the \emph{multi-discounted objective function}, a technical device of \cite{chatterjee2012robustness}.
	See \cref{app:robustness_definitions} for further details.

	Continuing with \cite[Thm.~4]{chatterjee2012robustness}, we obtain through \cite[Thm.~2]{chatterjee2012robustness} that the same inequality also holds for the value of parity objectives and, as a special case, for reachability.
	By then applying the reasoning of \cite[Thm.~5]{chatterjee2012robustness}, i.e.\ considering the Markov chain obtained by fixing two optimal memoryless deterministic strategies, the inequality transfers to games.
\end{proof}

It remains to show that we can indeed obtain reasonably small rational transition probabilities through this rounding.
\begin{restatable}{lemma}{restaterounding} \label{stm:rounding}
	Fix a precision requirement $\varepsilon > 0$, let $\reward_{\min}$ and $\reward_{\max}$ equal the minimal and maximal occurring non-zero rewards, respectively, $N = \cardinality{\States}$ the number of states, and $p_{\min}$ the smallest occurring non-zero transition probability.

	Then, there exists a rounded game $\SG_{\approx}$ such that (i)~the reachability probability in $\SG_{\approx}$ relatively differs from $\SG_R$ by at most $b^{\gamma \varepsilon}$ and (ii)~all transition probabilities are rational quantities with a denominator of bit size
	\begin{gather*}
		- \log_2 p_{\min} - \min(\log_2 b^{-\gamma \reward_{\max}}, \log_2 (1 - b^{-\gamma \reward_{\min}})) \\
		- \log_2 \gamma - \log_2 \varepsilon + \log_2{N} - \log_2{\log b}.
	\end{gather*}
\end{restatable}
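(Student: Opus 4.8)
The plan is to build $\SG_\approx$ by rounding each (possibly irrational) transition probability of the reachability game $\SG_R$ from \cref{stm:utility_reachability_equivalence} to a nearby rational with a fixed common denominator $D$, chosen just small enough that $\SG_\approx$ remains structurally equivalent to $\SG_R$ while the relative difference $\mathrm{dist}_R(\SG_\approx,\SG_R)$ stays below a budget $\theta$; \cref{stm:game_rounding} then turns this into the required bound on reachability values. Concretely, I would fix $\theta \coloneqq b^{\gamma\varepsilon/(2N')} - 1 > 0$, where $N'$ denotes the number of states of $\SG_R$ (so $N' = \cardinality{\States}+1$), since then $\mathrm{dist}_R(\SG_\approx,\SG_R)\le\theta$ already gives, by \cref{stm:game_rounding} applied to the target set $S_0$,
\[
	b^{-\gamma\varepsilon} = (1+\theta)^{-2N'} \le \frac{\val(\SG_\approx, S_0)}{\val(\SG_R, S_0)} \le (1+\theta)^{2N'} = b^{\gamma\varepsilon},
\]
which is exactly claim~(i) because $\Utility^\ast = \val(\SG_R, S_0)$ by \cref{stm:utility_reachability_equivalence}. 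Moreover, from $b^x-1\ge x\log b$ for $x\ge 0$ we get $\theta\ge\tfrac{\gamma\varepsilon}{2N'}\log b$, hence $-\log_2\theta \le \log_2 N' - \log_2\gamma - \log_2\varepsilon - \log_2\log b + \BigO(1)$, which will contribute the matching terms of the claimed bit size.

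The first substantive step is a uniform lower bound on the nonzero transition probabilities of $\SG_R$: I claim every nonzero entry of $\sgtransitions_R$ is at least $q_{\min} \coloneqq p_{\min}\cdot\min(b^{-\gamma\reward_{\max}},\, 1 - b^{-\gamma\reward_{\min}})$. This is a short case distinction. A nonzero transition from a state $s$ under $a$ into a state $s'\in\States$ equals $b^{-\gamma\reward(s')}\cdot\sgtransitions(s,a,s')$, which is $\ge b^{-\gamma\reward_{\max}}\cdot p_{\min}$ if $\reward(s')>0$ and is just $\sgtransitions(s,a,s')\ge p_{\min}$ if $\reward(s')=0$; in either case it is $\ge q_{\min}$. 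A nonzero transition from $s$ into the sink $\underline{s}$ necessarily involves (through the definition of $\sgtransitions_R$) a state of nonzero reward, i.e.\ of reward $\ge\reward_{\min}$, so its probability is $\ge (1-b^{-\gamma\reward_{\min}})\cdot p_{\min}\ge q_{\min}$. States of reward $0$ have no transition to $\underline{s}$ at all, so these cases are exhaustive.

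Next I would set $D \coloneqq \lceil 1/(\theta q_{\min})\rceil$; assuming w.l.o.g.\ $\theta\le\tfrac12$ (which only costs an additive constant in the bit size), this gives $1/D < q_{\min}$ together with
\[
	\log_2 D \le -\log_2\theta - \log_2 q_{\min} + \BigO(1) = -\log_2\theta - \log_2 p_{\min} - \min\bigl(\log_2 b^{-\gamma\reward_{\max}},\ \log_2(1-b^{-\gamma\reward_{\min}})\bigr) + \BigO(1),
\]
and substituting the bound on $-\log_2\theta$ from above yields precisely the asserted bit size (up to an additive constant). For each state--action pair I then round the distribution $\sgtransitions_R(s,a,\cdot)$ to a distribution $\sgtransitions_\approx(s,a,\cdot)$ with the same support, all of whose values are integer multiples of $1/D$ and with $\abs{\sgtransitions_\approx(s,a,t)-\sgtransitions_R(s,a,t)}<1/D$ for every $t$: round each value down to a multiple of $1/D$; the resulting deficit is itself a multiple of $1/D$ strictly smaller than $\abs{\support(\sgtransitions_R(s,a,\cdot))}/D$, so adding $1/D$ back to that many entries with the largest fractional part restores a distribution. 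Because $1/D < q_{\min}$, no nonzero entry collapses to $0$ and no zero entry becomes positive, so $\SG_\approx$ (retaining the same $S_0$, $S_\infty$ and $\underline{s}$) is structurally equivalent to $\SG_R$, and its transition probabilities are rationals whose denominators divide $D$, i.e.\ of the claimed bit size. This remains effective when the original $\sgtransitions$ are irrational, since by the discussion in \cref{sec:approximation} every $\sgtransitions_R(s,a,t)$ can be expanded to arbitrary precision; when all data are rational the construction coincides with replacing each factor $b^{-\gamma\reward(s)}$ by a nearby rational $b^{-\gamma\reward(s)}(1+\delta_s)$, as in the preceding discussion.

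Finally, for any nonzero transition we have $\abs{\sgtransitions_\approx(s,a,t)/\sgtransitions_R(s,a,t)-1} = \abs{\sgtransitions_\approx(s,a,t)-\sgtransitions_R(s,a,t)}/\sgtransitions_R(s,a,t) < (1/D)/q_{\min}\le\theta$, so $\mathrm{dist}_R(\SG_\approx,\SG_R)\le\theta$, and \cref{stm:game_rounding} gives claim~(i) as above; together with the bit-size computation this completes the argument. I expect the main obstacle to be the bookkeeping behind $q_{\min}$ — getting the $\min(b^{-\gamma\reward_{\max}}, 1-b^{-\gamma\reward_{\min}})$ factor exactly right requires separating transitions into $\States$ from transitions into $\underline{s}$ and treating reward-$0$ states with care — together with realizing the rounding so that the support is exactly preserved, which is precisely why $D$ must satisfy $1/D<q_{\min}$ and hence why the $p_{\min}$ and $\min(\cdot)$ terms appear in the denominator.
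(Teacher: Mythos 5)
Your proposal is correct and follows the paper's overall strategy---construct $\SG_\approx$ by perturbing the transition probabilities of $\SG_R$ and invoke \cref{stm:game_rounding} with the budget $1+d\le b^{\gamma\varepsilon/(2\cardinality{\States})}$---but you realize the rounding step differently. The paper rounds each probability independently to a dyadic rational inside the multiplicative interval $[x\, b^{-z}, x\, b^{z}]$ and lower-bounds the interval length via $b^z-b^{-z}=2\sinh(z\log b)\ge 2z\log b$, performing the case split (successor in $\States$ vs.\ $\underline{s}$, reward zero vs.\ positive) inside the bit-size computation. You instead extract a single uniform lower bound $q_{\min}=p_{\min}\cdot\min(b^{-\gamma \reward_{\max}},\,1-b^{-\gamma \reward_{\min}})$ on all nonzero entries of $\sgtransitions_R$, round everything to a common denominator $D\approx 1/(\theta q_{\min})$, and explicitly redistribute the deficit so that each $\sgtransitions_\approx(s,a,\cdot)$ is again a probability distribution with unchanged support. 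That last point is a genuine plus: the paper's per-entry rounding never addresses why the rounded values still sum to one or why structural equivalence is preserved, whereas your repair step does both. One small quantitative slip: bounding the additive relative error by $\theta$ gives $\sgtransitions_\approx/\sgtransitions_R\le 1+\theta$ but only $\sgtransitions_R/\sgtransitions_\approx\le 1/(1-\theta)$, so $\mathrm{dist}_R(\SG_R,\SG_\approx)$ may be as large as $\theta/(1-\theta)$; to get claim~(i) with the exact constant $b^{\gamma\varepsilon}$ you should budget $\theta/(1+\theta)$ for the additive relative error (equivalently, double $D$). This, like your use of $\cardinality{\States}+1$ states in place of $\cardinality{\States}$, only shifts the denominator bit size by an additive constant, which the lemma statement (and the paper's own derivation, which silently drops a $+1$) already suppresses.
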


\begin{proof}
	We want to show that
	\begin{equation*}
		b^{-\gamma \varepsilon} \leq \val(\SG_R, T) / \val(\SG_{\approx}, T) \leq b^{\gamma \varepsilon}.
	\end{equation*}
	By \cref{stm:game_rounding}, this holds if the transition probabilities have a relative difference $d$ with
	\begin{equation*}
		b^{-\gamma \varepsilon} \leq (1 + d)^{-2 \cardinality{S}} \quad \text{ and } \quad (1 + d)^{2 \cardinality{S}} \leq b^{\gamma \varepsilon}.
	\end{equation*}
	Or, rearranged, $1 + d \leq b^{\gamma \varepsilon / (2 \cardinality{\States})}$ (and $0 \leq d$).
	To ease notation, we define $z = \gamma \varepsilon / (2 \cardinality{\States})$.

	Fix some state $s \in \States$.
	Recall that the reachability game $\SG_R$ features two kinds of transition probabilities in each state.
	First, transition probabilities of the original game multiplied by $b^{-\gamma \reward(s)}$, and second the transition to the introduced trap state, multiplying by $(1 - b^{-\gamma \reward(s)})$.

	Let us focus on the first kind and suppose the transition probability is given by $b^{-\gamma \reward(s)} \cdot p$ and assume that $\reward(s) \neq 0$.
	We want to show that there exists a rational number with a sufficiently small representation in the neighbourhood of this transition probability, i.e.\ in the interval $I = [b^{-\gamma \reward(s)} p \cdot b^{-z}, b^{-\gamma \reward(s)} p \cdot b^z]$.
	Such a number necessarily exists if this interval is sufficiently large.
	Thus, consider $b^z - b^{-z}$.
	By change of base, we obtain
	\begin{equation*}
		b^z - b^{-z} = e^{z \cdot \log b} - e^{-z \cdot \log b} = 2 \sinh(z \log b) \geq 2 z \log b,
	\end{equation*}
	using that $\sinh(x) \geq x$.
	Thus, the interval has at least size $\cardinality{I} \geq b^{-\gamma \reward(s)} p \cdot 2 z \log b$.
	Now, if $n$ satisfies $2^{-n} \leq \cardinality{I}$, then for some $m$ we have that $\frac{m}{2^n} \in I$.
	In other words, $n$ is an upper bound on the bit size of the smallest denominator that can be found in that interval.
	Taking $\log_2$ and inserting $z$, we arrive at
	\begin{align*}
		n \geq & - \log_2 b^{-\gamma \reward(s)} - \log_2 \gamma - \log_2 \varepsilon \\
		& + \log_2 p + \log_2 \cardinality{\States} + 1 - \log_2 \log b.
	\end{align*}
	Thus, $I$ contains a rational number with a denominator of that bit size.
	For the other type of transition, observe that now the interval of concern is centred around $1 - b^{-\gamma \reward}$ and we analogously obtain
	\begin{align*}
		n \geq & - \log_2 (1 - b^{-\gamma \reward(s)}) - \log_2 \gamma - \log_2 \varepsilon \\
		& + \log_2 p + \log_2 \cardinality{\States} + 1 - \log_2 \log b.
	\end{align*}

	Recall that we assumed $\reward(s) \neq 0$.
	If instead we have $\reward(s) = 0$ and $p$ is rational, we are done, since the transition probabilities from this state in $\SG_R$ are equal to those of the original game $\SG$, i.e.\ rational and of size $-\log_2 p$.
	In case $p \in \Reals \setminus \Rationals$, observe that we only need to round the first kind of transitions (since the second kind is zero).
	However, here we can apply the same reasoning and analogously obtain the first inequality, noting that $\log b^{-\gamma \reward(s)} = 0$ in this case.

	Taking the maximum of all inequalities over all states yields the result.
\end{proof}

For practical application, note that we do not (and cannot) compute  the concrete factor $b^{-\gamma \reward(s)}$ or the rounding $\delta_s$.
Instead, we determine as many digits of $b^{-\gamma \reward(s)}$ as required by the inequality in the proof.
This guarantees that the obtained value is within the required interval $I$.
For \enquote{reasonable} encodings of $b$, $\gamma$, and $\reward(s)$, these digits can be efficiently obtained through standard means.

\subsection{Definitions for Robustness} \label{app:robustness_definitions}

In this section, we briefly repeat the definitions of \cite{chatterjee2012robustness} which are necessary for the proof of \cref{stm:game_rounding}.
We note that the results of \cite{chatterjee2012robustness} apply to \emph{concurrent} stochastic games with \emph{parity} objective, which are a generalization of \emph{turn-based} stochastic games with \emph{reachability} objective.
We rephrase the definitions relative to our model.

Fix two stochastic games $\SG_1 = (\StatesMax^1, \StatesMin^1, \Actions_1, \sgtransitions_1)$ and $\SG_2 = (\StatesMax^2, \StatesMin^2, \Actions_2, \sgtransitions_2)$
We say that two games  and are \emph{structurally equivalent} if they induce the same graph, formally $\StatesMax^1 = \StatesMax^2$, $\StatesMin^1 = \StatesMin^2$, $\Actions_1 = \Actions_2$, and $\support(\sgtransitions_1(s, a)) = \support(\sgtransitions_2(s, a))$ for all $s \in \StatesMax^1 \union \StatesMin^1$ and $a \in \Actions_1(s)$.
Since the set of states and actions is equal, we omit the subscripts in the following.
Moreover, let $\States = \StatesMax^1 \union \StatesMin^2 = \StatesMax^2 \union \StatesMin^2$.

We define the \emph{distance} between two structurally equivalent games as
\begin{multline*}
	\mathrm{dist}_R(\SG_1, \SG_2) \coloneqq \max \bigg\lbrace \frac{\sgtransitions_1(s, a, t)}{\sgtransitions_2(s, a, t)}, \frac{\sgtransitions_2(s, a, t)}{\sgtransitions_1(s, a, t)} \mid \\ s \in \States, a \in \Actions(s), t \in \support(\sgtransitions_1(s, a, t)) \bigg\rbrace - 1.
\end{multline*}
Since the games are structurally equivalent, the fractions are always well defined.
Moreover, the value is always $\geq 0$ and $= 0$ if and only if the two games are equal.
Intuitively, the distance is the largest relative difference between two corresponding transition probabilities minus 1.

Next, the \emph{mean-discounted time} for a state $s$, discount vector $\lambda : \States \to \Reals$, and infinite path $\infinitepath$ refers to the discounted time the path is in that state, formally
\begin{equation*}
	\mathsf{MDT}(\lambda, s)(\infinitepath) \coloneqq \frac{\sum_{j=0}^\infty (\prod_{i=0}^j \lambda(\infinitepath_i)) \cdot \indicator{s}(\infinitepath_j)}{\sum_{j=0}^\infty (\prod_{i=0}^j \lambda(\infinitepath_i))}.
\end{equation*}

The starting point to proving the result in \cite{chatterjee2012robustness} then is to show that for Markov chains the expected mean-discounted time can be expressed by a rational function comprising polynomials of bounded degree.
In consequence, it is shown that for two Markov chains which are \enquote{close} w.r.t.\ $\mathrm{dist}_R$ the difference between the expected mean-discounted time can be bounded, too.
The result follows by the known result that reachability (and parity) can be obtained as limit of mean-discounted time by taking the values of $\lambda$ to $1$.

\end{document}